\begin{document}

\newtheorem{theorem}    {Theorem} [section]
\newtheorem{lemma}      [theorem] {Lemma}
\newtheorem{corollary}  [theorem] {Corollary}
\newtheorem{claim}      [theorem] {Claim}
\newtheorem{definition} [theorem] {Definition}
\newtheorem{invariant}  [theorem] {Invariant}
\newtheorem{assumption} [theorem] {Assumption}

\markboth{Guy Kortsarz and Zeev Nutov}{A $1.5$-approximation for augmenting edge-connectivity from $1$ to $2$}

\title{A simplified $1.5$-approximation algorithm for augmenting edge-connectivity of a graph from 1 to 2}
       
\author{
Guy Kortsarz\thanks{Partially supported by NSF grant number 1218620.} \\
\small Rutgers University, Camden \\
\small {\tt guyk@crab.rutgers.edu}
\and Zeev Nutov \\
\small The Open University of Israel \\
\small {\tt nutov@openu.ac.il}
       }

\begin{abstract} 
The {\sf Tree Augmentation Problem} ({\sf TAP}) is: given a connected graph $G=(V,{\cal E})$ and an edge
set $E$ on $V$ find a minimum size subset of edges 
$F \subseteq E$ such that $(V,{\cal E} \cup F)$ is $2$-edge-connected.
In the conference version \cite{EFKN-APPROX} was sketched a $1.5$-approximation algorithm for the problem.  
Since a full proof was very complex and long, the journal version was cut into two parts.
In the first part \cite{EFKN-TALG} was only proved ratio $1.8$. 
An attempt to simplify the second part produced an error in \cite{EKN-IPL}.
Here we give a correct, different, and self contained proof of the ratio $1.5$, 
that is also substantially simpler and shorter than the previous proofs.
\end{abstract}

\category{F.2.2}{Nonnumerical Algorithms and Problems}{Computations on discrete structures}
\category{G.2.2}{Discrete Mathematics}{Graph Algorithms}
            
\terms{Graph Connectivity, Approximation Algorithms} 

\begin{bottomstuff} 
\end{bottomstuff}

\maketitle

\section{Introduction} \label{s:intro}

\subsection{Problem definition and our result} \label{ss:problem-result}

A graph (possibly with parallel edges) is {\em $k$-edge-connected} if there are $k$ pairwise
edge-disjoint paths between every pair of its nodes. 
We study the following fundamental problem: 
given a connected undirected graph $G=(V,{\cal E})$ and a
set of additional edges (called ``links") $E$ on $V$ disjoint to ${\cal E}$,
find a minimum size edge set $F \subseteq E$ so that $G \cup F=(V,{\cal E} \cup F)$ 
is $2$-edge-connected.
The $2$-edge-connected components of the given graph $G$ form a tree.
It follows that by contracting these components, one may assume that
$G$ is a tree.  Hence, our problem is:

\begin{center}\fbox{\begin{minipage}{0.965\textwidth}
\noindent
{\sf Tree Augmentation Problem (\sf TAP)} \\
{\em Instance:}  \ A tree $T=(V,{\cal E})$ and a set of links $E$ on $V$ disjoint to ${\cal E}$. \\
{\em Objective:} Find a minimum size subset $F \subseteq E$ of links such that $T \cup F$ is
$2$-edge-connected.
\end{minipage}}\end{center}

{\sf TAP} is sometimes posed as the problem of covering a laminar family. 
Namely, given a laminar family ${\cal E}$ on a groundset $V$,
and an edge set $E$ on $V$, find a minimum size $F \subseteq E$
such that for every $S \in {\cal E}$, there is an edge in $F$ with one
endpoint in $S$ and the other in $V \setminus S$. 
{\sf TAP} is also equivalent to the problem of augmenting the edge-connectivity 
from $k$ to $k+1$ for any odd $k$;
this is since the family of minimum cuts of a $k$-connected graph with $k$ odd is laminar.

The first $2$-approximation for {\sf TAP} was given by \cite{FJ}, 
where it was also shown to be APX-hard. 
Achieving ratio below $2$ was posed by 
\cite{K} as one of the main open problems in connectivity augmentation.
\cite{N} presented a $(1.875+\varepsilon)$-approximation scheme for {\sf TAP}, but his 
analysis over 30 pages is long and complex.
In the conference version \cite{EFKN-APPROX} was sketched a $1.5$-approximation algorithm for the problem.  
Since a full proof was very complex and long (40 pages), the journal version was cut into two parts.
In the first part \cite{EFKN-TALG} was only proved ratio $1.8$. 
An attempt to simplify the second part produced an error in \cite{EKN-IPL}.
Here we give a correct, different, and self contained proof of the ratio $1.5$, 
that is also substantially simpler and shorter than the previous proofs.

\begin{theorem} \label{t:main}
{\sf TAP} admits a $1.5$-approximation algorithm.
\end{theorem}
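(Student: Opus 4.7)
The plan is to combine a structural lower bound on $\mathrm{OPT}$ with an algorithm that produces a cover whose size is controlled by a token/credit scheme. I would first normalize the instance by rooting $T$ at an arbitrary node $r$ and, using the classical ``shadow'' operation, replace every link $uv$ by a leaf-to-leaf link $\ell_u \ell_v$, where $\ell_u, \ell_v$ are descendant leaves of $u$ and $v$ in the rooted tree. Since any solution using the original link remains feasible when using its shadow (the shadow's path in $T$ contains the original path), this does not increase the optimum and we may assume every link is between two leaves; we may also contract internal degree-$2$ nodes, so that every internal node has degree $\geq 3$.

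For the lower bound, I would exploit that every link covers at most two tree edges incident to leaves, so that $\mathrm{OPT} \geq L/2$, where $L$ is the number of leaves. When this naive leaf bound is not tight enough, one strengthens it via a maximum matching of pairwise ``conflicting'' links or by dual LP constructions that exploit the laminar structure of minimum cuts in $T \cup F$.

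The algorithm I would design proceeds by iteratively identifying local structures in $T$ and contracting them. Each reduction picks a small set of links that covers a subtree using a budget of at most $1.5$ links per $2$ leaves removed. Examples of reductions are: a link between two sibling leaves (contract it and pay one link for two leaves); a \emph{closed subtree}, whose incident links stay inside and whose optimal internal cover is tractable (solve recursively and account credit carefully); and matching-like coverings of \emph{minimal closed subtrees}, whose local $\mathrm{OPT}$ matches the leaf-matching lower bound. Repeatedly applying these reductions in a well-chosen order, the algorithm maintains the invariant that the unresolved portion still admits a leaf-based lower bound, and the links selected so far are within a factor $1.5$ of their local $\mathrm{OPT}$.

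The main obstacle will be the analysis of the base case, when no local reduction applies. There I would need to show that the residual tree has such a restricted structure (a ``semi-closed'' or ``minimally closed'' configuration) that a direct matching-based argument forces $\mathrm{OPT}$ to be large relative to the number of leaves, giving the $1.5$ bound. A secondary difficulty is ensuring that the reductions interact cleanly, so that the $1.5$ bound holds globally and not just per reduction; typically this is handled by an amortized credit argument in which each removed leaf contributes a fixed budget that the algorithm draws from, with unused credit carried across the contraction boundary to the reduced instance.
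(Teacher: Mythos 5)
Your high-level plan (a matching-based lower bound, iterative contractions, a credit/token scheme, and a residual ``semi-closed'' base case) is broadly in the same spirit as the paper, but there are two concrete flaws that would derail the approach if executed as stated, and the plan omits precisely the technical content that makes the $1.5$ bound work.

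First, your normalization step is wrong. You propose replacing each link $uv$ by a leaf-to-leaf link $\ell_u\ell_v$ between descendant leaves and call the latter a shadow of the former. A shadow $u'v'$ of $uv$ satisfies $P(u'v')\subseteq P(uv)$, i.e.\ its path is \emph{shorter}, not longer; the leaf-to-leaf link $\ell_u\ell_v$ is the reverse of a shadow. More importantly, $\ell_u\ell_v$ need not belong to $E$ at all. Adding such links would enlarge $E$ and potentially decrease the optimum, so this is not an optimum-preserving reduction. Indeed, TAP with all links between leaves is a genuine \emph{special case} (cited in the paper as [MN], which gets $17/12$ only for that case); the general problem must handle links incident to internal nodes. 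The paper therefore uses shadow completion in the opposite direction (adding shorter links), which is harmless because a shadow can always be replaced by a longer existing link. Your degree-$2$ contraction has a similar issue: an internal degree-$2$ node $w$ with an incident link cannot be contracted away, since the two tree edges at $w$ are covered by disjoint sets of links once $w$-incident links are present.

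Second, the lower bound you invoke ($\mathrm{OPT}\geq L/2$, strengthened ``via a maximum matching of pairwise conflicting links or dual LP constructions'') is too weak and too vague. The paper's lower bound is a precise inequality
$\frac{3}{2}|F| \geq \frac{3}{2}|M|+|U|+\frac{1}{2}|N|+\frac{1}{2}\sum_{x\in X}d_J(x)$,
where $M$ is a maximum matching among leaf-to-leaf links that \emph{excludes} twin and locking links, $U$ is the set of unmatched leaves, $N$ is a certain subset of $M$ determined by the optimum, and the last term counts links in the optimum incident to internal non-stem nodes. The exclusion of twin and locking links from $M$, and the extra ticket terms $|N|$ and $\sum d_J(x)$, are exactly what is needed to charge the few bad local configurations (deficient trees) that a plain matching bound cannot pay for. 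Your plan does not anticipate any of this, and a straight $L/2$ or naive matching bound does not give $1.5$.

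Finally, you correctly identify the base case (when no reduction applies) as the main obstacle, but you offer no mechanism for resolving it. This is where the paper's actual content lives: a characterization that every deficient semi-closed tree is ``dangerous'' (has at most $4$ leaves and a very specific link pattern), and a matching-switching argument (Algorithm {\sc Find-Tree}) that, when every minimally semi-closed tree is dangerous, constructs a larger non-dangerous semi-closed tree together with an exact cover of the right size. Without this characterization and the ticket-claiming machinery (Claims~\ref{c:lock-ok}, \ref{c:twins}, \ref{c:X-ticket} and the $N$-ticket/$X$-ticket distinction), the amortized credit argument you sketch cannot be completed.
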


\subsection{Related work} \label{ss:related-work}

In the more general {\sf Weighted TAP} problem, the links in $E$ have weights and the
goal is to find a minimum weight augmenting edge set $F$ such that $T \cup F$ is $2$-edge connected.
There are several $2$-approxima\-ti\-on algorithms for this problem. 
The first algorithm, by \cite{FJ} was simplified later by \cite{KT}.  
These algorithms compute a minimum weight arborescence in a related directed graph.  
The primal-dual algorithm of \cite{GGPS} is another combinatorial $2$-approximation algorithm for the problem.
The iterative rounding algorithm of \cite{Jain} is an LP-based $2$-approximation algorithms. 
The approximation ratio of $2$ for all these algorithms is tight even for {\sf TAP}.
Breaking the ratio of $2$ for {\sf Weighted TAP} is a major open problem in approximation theory.
In \cite{CN} is given an algorithm that computes a 
$(1 + \ln 2)$-approximate solution for constant diameter trees.

{\sf TAP} is APX-hard even if the set $E$ of links forms a cycle on the leaves of $T$ \cite{CJR}.
A natural cut-LP for {\sf TAP} has integrality gap at least $3/2$ \cite{CKKK}.
\cite{MN} gave ratio $17/12$ for the special case of {\sf TAP} when every link connects two leaves, 
and obtained ratio $3/2$ for this version w.r.t. a leaf edge-cover LP.
\cite{KN-LP} showed that a slightly modified LP has integrality gap $1.75$ for {\sf TAP}.
Studying various LP-relaxations for {\sf TAP} is motivated by the hope that these may lead to 
breaking the ratio of $2$ for {\sf Weighted TAP}.

\subsection{Organization} \label{ss:organization}

In the next Section~\ref{s:preliminaries} we define some special types of trees and show some 
properties of these trees. These are needed to state our lower bound given in Section~\ref{s:lb}.
In Section~\ref{s:credit-algo} we explain how we use our lower bound and describe the algorithm,
relying on a certain lemma; this lemma is proved in Sections \ref{s:B'} and \ref{s:char}.

\section{Preliminaries: some small trees and shadows-minimal covers} \label{s:preliminaries}

Let $T=(V,{\cal E})$ be a tree. For $u,v \in V$ let $(u,v)\in {\cal E}$ denote the edge in $T$ and 
$uv$ the link in $E$ between $u$ and $v$. 
Let $P(uv)=P_T(uv)$ denote the path between $u$ and $v$ in $T$.
A link $uv$ {\em covers} all the edges along the path $P(uv)$.
We designate a node $r$ of $T$ as the {\em root}, and 
refer to the pair $T,r$ as a {\em rooted tree}
(we do not mention the root when it is clear from the context).
The choice of $r$ defines a partial order on $V$: $u$ is a {\em descendant} of $v$ 
and $v$ is an {\em ancestor} of $u$ if $v$ belongs to $P(ru)$;
if, in addition, $(u,v) \in T$, then $u$ is a {\em child} of
$v$, and $v$ is the {\em parent} of $u$.  
The {\em leaves} of $T$ are the nodes in $V \setminus \{r\}$ that have no descendants. 
We denote the leaf set of $T$ by $L(T)$, or simply by $L$, when the context is clear.  
The {\em rooted subtree} of $T$ induced by $v$ and its descendants is
denoted by $T_v$ ($v$ is the root of $T_v$). A subtree $T'$
of $T$ is called a {\em rooted subtree} of $T$ if $T'=T_v$ for some $v \in V$. 

\begin{definition} [(shadow)] \label{d:shadow}
A link $u'v'$ is a {\em shadow} of a link $uv$ if $P(u'v') \subseteq P(uv)$.
An inclusion minimal cover $F$ of $T$ is {\em shadows-minimal} if for
every link $uv \in F$ replacing $uv$ by any proper shadow of
$uv$ results in a set of links that does not cover $T$.
\end{definition}

Every {\sf TAP} instance can be rendered closed under shadows by adding all shadows of existing links.  
We refer to the addition of all shadows as {\em shadow completion}.  
Shadow completion does not affect the optimal solution size, since every shadow can be replaced by some
link covering all edges covered by the shadow. Thus we may assume the following.

\begin{assumption} \label{ass:sh-completion}
The set of links $E$ is closed under shadows, that is, if $uv \in E$
and $P(u'v') \subseteq P(uv)$ then $u'v' \in E$.
\end{assumption}

The {\em up-link} ${up}(a)$ of a node $a$ is the link $au$ such that $u$ is as close 
as possible to the root; such $u$ is called the {\em up-node} of $a$.
Under Assumption~\ref{ass:sh-completion}, $u$ is unique and is an ancestor of $a$.
For a rooted subtree $T'$ of $T$ and a node $a \in T'$, we say that 
$T'$ is {\em $a$-closed} if the up-node of $a$ belongs to $T'$ 
(namely, if no link incident to $a$ has its other endnode outside $T'$), 
and $T'$ is {\em $a$-open} otherwise.

\begin{definition} [(twin link, stem)] \label{d:twin}
A link between leaves $a,b$ of $T$ is a {\em twin link} if its contraction results in a new leaf; 
$a,b$ are called {\em twins} and their least common ancestor $s$ is called a {\em stem} (see Fig.~\ref{f:twin-lock}(a)).
\end{definition}

\begin{definition} [(locked leaf, locking link, locking tree)] \label{d:lock} 
A leaf $a$ of $T$ is {\em locked} by a link $bb'$, and $bb'$ is a {\em locking link} of $a$ 
if there exists a rooted proper subtree $T_v$ of $T$ such that 
$L(T_v)=\{a,b,b'\}$, 
$ab$ is a twin link, 
and $T_v$ is $a$-closed;
such minimal $T_v$ is called the {\em locking tree} of $a$ (see Fig.~\ref{f:twin-lock}(b)). 
\end{definition}

\begin{figure}
\centering 
\epsfbox{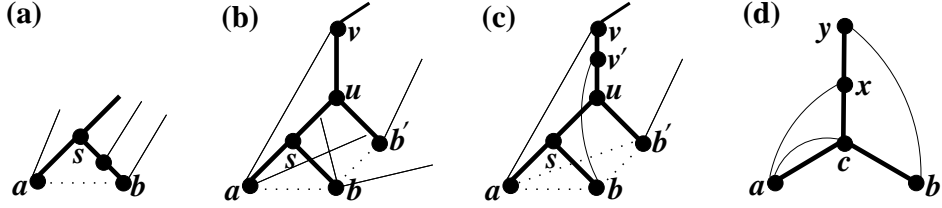}
\caption{(a,b,c) Illustration to Definitions \ref{d:twin} and \ref{d:lock}.
         The twin-link and the locking link are shown by dotted lines,
         some other possible links in $E$ are shown by solid thin lines.
         Some tree edges may be paths (the $uv$-path in (b,c) may have length zero). 
         (d) A link $by$ that overlaps a link $ax$.}
\label{f:twin-lock}
\end{figure}

Note that if $ab$ is a twin link and $a$ is locked by $bb'$,
then $b$ may be locked by $ab'$ (see Fig.~\ref{f:twin-lock}(c)). 
In this case the locking tree of one of $a,b$ contains the other, and whenever 
we will use the notation as above, we will assume w.l.o.g. that the locking tree
of $a$ contains the locking tree of $b$ (see trees $T_v$ and $T_{v'}$ in Fig.~\ref{f:twin-lock}(c)).

For $X,Y \subseteq V$ and a link set $F$, 
let $F(X,Y)=\{xy \in F: x \in X,y \in Y\}$ denote
the set of links in $F$ that have one endpoint in $X$ and the
other in $Y$; for $x \in V$ let $d_F(x)=|F(x,V)|$ be the degree of $x$ w.r.t. $F$.
For the rest of the paper we fix $F$ to be some optimal shadows-minimal cover of $T$ 
with maximal number of twin links.
In the rest of this section we establish some properties of $F$ that we use later.

A link $by$ {\em overlaps} a link $ax$ if the paths $P(ax),P(by)$ have an edge in common 
and if one of $a,x$ belongs to $P(by)$, see Fig.~\ref{f:twin-lock}(d).

\begin{claim} \label{c:overlap}
No link in $F$ overlaps the other.
\end{claim}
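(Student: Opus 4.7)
The plan is to argue by contradiction from the shadows-minimality of $F$. Suppose two distinct links $ax, by \in F$ overlap, so $P(ax)$ and $P(by)$ share an edge and, by the definition of overlap, we may assume $a \in P(by)$ (the case $x \in P(by)$ is symmetric). The strategy is to replace $ax$ by a proper shadow of itself (or to delete $ax$ outright) and verify that the resulting set is still a cover of $T$, contradicting shadows-minimality.

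Since $a$ lies on $P(by)$, removing $a$ from $T$ separates $b$ from $y$, and $P(ax)$ starts at $a$ and continues into whichever component of $T-a$ contains $x$. For $P(ax)$ to share an edge with $P(by)=P(ab)\cup P(ay)$, the node $x$ must lie on the $b$-side or on the $y$-side of $a$; by symmetry assume the $b$-side. Let $a''$ be the neighbor of $a$ on $P(ax)$. Because $x$ and $b$ are in the same component of $T-a$, the edge $(a,a'')$ is the first edge of both $P(ax)$ and $P(ab)$, so it belongs to $P(ab)\subseteq P(by)$ and is therefore covered by $by$.

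By Assumption~\ref{ass:sh-completion}, the link $a''x$ belongs to $E$ and is a proper shadow of $ax$, with $P(a''x)=P(ax)\setminus\{(a,a'')\}$. Replacing $ax$ by $a''x$ in $F$ still covers every edge of $T$, since the only edge of $P(ax)$ missing from $P(a''x)$ is $(a,a'')$, and that edge is covered by $by$. This contradicts the shadows-minimality of $F$. The one degenerate case to handle is $a''=x$, when there is no proper shadow of $ax$ that is still a link: then $P(ax)$ consists of the single edge $(a,x)=(a,a'')$, which is covered by $by$, so $F\setminus\{ax\}$ is already a cover, contradicting the inclusion-minimality implicit in shadows-minimality. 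I do not expect a genuine obstacle; the entire argument rests on the elementary tree-geometry observation that in a tree the first edge on the path from $a$ to $x$ coincides with the first edge on the path from $a$ to any other node in the same component of $T-a$.
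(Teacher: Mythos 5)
Your proof is correct and takes essentially the same approach as the paper: both argue by contradiction from shadows-minimality, replacing the overlapping link with a proper shadow (trimming from the endpoint that lies on $P(by)$) and observing that the trimmed-off edge(s) remain covered by $by$; the only cosmetic difference is that you trim off a single edge while the paper trims the full intersection, and you and the paper make opposite (symmetric) choices of which endpoint is assumed to lie on $P(by)$.
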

\begin{proof}
Suppose to the contrary that $by \in F$ overlaps $ax \in F$.
If both $a$ and $x$ belong to $P(by)$ then $F \setminus \{ax\}$ is a cover of $T$ of size smaller than $|F|$.
Suppose that exactly one of $a,x$ belongs to $P(by)$, say $x \in P(by)$ and $a \notin P(by)$, see Fig.~\ref{f:twin-lock}(d).
Let $c$ be a node in $P(ax) \cap P(by)$ distinct from $x$
(e.g., $c$ may be the first node of $P(by)$ when we traverse $P(ax)$ from $a$ to $x$).
Then $ac$ is a proper shadow of $ax$ and $ac$ covers the edges of $P(ax)$ that are not covered by $by$.
Replacing $ax$ by its proper shadow $ac$ results in a cover of $T$, contradicting shadows-minimality of $F$.
\end{proof}

\begin{claim} \label{c:shm}
$d_F(a)=1$ for every leaf $a \in L$ of $T$.
\end{claim}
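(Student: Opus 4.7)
The plan is to verify the lower bound $d_F(a) \geq 1$ first (which is automatic) and then get the upper bound $d_F(a) \leq 1$ as an immediate consequence of Claim~\ref{c:overlap}.

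For the lower bound, note that the tree edge $(a,p(a))$ joining the leaf $a$ to its parent $p(a)$ defines a cut whose side containing $a$ is the single node $a$ itself. A link covers this edge if and only if it has exactly one endpoint in $\{a\}$, i.e., if and only if it is incident to $a$. Since $F$ must cover every edge of $T$, at least one link of $F$ is incident to $a$, so $d_F(a) \geq 1$.

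For the upper bound I would argue by contradiction: assume there exist two distinct links $ax, ay \in F$. Because $a$ is a leaf, the only tree edge touching $a$ is $(a,p(a))$; this edge therefore lies on both $P(ax)$ and $P(ay)$, so the two paths share an edge. Moreover, $a$ is an endpoint of the link $ax$, and $a \in P(ay)$ trivially (as an endpoint of the path). These two properties together are exactly the definition of $ay$ overlapping $ax$, contradicting Claim~\ref{c:overlap}.

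In terms of difficulty, there is essentially no obstacle here: once the overlap tool is in hand, the content of the claim is just that any two distinct links sharing the leaf $a$ share the parent-edge of $a$, and the rest is matching the definition. The work has already been done by the shadows-minimality argument inside the proof of Claim~\ref{c:overlap}; neither Assumption~\ref{ass:sh-completion} nor the twin-link maximality of $F$ needs to be invoked again at this step.
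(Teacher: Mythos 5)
Your proof is correct and matches the paper's argument: the paper proves $d_F(a)\leq 1$ exactly as you do, by observing that two links at a common leaf $a$ would each contain the parent-edge of $a$ and hence overlap, contradicting Claim~\ref{c:overlap}. You additionally spell out the (implicit in the paper) lower bound $d_F(a)\geq 1$ from the fact that the leaf edge must be covered by a link incident to $a$, which is fine.
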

\begin{proof}
If $ax,by$ are two links incident to the same leaf $a$ of $T$, then one of them overlaps the other,
contradicting Claim~\ref{c:overlap}.  
\end{proof}

\begin{claim} \label{c:twins}
Let $a,b$ be twins with stem $s$, let $T_s=P(sa) \cup P(sb)$ be the subtree of $T$ rooted at $s$,
and let $F'$ be the set of links in $F$ with at least one endnode in $T_s$.
Then either $F'=\{ab,sz\}$ for some $z \notin T_s$, 
or $F'=\{ax,by\}$ for some $x,y \notin T_s$.
\end{claim}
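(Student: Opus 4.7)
The plan is to split on whether $ab \in F$. By Claim~\ref{c:shm}, $d_F(a) = d_F(b) = 1$; denote the unique links at $a$ and $b$ by $\ell_a = ax$ and $\ell_b = by$. In Case~1 ($ab \in F$), uniqueness forces $\ell_a = \ell_b = ab$, so $ab$ covers all of $T_s$. By Claim~\ref{c:overlap}, any other link $vw \in F'$ with $v \in T_s \setminus \{s\}$ overlaps $ab$ (since $v \in P(ab)$ and $P(vw) \cap P(ab)$ contains the edge from $v$ toward $s$), so every other link in $F'$ has the form $sz$ with $z \notin T_s$. Two such links $sz_1, sz_2$ would overlap (they share the edge above $s$, and $s$ is an endpoint of both paths), so exactly one such $sz$ appears (covering the edge above $s$), giving $F' = \{ab, sz\}$.

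In Case~2 ($ab \notin F$), the main work is showing $x, y \notin T_s$; by symmetry I focus on $x$. Suppose $x \in T_s \setminus \{a, b\}$. If also $y \in T_s$, then $F \setminus \{ax, by\} \cup \{ab\}$ is a cover of size $|F|-1$, contradicting optimality; so $y \notin T_s$. The key construction is $\tilde F := F \setminus \{ax, by\} \cup \{ab, sy\}$, where $sy \in E$ by Assumption~\ref{ass:sh-completion} applied to $by$. Then $\tilde F$ covers $T$ (inside by $ab$; the outside edges formerly covered only by $by$ are now covered by $sy$) and has size $|F|$, so it is optimal. It has one more twin link than $F$: $ax$ and $by$ are not twin links ($x$ is internal, and $by$ cannot be a twin because any stem for $b, y$ would lie above $s$, giving $a \in T_s$ as an extra leaf), while $ab$ is a twin link and $sy$ is not. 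Shadow-minimizing $\tilde F$ then yields the desired contradiction to $F$'s maximality of twin links.

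The main obstacle is verifying that shadow-minimization preserves the gain. First, $ab$ is never shrinkable throughout the process: the tree edges incident to $a$ and to $b$ are uniquely covered by $ab$ in $\tilde F$ (no other link at $a$ or $b$ remains), and these lie at opposite ends of $P(ab) = T_s$, forcing the smallest shadow containing $ab$'s uniquely-covered edges to be $P(ab)$ itself; shrinking other links only enlarges this set, so $ab$ stays. Second, no twin link $cd \in F \setminus \{ax, by\}$ is shrunk: $c, d$ must lie outside $T_s$ (else Claim~\ref{c:shm} is violated at $a$ or $b$), and their stem $s^*$ must lie outside $T_s$ and not be an ancestor of $s$ (else $T_{s^*}$ contains $a$ or $b$ as an extra leaf), so $T_{s^*} \cap T_s = \emptyset$. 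A case analysis using Claim~\ref{c:overlap} applied to $by$ and $cd$ in $F$ shows $P(sy) \cap P(cd) = \emptyset$: if $y \notin T_{s^*}$ then $P(s, y)$ does not enter $T_{s^*}$; if $y \in T_{s^*}$ then overlap of $by$ with $cd$ forces $y = s^*$, and $P(s, s^*)$ is edge-disjoint from $P(cd)$. Hence $cd$'s unique coverage in $\tilde F$ equals that in $F$, still spans $P(cd)$, so $cd$ is not shrunk.

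Given $x, y \notin T_s$, the absence of other links in $F'$ follows from Claim~\ref{c:overlap}: for any $pq \in F' \setminus \{ax, by\}$ with $p \in T_s$, Claim~\ref{c:shm} gives $p \ne a, b$, and using $x, y \notin T_s$ we have $P(sa) \subseteq P(ax)$ and $P(sb) \subseteq P(by)$; hence $p \in P(ax) \cup P(by)$ and a common edge on the inside portion of $P(pq)$ is immediate, producing an overlap contradiction in every subcase (according to whether $p$ lies on $P(sa)$, on $P(sb)$, or equals $s$, and whether $q$ is inside or outside $T_s$).
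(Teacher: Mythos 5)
Your proof is correct and follows the same route as the paper's: split on whether $ab \in F$; in Case~1 use Claim~\ref{c:overlap} to reduce $F'$ to $\{ab,sz\}$; in Case~2 first exclude $x,y$ both in $T_s$ by optimality, then exclude exactly one of them being in $T_s$ via a swap that appeals to the twin-link maximality of $F$, and finally eliminate any extra link in $F'$ by an overlap contradiction. The one place where you do noticeably more work is actually warranted: the paper asserts outright that the swapped set (their $(F\setminus\{ax,by\})\cup\{ab,sx\}$, your symmetric $(F\setminus\{ax,by\})\cup\{ab,sy\}$) is already shadows-minimal, which need not hold verbatim --- the new link $sy$ can be shrinkable from the $s$ end if some other link of $F$ also covers the edge above $s$ --- whereas you correctly argue instead that after shadow-minimizing, $ab$ and every pre-existing twin link of $F$ survive, so the twin-link count still strictly increases, which is all the contradiction requires. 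Your argument for ``no twin link $cd$ is shrunk'' can be simplified: since $F$ is shadows-minimal, the two extremal edges of $P(cd)$ are covered in $F$ \emph{only} by $cd$; in $\tilde F$ the only new coverage comes from $ab$ (whose path lies in $T_s$, disjoint from those edges since $c,d\notin T_s$) and from $sy$ (a shadow of $by$, which already failed to cover them), so those edges remain uniquely covered by $cd$, and this unshrinkability is preserved under any further shrinking because shrinking other links only removes coverage. That observation avoids the case analysis on $T_{s^*}$ and $P(sy)\cap P(cd)$, though what you wrote is also sound.
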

\begin{proof}
Suppose that $ab \in F$. 
Every link that covers the parent edge of $s$ belongs to $F'$, thus $|F'| \geq 2$.
Consider any link $s'z \in F'$ with $s' \in T_s$.
Then $ab$ overlaps $s'z$, unless $s'=s$.
There cannot be another link $sz' \in F'$, since then one of $sz,s'z$ overlaps the other.
Consequently, $F$ contains a unique link $sz$, as claimed.

Suppose now that $ab \notin F$.
Let $ax$ and $by$ be the (unique, by Claim~\ref{c:shm}) link incident to $a$ and $b$, respectively.
One of $x,y$ is not in $T_s$, say $x \notin T_s$; 
otherwise $(F \setminus\{ax,by\}) \cup \{ab\}$ is a cover of $T$ of size smaller than $|F|$. 
We cannot have $y \in T_s$ since then $(F \setminus \{ax,by\}) \cup \{ab,sx\}$
is a shadows-minimal cover of $T$ of size $|F|$ with more twin links than $F$,
contradicting our choice of $F$.
There cannot be another link in $F'$ since then it will be overlapped by one of $ax,by$. 
\end{proof}

\begin{claim} \label{c:lock-ok}
Consider a locked leaf $a$ and its locking tree $T_v$ as in Definition~\ref{d:lock}, and suppose that 
$ax \in F$ for some $x \notin \{b,b'\}$. 
Then $x$ is a proper ancestor of the least common ancestor $u$ of $a,b,b'$ and $bb' \in F$.
Furthermore, there is a link $xz \in F$ such that $z \notin T_v$ and $z$ is not a locked leaf.
\end{claim}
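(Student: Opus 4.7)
I plan to combine Claims~\ref{c:overlap},~\ref{c:shm},~\ref{c:twins} with swap arguments exploiting the max-twin-link choice of $F$. By Claim~\ref{c:shm}, $ax$ is the unique $F$-link at $a$, so $x\neq b$ forces $ab\notin F$; Claim~\ref{c:twins} applied to the twins $a,b$ then yields $F'=\{ax,by\}$ with $x,y\notin T_s$, where $by$ and $b'z'$ denote the unique $F$-links at $b,b'$. A useful preliminary observation is that $x$ and $y$ cannot both be spine ancestors of $s$, since otherwise $(F\setminus\{ax,by\})\cup\{ab,sw\}$ (with $w$ the higher of $x,y$ and $sw\in E$ as a shadow of $ax$ or $by$) produces an optimal cover with the extra twin $ab$, contradicting our choice of $F$.

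To show $bb'\in F$, I argue $y=b'$. Assuming otherwise, I case-split on $y$'s location (on the spine above $u$, in $T_{s'}$ but $\ne b'$, or outside $T_v$) and similarly on $z'$. In each case I either construct a same-size swap $(F\setminus\{ax,by,b'z'\})\cup\{ab,L_1,L_2\}$ whose auxiliary links $L_1,L_2$ are appropriate shadows (for instance $sb'$, $sy$, or $uy$), thereby gaining a twin link and contradicting max-twin-link maximality, or else produce two $F$-links that overlap --- for instance, $ax$ with $x\in T_{s'}$ overlaps $b'z'$ when $z'$ is an ancestor of $u$ --- contradicting Claim~\ref{c:overlap}.

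With $bb'\in F$ in hand, I show $x$ is a proper ancestor of $u$. By shadows-minimality, the edge of $P(ax)$ incident to $x$ must be uniquely covered by $ax$ in $F$. If $x$ sat at or below $u$ on the $a$-spine (but above $s$), this edge would lie in $P(su)\subseteq P(bb')$; if $x$ were in $T_{s'}$ the edge would lie in $P(ub')\subseteq P(bb')$; a separate Claim~\ref{c:overlap}-style argument rules out $x$ lying outside $T_v$ as a non-ancestor of $a$. Hence $x$ is an ancestor of $a$ strictly above $u$, and $a$-closedness of $T_v$ (together with minimality of the locking tree, which forces the up-node $u_a$ to equal $v$) places $x$ on $P(uv)$.

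For the ``furthermore'' part, the edges of $P(xv)$ and $(v,p(v))$ must be covered, and a Claim~\ref{c:overlap}-style analysis shows that the only $F$-link with an endpoint in $T_x$ that covers such an edge without overlapping $ax$ or $bb'$ is of the form $xz$; so some $xz\in F$. If $z$ lay in $T_v\setminus T_x$ (necessarily on $P(vx)$), then $(F\setminus\{ax,xz\})\cup\{az\}$ (with $az\in E$ as a shadow of $av$) would be a strictly smaller cover, contradicting optimality; hence $z\notin T_v$. Finally, if $z$ were itself a locked leaf with locking tree $T_w$, then $x\in T_v\cap T_w$ would force $T_v$ and $T_w$ to be nested, and in either direction the leaf-set conditions force $z\in T_v$, contradicting $z\notin T_v$. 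The main obstacle is the case analysis in the $bb'\in F$ step, since each swap must simultaneously maintain coverage, preserve cardinality, and produce a twin, while using only shadow links certified in $E$ via Assumption~\ref{ass:sh-completion}.
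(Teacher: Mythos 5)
Your proof follows essentially the same route as the paper's: you start from the same facts (Claims~\ref{c:shm} and \ref{c:twins} give the unique $F$-links $ax,by$ with $x,y\notin T_s$), exclude intermediate positions of $x,y$ via overlaps and max-twin-link swaps, then deduce $bb'\in F$ and that $x$ is a proper ancestor of $u$, then locate a link $xz$ covering $(x,p(x))$ and rule out $z$ being a locked leaf. You order some steps differently — you derive $bb'\in F$ first and then get $x$ proper ancestor of $u$ via shadows-minimality of $F$, while the paper first argues $x\notin T_u$ and then invokes $a$-closedness of $T_v$ — and you phrase the final step as nesting of locking trees rather than the paper's direct ``a $z$-closed subtree contains $T_x$ and hence has $\geq 4$ leaves'' count, but these are equivalent. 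One substantive point in your favor: you explicitly argue $z\notin T_v$ via the shadow replacement $(F\setminus\{ax,xz\})\cup\{az\}$, which the claim statement requires but which the paper's own proof appears to leave unaddressed (the paper only establishes $z\notin T_x$, a weaker fact). The chief weakness of your write-up is that the case analysis for $y=b'$ is only sketched — several swap/overlap subcases are gestured at rather than carried out, and they are indeed the delicate part of the proof, as the paper's own detailed treatment of the $x\in P(b'u)\setminus\{b'\}$ subcase shows — so to turn this into a complete proof you would need to spell those out, though the intended technique matches the paper's.
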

\begin{proof}
Let $by$ and $b'y'$ be the (unique, by Claim~\ref{c:shm}) links in $F$ incident to $b$ and to $b'$, respectively.
We start by refuting the case that one of $x,y$ belongs to $T_u \setminus \{b'\}$. 
By Claim~\ref{c:twins}, $x,y \notin T_s$.
If one of $x,y$ belongs to $P(su) \setminus \{s\}$ (see Fig.~\ref{f:lock-ok}(a)),
then one of $ax,by$ overlaps the other, contradicting Claim~\ref{c:overlap}.
Suppose that $x$ belongs to $P(b'u) \setminus \{b'\}$, see Fig.~\ref{f:lock-ok}(b);
refuting the case $y \in P(b'u) \setminus \{b'\}$ is similar.
Note that any link with exactly one endnode in $P(b'x)$ overlaps $ax$.
Thus $b'y'$ has both endnodes in $P(b'x)$.
The link $e$ that covers the edge between $x$ and its child also has both endnodes in $P(b'x)$.
Thus we must have $y'=x$, by the optimality of $F$; 
otherwise $F \setminus \{b'y',e\} \cup \{b'x\}$ is a cover of $T$ of size $|F|-1$
($b'x$ is a shadow of an existing link $bb'$).
But then $F'=(F \setminus\{ax,b'x\}) \cup \{ab,b's\}$ (see Fig.~\ref{f:lock-ok}(c))
is a cover of $T$ of size $|F|$ with more twin links than $F$ 
($b's$ is a shadow of an existing link $bb'$). 
Furthermore, $F'$ can be modified to be shadows-minimal and/or smaller 
by replacing every link $tw$ with $t \in T_u$ and $w \notin T_u$ (if any) 
by the link $uw$ and removing redundant links.
This contradicts our choice of $F$.

Since $x \notin T_u$, and since $T_v$ is $a$-closed, $x$ is a proper ancestor of $u$.
We must have that $y=b'$, as otherwise one of $ax,by$ overlaps the other; see Fig.~\ref{f:lock-ok}(d).
Consider a link $x'z \in F$ that covers the edge between $x$ and its parent,
where $x' \in T_x$ and $z \notin T_x$.
Note that $x' \notin \{a,b,b'\}$, by Claim~\ref{c:shm}.
We must have $x'=x$, as otherwise, $x'z$ overlaps $ax$ or $bb'$.
Any rooted subtree that is $z$-closed contains $T_x$, and thus has at least $4$ leaves if $z$ is a leaf.
Hence $z$ cannot be a locked leaf.
\end{proof}

\begin{figure}
\centering 
\epsfbox{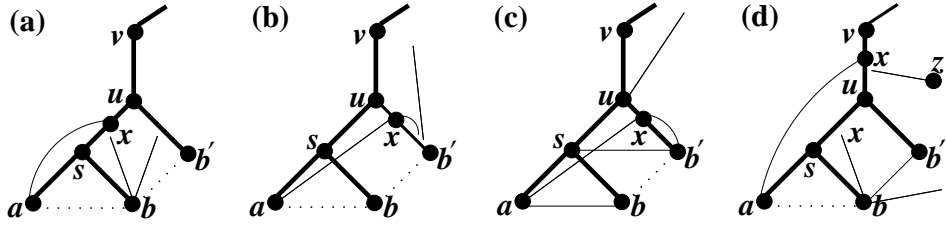}
\caption{Illustration to the proof of Claim~\ref{c:lock-ok}.
         Links in $F$ are shown by solid thin lines, twin and locking links (if not in $F$) 
         are shown by dotted lines. Some tree edges may be paths.}
\label{f:lock-ok}
\end{figure}

\section{The lower bound} \label{s:lb}

Let $S$ denote the set of stems of $T$ and let $X=V\setminus(L \cup S)$.

\begin{lemma} \label{l:lb}
Let $W$ be the set of twin and locking links,
$M$ a maximum matching in $E(L,L) \setminus W$, and $U$ the set of leaves unmatched by $M$.
Let $M_F=F(L,L) \setminus W$ and let
$N=\{bb' \in M: \mbox{each of } b,b' \mbox{ is unmatched by } M_F \}$.
Let $J$ be the set of links in $F$ not incident to a locked leaf. Then:
\begin{equation} \label{e:lb}
\frac{3}{2}|F| \geq \frac{3}{2}|M|+|U|+\frac{1}{2}|N|+\frac{1}{2}\sum_{x \in X} d_J(x) \ .
\end{equation}
\end{lemma}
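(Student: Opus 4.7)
The plan is to prove the inequality by a token-charging argument: each $f \in F$ is allocated $3/2$ tokens (total budget $\tfrac{3}{2}|F|$), distributed to cover the right-hand side. I regroup the first three RHS terms as $\sum_{a\in L} c(a)$, with $c(a)=3/4$ if $a$ is matched in $M$ by an edge outside $N$, and $c(a)=1$ otherwise (either $a\in U$ or $a$ is an endpoint of an $N$-edge); the last term assigns $1/2$ per endpoint in $X$ of a link in $J$. By Claim~\ref{c:shm}, each leaf $a$ has a unique $F$-link $f_a$, to which I charge $c(a)$; each $J$-link pays for its own $X$-endpoints directly. The lemma reduces to verifying that no $f\in F$ is charged more than $3/2$ tokens.

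The cases $f\in F_{L\bar L}$ and $f\in F_{\bar L\bar L}$ are immediate: the leaf charge is at most $1$, the $J$-charge is at most $1$, summing to at most $3/2$. The interesting case is $f=ab\in F_{LL}$. If $ab\in M_F$, I examine the component $C$ of $M\triangle M_F$ containing $ab$: since $M$ is maximum, $C$ has at most one endpoint in $U$ (otherwise $C$ is an $M$-augmenting path). Every interior vertex of $C$ is matched by both $M$ and $M_F$, so its $c$-value is $3/4$. Hence each $M_F$-edge of $C$ has $c$-sum exactly $3/2$, except possibly the one incident to a $U$-endpoint of $C$, whose $c$-sum is $7/4$ --- a deficit of $1/4$. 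This deficit is absorbed by $f_w=wy$ at the opposite endpoint $w$ of $C$: $w$ is matched by $M$ but not by $M_F$, so $y\notin L$, and $f_w$'s spare is $3/2-c(w)-(\text{$J$-charge})\ge 1/4$.

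If instead $ab\in F_W$ (twin or locking link), the deficit $c(a)+c(b)-3/2$ is at most $1/2$. For a twin link I use the helper link $sz\in F$ furnished by Claim~\ref{c:twins}: since $s\in S$ is not a leaf, $sz$'s spare is at least $1/2$. For a locking link $bb'$ locking the leaf $a$, I use $f_a=ax\in F$ from Claim~\ref{c:lock-ok}: since $a$ is locked, $f_a\notin J$, so its spare is $3/2-c(a)\ge 1/2$. The main obstacle is ensuring helpers are not double-booked. Distinct twin links have distinct stems (each stem having exactly two leaves below it), so their helpers $sz$ are distinct; distinct locking links have distinct locked leaves (by minimality of the locking tree), so their helpers $f_a$ are distinct. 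A twin helper $sz$ cannot coincide with a locking helper $ax$: equating them forces $s=x$ and $z=a$, which places $a$ outside $T_s$ by Claim~\ref{c:twins} but inside $T_u\subseteq T_s$ by Claim~\ref{c:lock-ok} (since $x=s$ is a proper ancestor of $u=\mathrm{LCA}(a,b,b')$) --- a contradiction. A similar accounting verifies that even a link simultaneously playing chain- and locking-helper roles stays within its budget, finishing the proof.
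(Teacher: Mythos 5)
Your proof takes a genuinely different route from the paper's. The paper splits~(\ref{e:lb}) into two inequalities: a local token-reallocation argument for
$\frac{3}{2}|F| \geq w(F(L,V))+\frac{1}{2}\sum_{x \in X} d_J(x)$
(with helpers for $W \cap F$), and a \emph{global} matching count
$w(F(L,V)) = |L|-\frac{1}{2}|M_F| \geq \frac{3}{2}|M|+|U|+\frac{1}{2}|N|$
obtained from $|M_F| \leq |M|-|N|$ via the auxiliary graph $G'$. You instead try to fold the matching inequality into the same local charging, via $c$-values and an alternating-component argument on $M \triangle M_F$. That is a legitimate alternative and the $c$-value regrouping $\sum_a c(a) = \frac{3}{2}|M|+|U|+\frac{1}{2}|N|$ is correct, but the merge is where the proof develops a gap.

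The problematic step is the deduction ``$w$ is matched by $M$ but not by $M_F$, so $y \notin L$.'' Recall that $M_F = F(L,L) \setminus W$, so $w$ being $M_F$-unmatched only tells you that $f_w \notin F(L,L) \setminus W$: either the other endnode $y$ is not a leaf, \emph{or} $f_w \in W$ (a twin or locking link between two leaves). In the second case $f_w$ is \emph{not} a slack reservoir: it is charged $c(w)+c(y)$, which can be as large as $7/4$, so $f_w$'s ``spare'' can be negative and it cannot absorb the $1/4$ deficit from the $U$-end of the component. Concretely, let $w,y$ be twins under stem $s$ with $wy \in F$; then $w$ is $M_F$-unmatched yet $f_w=wy \in F(L,L)$. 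So the scenario is real, not vacuous. To repair this you would have to cascade the deficit further, from $ab$ to $f_w$ to $f_w$'s own helper ($sz$ for a twin link, $ax$ for a locking link), and then redo the bookkeeping to verify that a single helper carrying up to three roles (twin/locking helper, $M \triangle M_F$-chain helper, and its own $X$-charge) still stays under $3/2$. That accounting is delicate and not present in your write-up; as written, the inequality $3/2-c(w)-(\text{$J$-charge}) \geq 1/4$ is simply not available when $f_w \in W$. The paper avoids this entanglement precisely by proving the two bounds separately, so that the matching-size comparison never interacts with the helper assignment.
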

\begin{proof}
Define a weight function $w$ on $E(L,V)$ by:
$$ 
w(e) = \left\{ 
\begin{array}{lll} 
3/2  & \mbox{ if } e \in E(L,L) \setminus W  \\
2    & \mbox{ if } e \in W \\ 
1    & \mbox{ if } e \in E(L,V \setminus L)  
\end{array} 
\right . 
$$
We prove the following two inequalities, that imply inequality (\ref{e:lb}):
\begin{eqnarray}
\frac{3}{2}|F| & \geq & w(F(L,V))+\frac{1}{2}\sum_{x \in X} d_J(x) \label{e:F1} \\
w(F(L,V))      & \geq & \frac{3}{2}|M|+|U|+\frac{1}{2}|N|          \label{e:F2}
\end{eqnarray}

We prove (\ref{e:F1}). Assign $3/2$ tokens to every $e \in F$, so there are $\frac{3}{2}|F|$ tokens.
We will show that these tokens can be reassigned such that:
every link in $F(L,L) \setminus W$ keeps its $3/2$ tokens, 
every link in $W \cap F$ gets $2$ tokens,
every link in $F(L,V \setminus L)$ keeps $1$ token from its $3/2$ initial tokens, and 
every $x \in X$ gets $1/2$ token for each link in $J$ incident to $x$.
Such an assignment is achieved as follows.
For every $e \in F$, move $1/2$ token from the $3/2$ tokens of $e$ to each {\em non-leaf} endnode of $e$, if any. 
Note that every link in $F(L,L)$ keeps its $3/2$ tokens, since no token is moved to leaves. 
In particular, every link $e \in W \cap F$ keeps its $3/2$ tokens.
We will assign to each $e \in W \cap F$
additional $1/2$ token moved earlier to some non-leaf node by some other link, as follows. 
\begin{itemize}
\item[$\bullet$]
Suppose that $e$ is a twin link with stem $s$.
By Claim~\ref{c:twins}, there is a unique link in $F$ incident to $s$, 
hence $s$ has $1/2$ token from this link.
This $1/2$ token is moved to $e$ to a total of $2$ tokens.
Note that after this $s$ has $0$ tokens.
\item[$\bullet$]
Suppose that $e$ is a locking link of a leaf $a$.
By Claim \ref{c:shm} there is a unique link $ax \in F$ incident to $a$,
and $x \in X$, by Claim~\ref{c:lock-ok}; 
hence $x$ has $1/2$ token from this link.   
This $1/2$ token is moved to $e$ to a total of $2$ tokens.
Note that after this $x$ has no tokens from the link $ax$
(but $x$ still has $1/2$ token from some other link $xz$, by Claim~\ref{c:lock-ok}).
\end{itemize}

We prove (\ref{e:F2}). 
Consider the graph $G=(L,E(L,L) \setminus W)$ and the graph $G'$ obtained from $G$ 
by removing the endnodes of the links in $N$. 
Since $M$ is a maximum matching in $G$ and since $N \subseteq M$, 
$M \setminus N$ is a maximum matching in $G'$. 
On the other hand, by the definition of $N$, no link in $M_F$ has a common endnode 
with a link in $N$, and thus $M_F$ is a matching in $G'$.
This implies $|M_F| \leq |M \setminus N|$, and since $N \subseteq M$ we get:
$$
|M_F| \leq |M|-|N| \ .
$$
By Claim~\ref{c:shm}, $F(L,L)$ is a matching, hence by the definition of $w$ we have: 
$$
w(F(L,V)) = \frac{3}{2}|M_F|+2|F(L,L) \cap W|+(|L|-2|M_F|-2|F(L,L) \cap W|)=|L|-\frac{1}{2}|M_F| \ .
$$
Combining with $|M_F| \leq |M|-|N|$ and observing that $|U|=|L|-2|M|$ we get
$$
w(F(L,V)) \geq |L|-\frac{1}{2}(|M|-|N|)=(|L|-2|M|)+\frac{3}{2}|M|+\frac{1}{2}|N|=|U|+\frac{3}{2}|M|+\frac{1}{2}|N|
$$
as claimed in (\ref{e:F2}).
\end{proof}

We prove the following statement that implies Theorem~\ref{t:main}.

\begin{theorem} \label{t:main+}
There exists a polynomial time algorithm that given an instance of {\sf TAP} computes a 
solution $I$ of size at most the right hand side of (\ref{e:lb}).
Thus $|I| \leq 1.5 \cdot |F|$.
\end{theorem}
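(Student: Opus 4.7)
The plan is to describe a polynomial-time algorithm whose output $I$ is bounded term-by-term by the four summands on the right-hand side of~(\ref{e:lb}): $\frac{3}{2}|M|$, $|U|$, $\frac{1}{2}|N|$, and $\frac{1}{2}\sum_{x\in X} d_J(x)$. First I would compute a maximum matching $M$ in the auxiliary graph $(L, E(L,L)\setminus W)$. The tentative solution is seeded with the links of $M$ together with one link covering each unmatched leaf $u\in U$, which already takes care of the parent edge of every leaf, and with appropriate twin/locking links from $W$ corresponding to the twin and locked structures of $T$. The remainder of $T$ is then covered by incrementally adding further links, with the guiding principle that the total number of extra links does not exceed $\frac{1}{2}|M|+\frac{1}{2}|N|+\frac{1}{2}\sum_{x\in X} d_J(x)$.

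To verify such an inequality I would run a token-based charging argument parallel to the one in the proof of Lemma~\ref{l:lb}: $3/2$ tokens on every matched pair, $1$ token on every unmatched leaf, a $1/2$ bonus on every pair in $N$, and $1/2$ token for each incidence of a link of $J$ with a node of $X$. One then walks through the algorithm step by step and pays for each newly added link using the tokens currently available, leveraging the structural results of Section~\ref{s:preliminaries}. Claim~\ref{c:shm} guarantees that leaf degrees in $F$ are one and so $F(L,L)$ is a matching to which $M$ can be sensibly compared; Claim~\ref{c:twins} controls the number of $F$-links entering a twin-stem subtree; and Claim~\ref{c:lock-ok} shows that around each locked leaf there is a distinguished link $xz$ whose half-token at $x\in X$ is still available after paying for the locking structure.

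The main obstacle, and the reason the paper defers the bulk of the analysis to later sections, is that the algorithm cannot access $F$, $J$, or $N$ directly. What is needed is a structural lemma, referred to in the paper as ``a certain lemma'' and proved in Sections~\ref{s:B'} and~\ref{s:char}, guaranteeing the existence of a shadows-minimal cover whose local structure at each internal node $x\in X$ is compatible with the credit distribution: the charging at $x$ must match $d_J(x)$, and a pair in $N$ must be exposed exactly when the algorithm is forced to spend a bonus half-token. Once such a lemma is in hand, Theorem~\ref{t:main+} follows by executing the charging scheme sketched above; combining the resulting bound $|I|\le$ the right-hand side of~(\ref{e:lb}) with Lemma~\ref{l:lb} yields $|I|\le \tfrac{3}{2}|F|$, which is $1.5$ times the optimum and hence proves Theorem~\ref{t:main}.
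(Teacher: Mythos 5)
Your proposal captures the right skeleton—compute $M$, view the right-hand side of~(\ref{e:lb}) as a credit budget, charge the algorithm's links against it, and rely on a structural lemma proved in Sections~\ref{s:B'} and~\ref{s:char}—but the algorithmic plan you describe is not the paper's, and as stated it has a genuine gap. The paper's Algorithm~\ref{alg:F} does \emph{not} seed $I$ with all of $M$ plus one covering link per unmatched leaf. It starts with $I=\emptyset$ and repeatedly finds a \emph{semi-closed} subtree $T'$ of the contracted tree $T/I$, covers it with an exact cover $I'$ of size $|M(T')|+|U(T')|$, and contracts it, paying for $|I'|$ plus one fresh coupon out of $credit(T')$. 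The key bookkeeping invariant is $|I|+credit(T/I)\le credit(T)$, which is local and inductive; a ``seed then extend'' accounting does not reproduce it. Moreover, the exact cover returned by Algorithm~\ref{alg:tF} may \emph{swap} a link $bb'\in M$ for the link $ab'$ (via the auxiliary matching $\tilde{M}$), so the final solution need not contain all of $M$, and greedy link contractions add links $uv$ that are in neither $M$ nor $\{up(u):u\in U\}$. Your plan has no mechanism for either of these.

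The ``certain lemma'' is also mischaracterized. Lemma~\ref{l:B'} does not assert the existence of a shadows-minimal cover $F$ whose local structure matches the credit; it asserts that, under the Credit and Partial Solution Invariants and after exhausting greedy contractions, one can find \emph{in polynomial time and without consulting $F$} a semi-closed tree $T'$ that is not deficient (i.e., $credit(T')\ge |M'|+|U'|+1$) together with an exact cover of size $|M'|+|U'|$. The role of $F$, $J$, $N$ is confined to the \emph{analysis}: Lemma~\ref{l:char} shows that deficient trees are dangerous, and dangerousness is a purely combinatorial, $F$-free property of $T'$ and $E$. Your sketch omits the central notions (semi-closed, deficient, dangerous, the Credit Invariant, the Partial Solution Invariant, and the contraction mechanism), and ``walking through the algorithm step by step'' is not possible until that machinery is in place. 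As written, the proposal is a plan with the load-bearing parts left unspecified rather than a proof of Theorem~\ref{t:main+}.
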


\section{The credit scheme and the algorithm} \label{s:credit-algo}

We now explain how we use the lower bound (\ref{e:lb}).
We view the lower bound as a credit which we can spend for adding links to a partial solution $I$.
Initially, the algorithm assigns units of credit 
to nodes of $T$ and links in $M$ according to the four parts of the lower bound.
We call these credit units {\em coupons} and {\em tickets},
where each ticket worth half coupon. The initial credit distribution is as follows: 
\begin{itemize}
\item[$\bullet$]
$3/2$ coupons to every link $e \in M$ and $1$ coupon to every unmatched leaf $u \in U$.
\item[$\bullet$]
$1$ ticket to every link $e \in N$ and $d_J(x)$ tickets to every node $x \in X$.
\end{itemize}
The main difference between coupons and tickets is as follows. 
The location of coupons is known to us right after the matching $M$ is computed.
The location of tickets depends on $F$, which is not known to us, 
hence to ``claim'' a ticket, we will need to prove its existence. 
 
To {\em contract} a subtree $T'$ of $T$ is to combine all nodes in $T'$ into a single node $v$.
The edges and links with both endpoints in $T'$ are deleted.
The edges and links with one endpoint in $T'$ now have $v$ as their new endpoint.
Among any set of parallel links, if any, only one link is kept.
We refer to the nodes created by contraction as 
{\em compound nodes}; compound nodes always own $1$ coupon. 
For technical reasons, $r$ is also considered as a compound node.
Non-compound nodes of $T/I$ are referred to as {\em original nodes}. 
If we add a link $uv$ to a partial solution $I$, then the nodes along the path $P(uv)$ belong
to the same 2-edge-connected component of the augmented graph $(V,{\cal E} \cup I)$. 
Hence, we may contract some or all the edges of $P(uv)$. 
For a set of links $I \subseteq E$, let $T/I$ denote the tree obtained by
contracting every 2-edge-connected component of $T \cup I$ into a single node. 
We refer to the contraction of every $2$-edge-connected component of 
$T \cup I$ into a single node simply as the contraction of the links in $I$.
Let $T'$ be a subtree of $T/I$.
For a set $Y$ of links we use the notation 
$Y(T')$ or $Y \cap T'$ to denote the set of the links in $Y$ with both ends in $T'$.
If $Y$ is a set of nodes, then a similar notation is used to denote the set of the nodes in $Y$ 
that belong to $T'$.
We use the following notation for the credit distributed in $T'$:
\begin{itemize}
\item[$\bullet$]
$coupons(T')$ denotes the total number of coupons owned by $T'$: 
$1$ coupon for every unmatched leaf and every compound node of $T'$,
and $3/2$ coupons for every link in $M(T')$.
\item[$\bullet$]
$tickets(T')=|N(T')|+\sum_{x \in X(T')} \deg_J(x)$
denotes the number of tickets in $T'$.
\item[$\bullet$]
$credit(T')=coupons(T')+ \frac{1}{2} tickets(T')$. 
\end{itemize}

The algorithm maintains the following invariant.

\begin{invariant} [(Credit Invariant)] \label{inv:coupons} \ 
\begin{itemize}
\item[{\em (i)}]
Every unmatched leaf and every (unmatched) compound node of $T/I$ (including $r$) owns $1$ coupon, 
and every link in $M$ owns $3/2$ coupons.
\item[{\em (ii)}]
Every link in $N$ has a ticket, every original node $x \in X$ of $T/I$ has $d_J(x)$ tickets,
and compound nodes have no tickets.
\end{itemize}
\end{invariant}

The algorithm starts with a partial solution $I=\emptyset$ and with $credit(T/I)=credit(T)$
being the right-hand side of (\ref{e:lb}) plus $1$.
It iteratively finds a subtree $T'$ of $T/I$ and a cover $I'$ of $T'$, and 
{\em contracts $T'$ with $I'$}, which means the following:
add $I'$ to $I$, contract $T'$, and assign $1$ coupon (and $0$ tickets) to the new compound node.  
To use the notation $T/I$ properly, we will assume that $I'$ is an {\em exact cover} of $T'$, 
namely, that the set of edges of $T/I$ that is covered by $I'$ equals the set of edges of $T'$
(this is possible due to shadow completion).
Let us say that a contraction of $T'$ with $I'$ is {\em legal} if 
$credit(T') \geq |I'|+1$.
This means that the set $I'$ of links added to $I$ and 
the $1$ coupon assigned to the new compound node are paid by the total credit in $T'$.
The credit of $T'$ is not reused in any other way, since the only credit the new 
compound node has is the $1$ coupon assigned to it, and it has no tickets.
We do only legal contractions, which implies that at any step of the algorithm 
$$|I|+credit(T/I) \leq credit(T) \ .$$
Thus at the last iteration, when $T/I$ becomes a single compound node, 
$|I|$ is at most the right-hand side of (\ref{e:lb}).

We now describe two legal contractions of $T'$ with $I'$ that rely on coupons only.

\begin{definition} [(greedy contractions)] 
The following two types of contractions of $T'$ with $I'$ are called {\em greedy contractions}:
\begin{itemize}
\item[$\bullet$] 
{\em Greedy locking tree contraction}: 
Here $T'$ is a locking tree of $a$ (as in Definition~\ref{d:lock}),
$a,b,b'$ are all unmatched by $M$, and $I'=\{bb',up(a)\}$;
furthermore, if $b$ is also a locked leaf then $T'$ contains the locking tree of $b$ 
(see Fig.~\ref{f:twin-lock}(c) where the locking tree $T_v$ of $a$ contains the locking tree $T_v'$ of $b$).  
Note that $I'$ indeed covers $T'$ and that $coupons(T') \geq 3$, hence this contraction is legal. 
\item[$\bullet$] 
{\em Greedy link contraction}: 
Here $T'=P(uv)$ for some $uv \in E$ where $u,v$ are unmatched leaves of $T/I$,
and $I'=\{uv\}$.
Note that $coupons(T') \geq 2$, hence this contraction is legal.
\end{itemize}
\end{definition}

The first step of our algorithm is exhausting all greedy locking tree contractions.
We now describe a more complicated type of legal contraction used in \cite{EFKN-APPROX}
(but our definitions are slightly different).
For a node set $U \subseteq V$, we let $up(U)=\{up(u):u \in U\}$.
A rooted subtree $T'$ of $T$ is {\em $U$-closed} 
if there is no link in $E$ from $U \cap T'$ to $T \setminus T'$.
$T'$ is {\em leaf-closed} if it is $L(T)$-closed.
A leaf-closed $T'$ is {\em minimally leaf-closed}
if any proper rooted subtree of $T'$ is not leaf-closed.

\begin{definition} 
[($M$-compatible tree, semi-closed tree)] \label{d:semi-closed}
Let $M$ be a ma\-tching on the leaves of $T/I$.
A subtree $T'$ of $T/I$ is {\em $M$-compatible} if
for any $bb' \in M$ either both $b,b'$ belong to $T'$, or none of $b,b'$ belongs to $T'$.
A rooted subtree $T'$ of $T/I$ is {\em semi-closed} 
(w.r.t. $M$) if it is $M$-compatible and closed w.r.t. its unmatched leaves.
$T'$ is {\em minimally semi-closed} if $T'$ is semi-closed but any proper subtree of $T'$ is not semi-closed.
\end{definition}

For a semi-closed tree $T'$ let us use the following notation:
\begin{itemize}
\item[$\bullet$]
$M'=M(T')$ is the set of links in $M$ with both endnodes in $T'$.
\item[$\bullet$]
$U'=U(T')$ is the set of unmatched leaves of $T'$.
\end{itemize}

\begin{lemma} \label{l:up}
If $T'$ is minimally semi-closed then $M' \cup up(U')$ is an exact cover of $T'$.
\end{lemma}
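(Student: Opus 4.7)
The plan is to prove the two containments implicit in the statement: (i) no link in $M' \cup up(U')$ covers an edge of $T/I$ outside $T'$, and (ii) every edge of $T'$ is covered by some link in $M' \cup up(U')$.

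For (i), links in $M'$ have both endnodes in $T'$ by definition, so they cover only edges of $T'$. For $u \in U'$, write $up(u)=uw$; since $T'$ is closed with respect to its unmatched leaves, $w \in T'$, and hence $P(uw) \subseteq T'$.

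For (ii), fix an edge $e=(p,c)$ of $T'$ with $c$ a child of $p$, and consider the rooted subtree $T_c$ of $T/I$. Since $c$ is a proper descendant of the root of $T'$, $T_c$ is a proper rooted subtree of $T'$, so by minimality of $T'$ the tree $T_c$ is not semi-closed. Thus one of the two defining conditions of semi-closure fails for $T_c$:
\begin{itemize}
\item[$\bullet$] $T_c$ is not $M$-compatible: there exists $bb' \in M$ with exactly one endpoint, say $b$, in $T_c$. Because $T'$ is $M$-compatible and $b \in T' $, we must have $b' \in T' \setminus T_c$. Hence $bb' \in M'$ and $P(bb')$ crosses $e$.
\item[$\bullet$] $T_c$ is $M$-compatible but not closed with respect to its unmatched leaves: there exists an unmatched leaf $u$ of $T_c$ whose up-node $w$ lies outside $T_c$. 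Because $T_c$ is $M$-compatible, every matched leaf of $T_c$ has its partner inside $T_c$, so $u$ is unmatched by $M$ globally, i.e.\ $u \in U'$. The closure of $T'$ with respect to $U'$ forces $w \in T'$, hence $w \in T' \setminus T_c$, so $up(u) \in up(U')$ and $P(uw)$ crosses $e$.
\end{itemize}
In either case $e$ is covered by a link in $M' \cup up(U')$.

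The only delicate point is the second bullet: one needs to be sure that the unmatched leaf of $T_c$ that witnesses failure of closure is in fact unmatched \emph{globally} (so that it lies in $U'$ and its up-link is in $up(U')$). This is where the $M$-compatibility of $T_c$ in that case is essential, since it rules out the possibility that $u$ is matched by $M$ to some partner outside $T_c$. All remaining steps are immediate from the definitions.
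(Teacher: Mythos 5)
Your proof is correct, and it takes a genuinely different route from the paper. The paper reduces the claim to a cited result of Nagamochi: it contracts $M'$ inside $T'$, argues (via minimality) that the resulting tree $T''$ has leaf set exactly $U'$ and is minimally \emph{leaf-closed}, invokes Nagamochi's lemma that $up(L(T''))$ exactly covers a minimally leaf-closed tree, and then observes that $M'$ covers only edges of $T'$. You instead argue directly and self-containedly: for each edge $e=(p,c)$ of $T'$, the proper rooted subtree $T_c$ fails to be semi-closed, and whichever condition fails (either $M$-compatibility or closure with respect to unmatched leaves) manufactures a link in $M'$ or in $up(U')$ whose tree-path crosses $e$; exactness in the other direction is immediate from $M$-compatibility and $U'$-closure of $T'$. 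In effect you are reproving Nagamochi's lemma in the slightly more general semi-closed setting rather than reducing to it; this is a perfectly valid trade-off, and arguably cleaner, since it avoids the somewhat delicate verification that the contraction preserves minimal leaf-closure.

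One small remark on the caveat you flag: the worry that the unmatched leaf $u$ of $T_c$ might only be ``locally'' unmatched is unnecessary. Under the paper's conventions, ``unmatched'' refers to the global matching $M$, and the leaves of a rooted subtree $T_c$ are precisely the leaves of $T/I$ lying in $T_c$; hence any unmatched leaf of $T_c$ is automatically an unmatched leaf of $T'$, i.e.\ an element of $U'$, with no appeal to $M$-compatibility of $T_c$. Your invocation of $M$-compatibility there does no harm, but it is not needed.
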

\begin{proof}
Let $T''$ be obtained from $T'$ by contracting $M'$. 
Note that $L(T'')=U'$. 
Otherwise, if $T''$ has a leaf $a$ that is not a leaf of $T'$, then the subtree of $T'$ 
that was contracted into $a$ is a semi-closed tree (with no unmatched leaves),
contradicting the minimality of $T'$. 
Note also that $T''$ is minimally leaf-closed, since $T'$ is minimally semi-closed. 
\cite{N} proved that if $T''$ is a minimally leaf-closed tree
then $up(L(T''))$ is an exact cover of $T''$.
Thus $up(U')$ is an exact cover of $T''$ (if $T'$ has no unmatched leaves then $T''$ is a single node). 
As a link in $M'$ can cover only edges in $T'$, the statement follows.
\end{proof}

Thus a minimally semi-closed tree admits a cover of size $|M'|+|U'|$. 
This motivates the following definition. 

\begin{definition} [(deficient tree)] 
A semi-closed tree $T'$ is {\em deficient} if \\ $credit(T')< |U'|+|M'|+1$.
\end{definition}

Recall that for a cover $I'$ of a subtree $T'$ of $T/I$, 
{\em contracting $T'$ with $I'$} means that we 
add $I'$ to $I$, contract $T'$, and assign $1$ coupon to the new compound node.  
The main actions taken by our algorithm can be summarized as follows.

\begin{invariant} [(Partial Solution Invariant)] \label{inv:I} 
The partial solution $I$ is obtained by initially exhausting greedy locking tree contractions,
and then sequentially applying a greedy link contraction, or 
contracting a semi-closed tree with an exact cover.
\end{invariant}

In the next sections we prove the following key lemma.

\begin{lemma} \label{l:B'}
Suppose that the Credit Invariant and the Partial Solution Invariant hold for $T$, $M$, and $I$
(in particular, greedy locking tree contractions were initially exhausted), 
that $M$ has no twin links and no locking links, and that $T/I$ has no greedy link contraction.
Then there exists a polynomial time algorithm that finds a non-deficient semi-closed tree $T'$ of $T/I$ 
and an exact cover $I' \subseteq E$ of $T'$ of size $|I'|=|M'|+|U'|$. 
\end{lemma}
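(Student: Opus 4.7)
The plan is to use Lemma~\ref{l:up} as the engine: for any minimally semi-closed subtree $T'$ of $T/I$, the set $M'\cup up(U')$ is automatically an exact cover of $T'$ of size $|M'|+|U'|$, so the whole lemma reduces to producing a minimally semi-closed $T'$ that is moreover non-deficient. A minimally semi-closed subtree is found in polynomial time by descending from the root: for each candidate node $v$, test whether $T_v$ is $M$-compatible and whether every unmatched leaf of $T_v$ has its up-link inside $T_v$; descend to any child whose subtree still passes this test, and stop when no further descent is possible.

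Expanding ``$credit(T') \geq |M'|+|U'|+1$'' and cancelling common terms (compound leaves are already counted in $|U'|$), the non-deficiency inequality for a minimally semi-closed $T'$ becomes
\[
|\text{compound internal nodes of } T'| + \tfrac{1}{2}|M'| + \tfrac{1}{2}|N(T')| + \tfrac{1}{2}\!\!\sum_{x \in X(T')}\!\! d_J(x) \;\geq\; 1.
\]
Any of the following therefore certifies non-deficiency: $T'$ contains an internal compound node (in particular, if it swallows $r$); $|M'| \geq 2$; some $x \in X(T')$ has $d_J(x) \geq 2$; or two distinct half-coupons accumulate, e.g.\ a matching link that also lies in $N$. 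These are the easy cases and dispose of all ``large enough'' $T'$.

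The hard case is a minimally semi-closed $T'$ with no compound internal node, at most one matching link, no $N$-link, and no $X$-node inside carrying $d_J\geq 1$. Such a $T'$ has a very restricted shape: at most two or three unmatched leaves and at most one matching link, whose up-links converge at a tiny internal $X$-node. Here the three hypotheses of the lemma become decisive: $M$ has no twin link, $M$ has no locking link, and $T/I$ admits no greedy link contraction. Combined with Claims~\ref{c:overlap}--\ref{c:lock-ok} applied to the fixed optimal shadows-minimal cover $F$, they exclude almost every small shape; in each surviving configuration, the $F$-link incident to an unmatched leaf of $T'$ (forced to exist by Claim~\ref{c:shm}) either reveals an $X$-node inside $T'$ with $d_J\geq 1$ that had been missed, forces an $N$-link to sit inside $T'$, or uncovers an adjacent credit source that, after a local enlargement of $T'$, yields a non-deficient semi-closed tree still admitting the exact cover furnished by Lemma~\ref{l:up}.

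The principal obstacle, carried out in Sections~\ref{s:B'} and~\ref{s:char}, is precisely this exhaustive structural classification of deficient minimally semi-closed trees. Its subtlety stems from the fact that the credit invariants charge against the \emph{unknown} optimum $F$: each ticket claimed via the structural claims on $F$ must be one that has not already been implicitly spent by an earlier contraction, which is exactly what the Credit and Partial Solution Invariants are arranged to guarantee.
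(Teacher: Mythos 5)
Your plan gets the framing right: Lemma~\ref{l:up} supplies an exact cover of size $|M'|+|U'|$ for any \emph{minimally} semi-closed subtree, so the task is to produce a minimally semi-closed subtree whose credit surplus
$|C'|+\tfrac{1}{2}|M'|+\tfrac{1}{2}|N(T')|+\tfrac{1}{2}\sum_{x\in X'}d_J(x)$
is at least $1$, and you correctly list the individual quantities that would certify this. But there are two genuine gaps.

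First, the certificates you propose to test are not computable. The quantities $d_J(x)$ and $|N(T')|$ are defined in terms of the fixed but \emph{unknown} optimum $F$. The algorithm cannot look at $F$, so it cannot decide whether a given minimally semi-closed tree is deficient, and it cannot ``examine the $F$-link incident to an unmatched leaf.'' The paper resolves this by introducing a polynomial-time recognizable surrogate, the family of \emph{dangerous} trees (Definition~\ref{d:dangerous}), defined purely in terms of the tree shape, the matching $M$, and which links are present in $E$. The structural analysis you sketch (via Claims~\ref{c:overlap}--\ref{c:lock-ok}) is used to prove the implication ``deficient $\Rightarrow$ dangerous'' (Lemma~\ref{l:char}), but the algorithm tests dangerousness, not deficiency. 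Your write-up collapses these two notions, which makes the algorithm non-implementable.

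Second, the ``local enlargement'' step is underspecified in a way that hides the main difficulty. If a minimally semi-closed tree is dangerous, the paper does not locally absorb one extra credit source; Algorithm~\ref{alg:tF} performs a global surgery: it contracts the twin link inside every $4$-leaf dangerous tree, switches $bb'\mapsto ab'$ in every $3$-leaf dangerous tree to form a modified matching $\tilde M$, takes a minimally semi-closed tree $\tilde T'$ w.r.t.\ $\tilde M$, and then uncontracts. The resulting tree $T'$ is generally \emph{not} minimally semi-closed w.r.t.\ $M$, so Lemma~\ref{l:up} does not directly give a cover of size $|M'|+|U'|$; that size bound has to be recovered through the auxiliary matching $\tilde M$ and the accounting of uncontracted twin links (Claims~\ref{c:larger-tree}--\ref{c:size} and the final two claims of Section~\ref{ss:all-dangerous}). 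Your proposal asserts the enlarged tree ``still admit[s] the exact cover furnished by Lemma~\ref{l:up}'' without noticing that the hypothesis of that lemma (minimality w.r.t.\ $M$) no longer holds, which is precisely the nontrivial part the paper spends Section~\ref{ss:all-dangerous} handling.
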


Algorithm {\sc Tree-Cover} (Algorithm~\ref{alg:F}) initiates $I \gets \emptyset$ as a partial cover.
It computes a maximum matching $M$ in $E(L,L) \setminus W$
and distributes coupons as described in Credit Invariant~\ref{inv:coupons}(i).
Then it exhausts greedy locking tree contractions.
In the main loop, the algorithm iteratively exhausts greedy link contractions,
then finds $T',I'$ as in Lemma~\ref{l:B'}, and contracts $T'$ with $I'$.
The stopping condition is when $I$ covers $T$, namely, when $T/I$ is a single node.

\medskip

\begin{algorithm}[H]
\caption{{\sc Tree-Cover}$(T=(V,{\cal E}),E)$ (A $1.5$-approximation algorithm)} \label{alg:F}
{\bf initialize:}  $I \gets \emptyset$  \\
$M \gets$ maximum matching in $E(L,L) \setminus W$. \\
Assign $1$ coupon to every unmatched leaf and to $r$, and $3/2$ coupons to every link in $M$. \\
Exhaust greedy locking tree contractions. \\
\While{$T/I$ has more than one node}
{
Exhaust greedy link contractions and update $I$ and $M$ accordingly. \\
Find a subtree $T'$ of $T/I$ and an exact cover $I'$ of $T'$ as in Lemma~\ref{l:B'}. \\
Contract $T'$ with $I'$.
}
\Return{$I$}
\end{algorithm}

\medskip

It is easy to see that all the steps in the algorithm can be 
implemented in polynomial time and that during the algorithm
the Credit Invariant and the Partial Solution Invariant hold for $T$, $M$, and $I$.
The credit scheme used implies that the algorithm computes a 
solution $I$ of size at most $1.5$ times the right-hand size of~(\ref{e:lb}).
Hence it only remains to prove Lemma~\ref{l:B'}, which is done in the rest of the paper.
 
We note that our paper uses the main idea of a correct and relatively simple proof 
of the $1.8$ ratio in \cite{EFKN-TALG}.
In fact, a slight modification of this algorithm gives ratio $1.75$, see \cite{KN-LP}. 
However, the proof of the $1.5$ ratio is much more involved, and we 
mention the relation of our current paper to the previous incomplete/incorrect proofs of the $1.5$ ratio
\cite{EFKN-APPROX} and \cite{EKN-IPL}, which are co-authored by the authors of the current paper.
One major difference is the definition of a locked leaf.
Without going into details, the \cite{EFKN-APPROX} definition of locked leaf
leads to several additional complex structures and definitions, 
and to an exhaustive case analysis of many deficient minimally semi-closed trees with $3$, $4$, and $5$ leaves. 
In this paper we essentially have just one deficient tree -- with $3$ leaves;
two additional $4$-leaf trees are reduced to the $3$-leaf case. 
While we do not see an explicit mistake in the \cite{EFKN-APPROX} proof line,
our attempt to write a full version resulted in a very complex paper with more than 40 pages. 
On the other hand, the locked leaf definition in \cite{EKN-IPL} is erroneous, as it leads to an 
improper claiming of tickets, as was brought to our attention recently by \cite{CGLS}. 
This also leads to an additional error of not using the term $\frac{1}{2}|N|$ in the lower bound,
while it is essential for the proof of the $1.5$ ratio. 
Additional simplifications in our current paper are:  a simpler proof of the lower bound, 
a much easier case analysis than in the \cite{EFKN-APPROX} full draft,  
removal of various greedy steps and preprocessing reductions, and more.

\section{Proof of Lemma~\ref{l:B'}} \label{s:B'}

\subsection{Dangerous trees} \label{ss:semi-prop}

To prove Lemma~\ref{l:B'} we will give a characterization of deficient trees, 
by establishing that $T'$ is deficient iff the graph formed by $T'$ 
and the links in $F$ that have an endnode in $T'$ has a certain ``bad'' structure.
But even having such a characterization does not achieve the goal of Lemma~\ref{l:B'}. 
One reason is that this characterization depends on $F$, so we
are not able to recognize in polynomial time whether a given $T'$ is indeed deficient.
We thus classify a tree $T'$ as ``dangerous'' if $T'$ and the links in $E$ incident to nodes of $T'$ 
contain such a ``bad'' structure; thus a non-dangerous tree cannot be deficient. 

Let $T'$ be a semi-closed tree with root $v$. 
In addition to the notation $M',U'$ established in the previous section, let us use the following notation:
\begin{itemize}
\item[$\bullet$]
$C'$ is the set of {\em non-leaf} compound nodes of $T'$ (this includes $r$, if $r \in T'$).
\item[$\bullet$]
$L'=L(T')$ is the set of leaves of $T'$. 
\item[$\bullet$]
$S'=S(T')$ is the set of stems of $T'$. 
\end{itemize}

We consider the following family of semi-closed trees,
that can be recognized in polynomial time, and (as we will show) includes the deficient trees.

\begin{definition} [(dangerous tree)] \label{d:dangerous}
A semi-closed tree $T'$ is called {\em dangerous} if 
$|C'|=0$, $|M'|=1$, and one of the following holds:
\begin{itemize}
\item[{\em (i)}]
$|L'|=3$, $|S'|=0$, and $T'$ is as in Fig.~\ref{f:def}(a) with the links depicted present~in~$E$.
Namely, if $a$ is the unmatched leaf of $T'$, then there exists an ordering $b,b'$ of the matched leaves 
of $T'$ such that $ab' \in E$, the contraction of $ab'$ does not create a new leaf, and $T'$ is $b$-open.
If such an ordering $b,b'$ is not unique 
(namely, if also $ab \in E$, the contraction of $ab$ does not create a new leaf, 
and $T'$ is $b'$-open -- see Fig.~\ref{f:def}(b)), then 
we will assume that the up-node of $b$ is an ancestor of the up-node of $b'$.
\item[{\em (ii)}]
$|L'|=4$, $|S'|=1$, say $S'=\{s\}$, 
exactly one of the twins of $s$ is matched by $M$, and the tree $\tilde{T}'$ obtained from $T'$ 
by contracting the twin link of $s$ is a $3$-leaf dangerous tree, see Fig.~\ref{f:def}(c,d,e).
\end{itemize}
\end{definition}

\begin{figure}
\centering 
\epsfbox{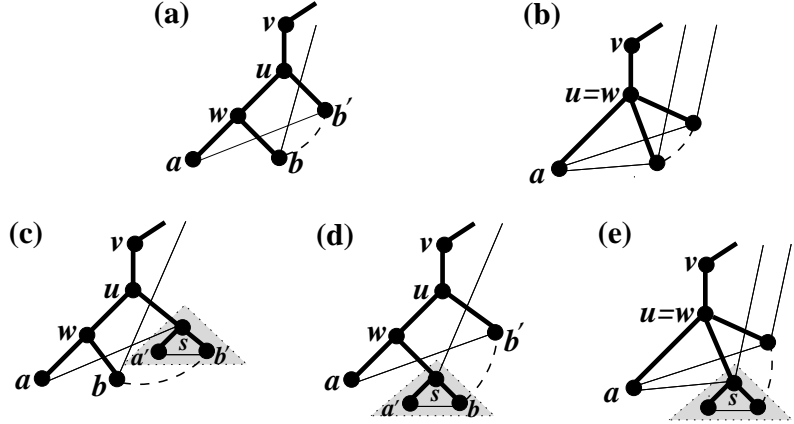}
\caption{Dangerous trees.
         The dashed arc shows the matched pair $bb'$. 
         Solid thin lines show links in $E$.
         Here $s$ is a stem and $w$ is not a stem. 
         Some of the edges of $T'$ can be paths, and $u=v$ or/and $u=w$ may hold.
         }
   \label{f:def}
\end{figure}

In Section~\ref{s:char} we will prove the following key statement. 

\begin{lemma} \label{l:char}
Under the assumptions of Lemma~\ref{l:B'}, any deficient tree is dange\-rous.
\end{lemma}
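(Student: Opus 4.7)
The plan is to convert deficiency into a tight arithmetic inequality on the combinatorial data of $T'$, and then use the structural facts about the optimal shadows-minimal cover $F$ from Section~\ref{s:preliminaries} to force $T'$ into exactly one of the two shapes of Definition~\ref{d:dangerous}. First I would expand
\[
credit(T') \;=\; |U'|+|C'|+\tfrac{3}{2}|M'|+\tfrac{1}{2}\bigl(|N(T')|+\textstyle\sum_{x\in X(T')}d_J(x)\bigr)
\]
and plug it into $credit(T')<|U'|+|M'|+1$ to obtain
\[
2|C'|+|M'|+|N(T')|+\sum_{x\in X(T')}d_J(x)\;\le\;1.
\]
This instantly gives $|C'|=0$, $|M'|\le 1$, and leaves at most one unit of ``ticket mass'' to be spent inside $T'$. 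The rest of the argument is a case analysis on $|M'|\in\{0,1\}$.

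The next step is to rule out $|M'|=0$. Then every leaf of $T'$ is unmatched and, by Claim~\ref{c:shm} together with semi-closedness, its unique $F$-incident link stays inside $T'$. If that link has another leaf endpoint, it is an $E$-link between two unmatched leaves of $T/I$, contradicting the exhaustion of greedy link contractions; the twin/locking sub-case is excluded because $M$ has no locking links and greedy locking-tree contractions have been exhausted. Hence the other endpoint is a non-leaf of $T'$ and, since $|C'|=0$, lies in $S(T')\cup X(T')$. Because the unmatched leaf $a$ cannot still be locked (else a greedy locking-tree contraction was still available), each such link lies in $J$, and one more $F$-link is needed to cover the edges of $T'$ above these endpoints; that cover-link also plants an endpoint in $X(T')$. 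Either $|L'|\ge 2$, in which case $\sum_{x}d_J(x)\ge 2$ and the budget is violated, or $|L'|=1$, in which case $T'$ is a rooted path whose edges above $up(a)$ still need cover, and every such covering link forces another $J$-incidence in $X(T')$, again violating the budget.

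So $|M'|=1$; write $M'=\{bb'\}$. The budget now collapses to $|N(T')|=0$ and $\sum_{x\in X(T')}d_J(x)=0$, so no $F$-link has an endpoint in $X(T')$, and $bb'\notin N$ forces at least one of $b,b'$ to be matched by $M_F$. For every unmatched $a\in U'$, the link $ax\in F$ lies in $T'$ and avoids $X(T')$, so $x\in \{b,b'\}\cup S(T')$ (unmatched leaves are excluded by the no-greedy-link hypothesis, leaving only matched leaves or stems). Tracking all $F$-links incident to $\{b,b'\}$ and to any stem inside $T'$ via Claims~\ref{c:overlap}, \ref{c:twins}, and \ref{c:lock-ok}, I would force $|U'|=1$ and pin $T'$ to one of: a $3$-leaf tree with $S'=\emptyset$, matching Definition~\ref{d:dangerous}(i), where the prescribed ordering of $b,b'$ is recovered from the uniqueness of the non-overlapping placement of $ax$; or a $4$-leaf tree with a unique stem $s\in S'$ exactly one of whose twins is matched, so that contracting the twin link of $s$ produces the previous $3$-leaf dangerous tree, i.e., Definition~\ref{d:dangerous}(ii).

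The main obstacle is the stem sub-case in the $|M'|=1$ analysis: once the $F$-neighbour $x$ of an unmatched leaf $a$ lies on a stem $s\in S'$, Claim~\ref{c:twins} prescribes a specific twin configuration around $s$, and one has to bookkeep all $F$-links entering and leaving $T_s$ to verify that the only configuration avoiding a forbidden $X$-ticket or enabling a greedy contraction is precisely the $4$-leaf figure of Definition~\ref{d:dangerous}(ii). A close secondary difficulty is the single-leaf sub-case of the $|M'|=0$ argument, where the absence of any matched leaf leaves many a priori options for how $F$ covers $T'$ and one must argue that each alternative generates a surplus $J$-ticket in $X(T')$.
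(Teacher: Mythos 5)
Your arithmetic bookkeeping at the start is correct and matches the paper's Claim~\ref{c:CM}: $|C'|=0$ and either $|M'|=0,\ tickets(T')\leq 1$, or $|M'|=1,\ tickets(T')=0$. The overall strategy (case split on $|M'|$, counting $F$-links from unmatched leaves and from outside $T'$ into nodes of $X'$) is also the one the paper follows. However there are several genuine gaps.

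The most serious is in the $|M'|=0$ case. You claim ``the unmatched leaf $a$ cannot still be locked (else a greedy locking-tree contraction was still available).'' That is false: the greedy locking-tree contraction requires \emph{all three} leaves $a,b,b'$ of the locking tree to be $M$-unmatched, so it can be unavailable while $a$ remains locked (e.g.\ because $bb'\in M$ or because one of $b,b'$ is already in some compound node). The paper's Claim~\ref{c:X-ticket} handles exactly this subtlety via a three-case analysis on what happened to $M$-links incident to the locking tree $T_{v'}$, and crucially uses that $x\in X'$ is an \emph{original} node to derive a contradiction; nothing in your sketch supplies that argument. Without it you cannot certify that the leaf-to-$X'$ links are in $J$, so the ticket count does not go through.

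In the $|M'|=1$ case the sketch is too coarse. You never establish the key inequality $|U'|\leq |S'_1|+1$ (the paper's Claim~\ref{c:count}), and the distinction between $S'$ and $S'_1=\{s:\mbox{twin link of }s\in F\}$ matters because nodes in $S'\setminus S'_1$ have $d_F=0$. You also omit three needed ingredients: (a) ruling out $|S'_1|=2$ via an $N$-ticket (Claim~\ref{c:s'}), which is where the $\frac12|N|$ term in the lower bound earns its keep; (b) the reduction of the $4$-leaf case to the $3$-leaf case, which the paper does under a deliberately \emph{weakened} Matching Invariant (Lemma~\ref{l:3-lvs}) precisely because contracting the twin link is not $M$-compatible and breaks the Partial Solution Invariant; and (c) refuting $|L'|=2$, which the paper does by Claim~\ref{c:new-leaf} showing that since $bb'\in M$ is not a twin link, a new leaf after contracting $bb'$ forces an internal compound node on $P(bb')$, contradicting $|C'|=0$. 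Finally, the observation ``$bb'\notin N$ forces one of $b,b'$ matched by $M_F$'' is correct but does not appear to do any work; the force that pins $T'$ to Fig.~\ref{f:def} is the overlap-free structure of $F$ (Claims~\ref{c:overlap}, \ref{c:shm}, \ref{c:twins}) together with the absence of $X$-tickets, which is the careful case analysis of Claim~\ref{c:def3}.
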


Note that a dangerous tree $T'$ may {\em not} be deficient. 
E.g., if $T'$ is a $3$-leaf dangerous tree as in Fig.~\ref{f:def}(a),
then we may have that $F'=\{bb',av\}$; 
in this case, $T'$ is dangerous (if the links in in Fig.~\ref{f:def}(a) present in $E$) 
but $T'$ is not deficient (since $T'$ has a ticket at $v$, and thus $credit(T') \geq 3$, while $|M'|+|U'|=2$). 
The property of $3$-leaf dangerous trees that we will use 
is that the links $ab'$ and $bz$ with $z \notin T'$ {\em exist in $E$},
but we do not care whether they belong to $F$ or not.

Note that the property of a tree being dangerous depends only on the structure of
the tree and existence/absence of certain links in $E$ and $M$, 
and thus can be tested in polynomial time.
If we find a minimally semi-closed tree that is not dangerous then we are fine.
However, it might happen that all minimally semi-closed trees are dangerous (and they even may be all deficient).
Using the structure of dangerous trees that guarantees existence of certain links that go ``outside'' $T'$, 
we will show in the next section how to find a non-minimal semi-closed tree $T'$
that still admits a relatively small cover of size $|M'|+|U'|$, but is not dangerous.

\subsection{Finding a good tree when all minimally semi-closed trees are dangerous} \label{ss:all-dangerous}

We use Lemma~\ref{l:char} to prove Lemma~\ref{l:B'}, 
by showing that if all minimally semi-closed trees are dangerous,
then we can find a non-minimal non-deficient semi-closed tree $T'$ and its exact cover 
$I'$ such that $|I'|=|M'|+|U'|$.
Let ${\cal D}$ denote the family of minimally semi-closed subtrees of $T/I$.
Clearly, the trees in ${\cal D}$ are pairwise node disjoint, and can be found in polynomial time,
since every $D \in {\cal D}$ is a rooted subtrees of $T/I$, and since we can check 
polynomial time whether a subtree of $T/I$ is semi-closed. 
We also can check in polynomial time whether a member of ${\cal D}$ is dangerous.
If there is $T' \in {\cal D}$ that is not dangerous, then $T'$ is not deficient,
so $T'$ and $I'=M' \cup up(U')$ satisfy the requirement of Lemma~\ref{l:B'}.
Thus we will assume that all the trees in ${\cal D}$ are dangerous.
We will show that then Algorithm~\ref{alg:tF} finds a
non-deficient semi-closed tree $T'$ and its exact cover $I'$ such that $|I'|=|M'|+|U'|$.

In Algorithm~\ref{alg:tF} we define a new tree $\tilde{T}$ obtained from $T/I$ by contracting the twin link
in every $4$-leaf tree $D \in {\cal D}$; 
this transforms every such $D$ into a $3$-leaf dangerous tree $\tilde{D}$.
Then we {\em temporarily consider} a matching $\tilde{M}$ on the leaves of $\tilde{T}$ obtained from $M$ 
by replacing the link $bb'$ by the link $ab'$ in each $3$-leaf dangerous tree.
We emphasize that $\tilde{M}$ is considered only for the purpose of finding the pair $T',I'$, 
but we {\em do not replace $M$ by $\tilde{M}$}. 
Note that the property of a tree being semi-closed or dangerous depends on the matching.
In what follows, ``dangerous'' always means w.r.t. the matching $M$;
for ``semi-closed'' the default matching is $M$, and we will specify each time when 
a tree is semi-closed w.r.t. the matching $\tilde{M}$.

\medskip

\begin{algorithm}[H]
\caption{{\sc Find-Tree}$(T=(V,{\cal E}),E,M)$ (Finds a non-dangerous semi-closed tree $T'$ and 
its exact cover $I'$ of size $|I'|=|U'|+|M'|$, when all minimally semi-closed trees are dangerous.)}
\label{alg:tF}
Let $\tilde{W}$ be the link set obtained by picking for every $4$-leaf tree 
$D \in {\cal D}$ the twin-link of the stem of $D$ (see Fig.~\ref{f:d}(a)). \\
Let $\tilde{T}=(T/I)/\tilde{W}$ be obtained from $T/I$ by contracting every link in $\tilde{W}$, and 
let $\tilde{\cal D}$ be the set of subtrees of $\tilde{T}$ that correspond to ${\cal D}$ (see Fig.~\ref{f:d}(b)). 
\newline
$\triangleright$ 
{\small Comment: No link in $M$ is contracted, and $M$ is a matching on the leaves of $\tilde{T}$.
Every $\tilde{D} \in \tilde{\cal D}$ is a $3$-leaf dangerous tree, by the definition of a $4$-leaf dangerous tree.} 
\\
Let $\tilde{M}$ be obtained from $M$ by ``switching'' 
the link $e=bb'$ by the link $\tilde{e}=ab'$ in every $\tilde{D} \in \tilde{\cal D}$ (see Fig.~\ref{f:d}(c)),
where $a,b,b'$ are as in Definition~\ref{d:dangerous}(i). 
\newline
$\triangleright$ 
{\small Comment: $\tilde{M}$ is also a matching on the leaves of $\tilde{T}$.}
\\
Let $\tilde{T}'$ be a minimally semi-closed tree in $\tilde{T}$ w.r.t. $\tilde{M}$ 
(see the trees $\tilde{T}'_1$ and $\tilde{T}'_2$ in Fig.~\ref{f:d}(c)). \\
Let $\tilde{I}'=\tilde{M}(\tilde{T}') \cup up(\tilde{U}')$, where $\tilde{U}'$ is the set of 
$\tilde{M}$-unmatched leaves of $\tilde{T}'$. \\
Let $\tilde{W}'$ be the set of twin links in $\tilde{W}$ contained in the leaves of $\tilde{T}'$, and let 
$T'$ be obtained from $\tilde{T}'$ by ``uncontracting'' the links in $\tilde{W}'$ (see Fig.~\ref{f:d}(d)).
\\
\Return{$T'$ and $I'=\tilde{I}' \cup \tilde{W}'$}
\end{algorithm}

\begin{figure}
\centering 
\epsfbox{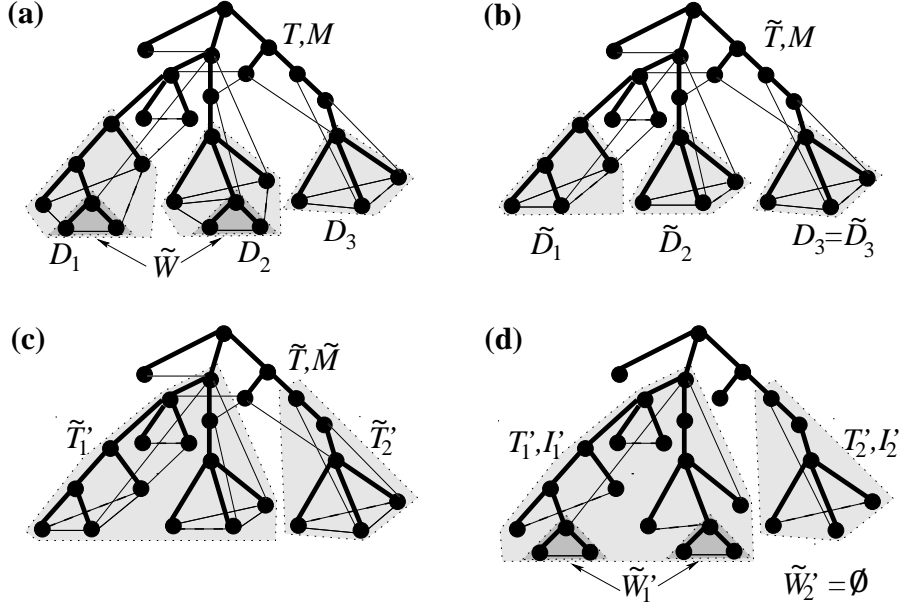}
\caption{Illustration to Algorithm~\ref{alg:tF}: 
         Finding a good tree when all minimally semi-closed trees are dangerous.
         The dashed lines show links in the matching and solid thin lines show links in $E$.}
   \label{f:d}
\end{figure}

\medskip

We prove that the pair $T',I'$ computed by the algorithm satisfies the following:
\begin{itemize}
\item[$\bullet$]
$T'$ is semi-closed and not dangerous (thus $credit(T') \geq |M'|+|U'|+1$).
\item[$\bullet$]
$I'$ is an exact cover of $T'$ of size $|I'|=|M'|+|U'|$.
\end{itemize}

Consider the tree $\tilde{T}'$ and its cover $\tilde{I}'$ computed at lines 4 and 5 of the algorithm.

\begin{claim} \label{c:larger-tree}
Suppose that $\tilde{T}'$ contains a $3$-leaf dangerous tree $\tilde{D}'$ 
with leaves $a,b,b'$ as in Definition~\ref{d:dangerous}. 
If $\tilde{T}'$ is $b$-closed, then $\tilde{T}'$ is not dangerous.
\end{claim}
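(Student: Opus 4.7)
My plan is to argue by contradiction: assume $\tilde{T}'$ is dangerous. Since $\tilde{D}' \subseteq \tilde{T}'$ already contributes the $M$-matched pair $bb'$, and Definition~\ref{d:dangerous} requires $|M(\tilde{T}')|=1$, the pair $bb'$ is the unique $M$-matched pair in $\tilde{T}'$, and $L(\tilde{D}') = \{a,b,b'\} \subseteq L(\tilde{T}')$.

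I first rule out the type-(ii) case, in which $\tilde{T}'$ has four leaves and exactly one stem $s$, whose two twin leaves satisfy that exactly one of them is $M$-matched. The matched twin must be $b$ or $b'$, and the other twin is either the unmatched leaf $a$ or a fourth leaf $c \notin \{a,b,b'\}$. If the other twin is $a$, then $a$ together with one of $\{b,b'\}$ is a twin pair already sitting inside $\tilde{D}'$, giving $\tilde{D}'$ a stem and contradicting $|S(\tilde{D}')|=0$ from Definition~\ref{d:dangerous}(i). If the other twin is $c$, then because the stem of a twin link has no leaf descendants other than its two twins, any rooted subtree containing one twin must contain the other; in particular $c \in L(\tilde{D}')$, contradicting $|L(\tilde{D}')|=3$. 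The remaining possibility that the twin pair is $\{b,b'\}$ itself is ruled out because $bb' \in M \subseteq E(L,L) \setminus W$ cannot be a twin link.

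For the type-(i) case, $|L(\tilde{T}')|=3$ forces $L(\tilde{T}') = L(\tilde{D}') = \{a,b,b'\}$. Definition~\ref{d:dangerous}(i) requires an ordering $(\beta,\beta')$ of $\{b,b'\}$ such that $a\beta' \in E$ is a non-twin link and $\tilde{T}'$ is $\beta$-open. The choice $\beta=b$ is precluded by the hypothesis that $\tilde{T}'$ is $b$-closed, so the only way $\tilde{T}'$ could be dangerous is via $\beta=b'$, yielding $ab \in E$ a non-twin link and $up(b') \notin \tilde{T}'$.

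The final step invokes the tie-breaking rule applied to $\tilde{D}'$. The ordering $(b,b')$ works for $\tilde{D}'$ by hypothesis on how $a,b,b'$ were labelled. I would verify that the ordering $(b',b)$ also works for $\tilde{D}'$: both $ab \in E$ and its non-twin property are global and are inherited from the type-(i) analysis of $\tilde{T}'$, while $up(b') \notin \tilde{T}' \supseteq \tilde{D}'$ shows that $\tilde{D}'$ is $b'$-open. With both orderings admissible for $\tilde{D}'$, the tie-breaking clause of Definition~\ref{d:dangerous}(i) forces $up(b)$ to be an ancestor of $up(b')$. But $\tilde{T}'$ is $b$-closed, so $up(b) \in \tilde{T}'$; since $\tilde{T}'$ is a rooted subtree (closed under descendants) and $up(b')$ is a descendant of $up(b)$, it follows that $up(b') \in \tilde{T}'$, contradicting $\tilde{T}'$ being $b'$-open. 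The main subtlety is this last step, where the purely syntactic naming convention of the tie-breaking rule on $\tilde{D}'$ is lifted, via the containment $\tilde{D}' \subseteq \tilde{T}'$, into a structural contradiction about $\tilde{T}'$.
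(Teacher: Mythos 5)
Your proof is correct and follows essentially the same route as the paper's: rule out the 4‑leaf case by observing that a 3‑leaf dangerous rooted subtree of a 4‑leaf dangerous tree would have to contain the stem (contradicting $|S'|=0$), then handle the 3‑leaf case by showing both orderings work for $\tilde{D}'$ so the tie‑breaking clause forces $up(b)$ to be an ancestor of $up(b')$, which together with $b$‑closedness of the rooted subtree $\tilde{T}'$ yields $b'$‑closedness and hence a contradiction. The only difference is that you spell out the 4‑leaf case by a short case analysis on which leaves form the twin pair, whereas the paper states the stem/no‑stem incompatibility more tersely; the underlying idea is identical.
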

\begin{proof}
Note that no $4$-leaf dangerous tree has a $3$-leaf dangerous tree as a rooted subtree. 
This is since a $4$-leaf dangerous tree has a stem, while a $3$-leaf dangerous tree has no stems.
Thus if $\tilde{T}'$ has at least $4$ leaves then $\tilde{T}'$ is not dangerous.
If $\tilde{T}'$ has $3$ leaves, then the leaf set of $\tilde{T}'$ and $\tilde{D}'$ coincide.
Thus the only possibility for $\tilde{T}'$ to be dangerous is if we are in the case of Fig.~\ref{f:def}(b).
However, in Definition~\ref{d:dangerous}(i) we assume
that the up node of $b$ is an ancestor of the up-node of $b'$, 
which implies that $T'$ is both $b$-closed and $b'$-closed. Thus $\tilde{T}'$ is not dangerous.
\end{proof}

\begin{claim} \label{c:disjoint}
For any $\tilde{D} \in \tilde{\cal D}$,
either $\tilde{T}'$ properly contains $\tilde{D}$ or $\tilde{T}',\tilde{D}$ are node disjoint.
Furthermore, if $\tilde{T}'$ contains some $\tilde{D}' \in \tilde{\cal D}$ then $\tilde{T}'$ is not dangerous.
\end{claim}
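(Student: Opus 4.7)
Since $\tilde{T}'$ and $\tilde{D}$ are both rooted subtrees of the common tree $\tilde{T}$, they are automatically either node-disjoint or one contains the other; so the first statement reduces to ruling out the possibility $\tilde{T}' \subseteq \tilde{D}$ with $\tilde{T}' \cap \tilde{D} \ne \emptyset$. My plan is a short case analysis based on how the leaves of $\tilde{T}'$ sit inside the leaf set $\{a,b,b'\}$ of $\tilde{D}$, driven entirely by semi-closedness of $\tilde{T}'$ w.r.t. the switched matching $\tilde{M}$, under which $a$ is matched to $b'$ and $b$ is unmatched.

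First I would use $\tilde{M}$-compatibility to force $\tilde{T}'$ to contain $a$ iff it contains $b'$, collapsing the analysis to two sub-cases. If $\tilde{T}'$ contains both, then the least common ancestor of $a$ and $b'$ is the root of $\tilde{D}$, so $\tilde{T}' \supseteq \tilde{D}$; combined with $\tilde{T}' \subseteq \tilde{D}$ this forces $\tilde{T}' = \tilde{D}$. But then $b$ is the sole $\tilde{M}$-unmatched leaf of $\tilde{T}'$, and semi-closedness demands $\tilde{T}'$ be $b$-closed, which directly contradicts the $b$-open condition baked into Definition~\ref{d:dangerous}(i). If on the other hand $\tilde{T}'$ contains neither $a$ nor $b'$, then the only rooted subtree of $\tilde{D}$ meeting this constraint has leaf set $\{b\}$, and semi-closedness again requires the up-node of $b$ to lie in $\tilde{T}' \subseteq \tilde{D}$, contradicting the $b$-openness of $\tilde{D}$.

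For the ``furthermore'' part I would simply invoke Claim~\ref{c:larger-tree}. Once the first part tells us that the containment $\tilde{T}' \supsetneq \tilde{D}'$ is proper, the leaf $b$ of $\tilde{D}'$ remains a leaf of $\tilde{T}$, lies in $\tilde{T}'$, and is $\tilde{M}$-unmatched; semi-closedness of $\tilde{T}'$ w.r.t. $\tilde{M}$ therefore pulls the up-node of $b$ into $\tilde{T}'$, so that $\tilde{T}'$ is a $b$-closed supertree of the $3$-leaf dangerous tree $\tilde{D}'$. Claim~\ref{c:larger-tree} then yields that $\tilde{T}'$ is not dangerous.

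The main obstacle is really the bookkeeping --- keeping straight which matching ($M$ or $\tilde{M}$) is in force at each step, and recognizing that the swap $bb' \mapsto ab'$ in $\tilde{M}$ was engineered precisely so that the asymmetric $b$-open property from Definition~\ref{d:dangerous}(i) becomes an obstruction to $\tilde{T}'$ being contained in $\tilde{D}$. Once this pivot is identified the case analysis itself is essentially immediate.
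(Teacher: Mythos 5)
Your overall strategy matches the paper's: reduce to ruling out $\tilde{T}' \subseteq \tilde{D}$, split on which of $\{a,b,b'\}$ lie in $\tilde{T}'$ via $\tilde{M}$-compatibility, and derive a contradiction from $\tilde{T}'$ being $b$-closed (forced by semi-closedness w.r.t.\ $\tilde{M}$) while $\tilde{D}$ is $b$-open; the ``furthermore'' part via Claim~\ref{c:larger-tree} is likewise the paper's route.

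There is, however, a factual error in your first sub-case. You assert that the least common ancestor of $a$ and $b'$ is the root of $\tilde{D}$, and deduce $\tilde{T}'\supseteq\tilde{D}$, hence $\tilde{T}'=\tilde{D}$. But $\mathrm{lca}(a,b')$ is the node $u$ in Fig.~\ref{f:def}(a), and $u$ need not equal the root $v$ (the figure caption explicitly allows $u\ne v$). So $\tilde{T}'=\tilde{D}$ is not forced. What is true --- and what the paper's proof actually uses --- is that $b$ is a descendant of $u=\mathrm{lca}(a,b')$. (This follows because ``contracting $ab'$ does not create a new leaf'' forces the third leaf to branch off the path $P(ab')$, and since $bb'\notin W$ the pair $b,b'$ cannot both hang below $w$, pinning $b$ to be below $w$ and hence below $u$.) Therefore once $a,b'\in\tilde{T}'$ you already get $b\in\tilde{T}'$, and the $b$-closed-versus-$b$-open contradiction goes through with $\tilde{T}'\subseteq\tilde{D}$ alone --- you never needed, and cannot claim, equality. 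The slip is easily repaired, but as written the step ``so $\tilde{T}'\supseteq\tilde{D}$'' is not valid.
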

\begin{proof}
As $\tilde{T}',\tilde{D}$ are rooted subtree of $\tilde{T}$, 
they are either node disjoint or one of them contains the other. 
Assume that $\tilde{T}',\tilde{D}$ are not node disjoint.
Then they share a leaf.
Suppose that the dangerous tree $\tilde{D}$ and its leaves $a,b,b'$ 
are as in Definition~\ref{d:dangerous}(i).
Note that we must have $b \in \tilde{T}'$,
since $a \in \tilde{T}'$ or $b' \in \tilde{T}'$ implies $ab' \in \tilde{T}'$
(since $ab' \in \tilde{M}$ and since $\tilde{T}'$ is semi-closed w.r.t. $\tilde{M}$),
hence the least common ancestor of $a,b'$, and all its descendants also belong to $\tilde{T}'$.
Since $\tilde{T}'$ is $b$-closed and $\tilde{D}$ is $b$-open, $\tilde{T}'$ properly contains $\tilde{D}$.
The second statement follows from Claim~\ref{c:larger-tree} and the fact that 
every $\tilde{D} \in \tilde{\cal D}$ is a $3$-leaf dangerous tree.
\end{proof}

Let $h:M \longrightarrow \tilde{M}$ be defined by $h(e)=\tilde{e}$ if $e \in M \setminus \tilde{M}$ 
and $h(e)=e$ otherwise, where $\tilde{e}$ is as in Line~3 in Algorithm~\ref{alg:tF}.
Since the trees in $\tilde{\cal D}$ are pairwise node disjoint, $h$ is a bijection. 
Now we prove the following.

\begin{claim} \label{c:f}
For any $e \in M$, either each of $e,h(e)$ has both endnodes in $\tilde{T}'$, 
or none of $e,h(e)$ has an endnode in $\tilde{T}'$;
thus $\tilde{T}'$ is $M$-compatible.
\end{claim}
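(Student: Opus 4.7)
The plan is to reduce everything to the trichotomy supplied by Claim~\ref{c:disjoint}. Recall that $h$ is the identity except on the matched pairs sitting inside some $\tilde{D} \in \tilde{\cal D}$, where it replaces $e = bb'$ by $h(e) = ab'$ with $a,b,b'$ the three leaves of $\tilde{D}$ as labelled in Definition~\ref{d:dangerous}(i). Hence the union of endnodes of $e$ and $h(e)$ is contained in the leaf set $\{a,b,b'\}$ of $\tilde{D}$. If $h(e) = e$ the claim is vacuous, so I only need to handle $e \in M \setminus \tilde M$.

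For such $e$, I would pick the unique $\tilde{D} \in \tilde{\cal D}$ with $e = bb' \subseteq L(\tilde D)$ and invoke Claim~\ref{c:disjoint}: either $\tilde{T}'$ properly contains $\tilde{D}$, or $\tilde{T}'$ and $\tilde{D}$ are node-disjoint. In the containment case, all three leaves $a,b,b'$ lie in $\tilde{T}'$, so both endnodes of $e=bb'$ and both endnodes of $h(e)=ab'$ belong to $\tilde{T}'$. In the disjoint case, none of $a,b,b'$ lies in $\tilde{T}'$, so neither $e$ nor $h(e)$ has an endnode in $\tilde{T}'$. This settles the first part of the claim.

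For the $M$-compatibility statement I would use that $\tilde{T}'$ is semi-closed w.r.t. $\tilde{M}$ by its very definition at Line~4 of Algorithm~\ref{alg:tF}, and is therefore $\tilde{M}$-compatible. Given any $e \in M$, the link $h(e) \in \tilde{M}$ thus has either both endnodes in $\tilde{T}'$ or none, and the first part of the claim transfers this ``all or nothing'' property from $h(e)$ to $e$, yielding $M$-compatibility. The only real obstacle, which is a bookkeeping one, is lining up the labelling of $a,b,b'$ in Definition~\ref{d:dangerous}(i) with the invocation of Claim~\ref{c:disjoint} so that both endnodes of the switched link $ab'$ are counted; beyond that, no case analysis is needed.
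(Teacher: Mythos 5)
Your proof is correct and follows essentially the same route as the paper's: reduce to the dichotomy of Claim~\ref{c:disjoint} applied to the unique $\tilde{D}$ containing the matched pair, then read off ``all endnodes in'' (containment case) versus ``no endnodes in'' (disjoint case). The only difference is cosmetic: for the $M$-compatibility conclusion you make an extra detour through $\tilde{M}$-compatibility of $\tilde{T}'$ and then transfer back via the first part, whereas the paper observes that the first part already gives, for every $e\in M$, that $e$ itself has either both endnodes in $\tilde{T}'$ or none, which is $M$-compatibility outright. Your detour is harmless and non-circular, just slightly longer than necessary.
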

\begin{proof}
Let $e \in M$.
If $h(e)=e$ then the statement holds, so assume that $h(e)=\tilde{e}=ab'$ is as in Line~3 
in Algorithm~\ref{alg:tF}, 
and $\tilde{D} \in \tilde{\cal D}$ is the corresponding $3$-leaf tree with leaves $a,b,b'$.
By Claim~\ref{c:disjoint}, either $\tilde{T}'$ properly contains $\tilde{D}$, 
or $\tilde{T}',\tilde{D}$ are node disjoint.
In the former case, each of $e,h(e)$ has both endnodes in $\tilde{T}'$, 
while in the later case none of $e,h(e)$ has an endnode in $\tilde{T}'$.
In particular, any $e \in M$ either has both endnodes in $\tilde{T}'$ or has no 
endnode in $\tilde{T}'$; hence $\tilde{T}'$ is $M$-compatible.
\end{proof}

\begin{claim} \label{c:size}
$\tilde{T}'$ is semi-closed (w.r.t. to $M$), 
$\tilde{I}'$ is an exact cover of $\tilde{T}'$,
and $|\tilde{M}(\tilde{T}')|=|M(\tilde{T}')|$.
\end{claim}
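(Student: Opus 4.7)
The plan is to verify the three assertions in order, starting with semi-closedness, which is the heart of the matter. $M$-compatibility is already provided by Claim~\ref{c:f}, so what remains is to show that every $M$-unmatched leaf $\ell$ of $\tilde{T}'$ has no link of $E$ leaving $\tilde{T}'$. I would split on whether $\ell$ is also $\tilde{M}$-unmatched. If so, the desired conclusion follows immediately from the fact that $\tilde{T}'$ is semi-closed with respect to $\tilde{M}$ by construction. Otherwise $\ell$ is $M$-unmatched yet $\tilde{M}$-matched, and from the description of the switch in Line~3 of Algorithm~\ref{alg:tF}, this forces $\ell$ to play the role of ``$a$'' in some $3$-leaf dangerous $\tilde{D}\in\tilde{\cal D}$. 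Since $\ell\in\tilde{T}'\cap\tilde{D}$, Claim~\ref{c:disjoint} gives $\tilde{D}\subseteq\tilde{T}'$, and so the original tree $D\in{\cal D}$ associated to $\tilde{D}$ (obtained by uncontracting the twin link when $D$ is the $4$-leaf case) also lies inside $\tilde{T}'$. The vertex $\ell$ is an $M$-unmatched leaf of the minimally semi-closed tree $D$, so no link of $E$ at $\ell$ can escape $D$, hence none can escape $\tilde{T}'$.

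The remaining two assertions are short. By the choice made in Line~4 of Algorithm~\ref{alg:tF}, $\tilde{T}'$ is minimally semi-closed in $\tilde{T}$ with respect to $\tilde{M}$, so Lemma~\ref{l:up} applied inside $\tilde{T}$ with the matching $\tilde{M}$ gives that $\tilde{I}'=\tilde{M}(\tilde{T}')\cup up(\tilde{U}')$ is an exact cover of $\tilde{T}'$. For the cardinality identity $|\tilde{M}(\tilde{T}')|=|M(\tilde{T}')|$ I would invoke the bijection $h:M\to\tilde{M}$ defined just before Claim~\ref{c:f}; that claim already asserts that $e$ and $h(e)$ share the status of having either both endnodes in $\tilde{T}'$ or no endnode in $\tilde{T}'$, so $h$ restricts to a bijection $M(\tilde{T}')\to\tilde{M}(\tilde{T}')$, whence the equality of sizes.

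The main obstacle is the first assertion, and specifically the dichotomy on $\ell$. The delicate points are, first, isolating the fact that the only $M$-unmatched leaves of $\tilde{T}'$ that fail to be $\tilde{M}$-unmatched are precisely the $a$-vertices of switched trees in $\tilde{\cal D}$ (which is just bookkeeping around Line~3 of the algorithm), and second, translating the containment $\tilde{D}\subseteq\tilde{T}'$ in $\tilde{T}$ back into a containment of the uncontracted $D$ inside $\tilde{T}'$ so that the already-known semi-closedness of $D\in{\cal D}$ in $T/I$ can validly be reused to handle the link structure at $\ell$.
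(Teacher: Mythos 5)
Your argument is correct and follows the paper's own proof in all essentials: $M$-compatibility via Claim~\ref{c:f}, the case split on whether the $M$-unmatched leaf $\ell$ of $\tilde T'$ is also $\tilde M$-unmatched, Claim~\ref{c:disjoint} to place $\tilde D$ inside $\tilde T'$, Lemma~\ref{l:up} for the exact cover, and the bijection $h$ from Claim~\ref{c:f} for $|\tilde M(\tilde T')|=|M(\tilde T')|$. The one place you take a detour is the semi-closedness argument in the $\tilde M$-matched case: rather than tracing back to $D\in{\cal D}$ in $T/I$, the paper simply notes that $\tilde D$ is by definition a semi-closed tree of $\tilde T$ (being a $3$-leaf dangerous tree), hence $\tilde D$ is $a$-closed, and since $\tilde D\subseteq\tilde T'$ the up-node of $a=\ell$ is inside $\tilde T'$, which is all that is needed. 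Your route is repairable — closedness at $\ell$ in $D$ does survive contraction by $\tilde W$ — but as written the phrase that ``the original tree $D\in{\cal D}$ \dots also lies inside $\tilde T'$'' is a type mismatch, since $D$ is a subtree of $T/I$ while $\tilde T'$ is a subtree of $\tilde T$; you would need to say either that $\tilde D$ (the image of $D$) lies in $\tilde T'$, or that $D$ lies in the uncontraction $T'$ of $\tilde T'$.
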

\begin{proof}
By Claim~\ref{c:f}, $\tilde{T}'$ is $M$-compatible. Hence to show that 
$\tilde{T'}$ is semi-closed we need to show that $\tilde{T}'$ 
is $a$-closed for any its leaf $a$ unmatched by $M$.
If $a$ is unmatched by $\tilde{M}$, then this is so since $\tilde{T'}$ is semi-closed w.r.t. $\tilde{M}$.
Otherwise, $a$ is a leaf in a $3$-leaf dangerous tree $\tilde{D}$ as in Definition~\ref{d:dangerous}(i),
and $\tilde{T}'$ contains $\tilde{D}$, by Claim~\ref{c:disjoint}.
As $\tilde{D}$ is $a$-closed, so is $\tilde{T}'$.
Applying Lemma~\ref{l:up} on $\tilde{T}'$ and $\tilde{M}$, 
we get that $\tilde{I}'$ is an exact cover of $\tilde{T}'$.
The equality $|\tilde{M}(\tilde{T}')|=|M(\tilde{T}')|$
follows from Claim~\ref{c:f}.
\end{proof}

Now let us consider the pair $T',I'$ returned by the algorithm.  
Recall that $\tilde{T}'$ is obtained from $T'$ by contracting the links in $\tilde{W}'$,
where $\tilde{W}'$ is the set of links in $\tilde{W}$ with both endnodes in $T'$.

\begin{claim}
$T'$ is semi-closed and not dangerous.
\end{claim}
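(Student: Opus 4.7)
The plan is to verify the two conclusions separately: first that $T'$ is semi-closed w.r.t.\ $M$, then that $T'$ does not fit Definition~\ref{d:dangerous}. Both arguments will lift the corresponding facts already established for $\tilde{T}'$ along the contraction map $T/I \to \tilde{T}$ defined by $\tilde{W}$, exploiting the observation that the only new leaves appearing when passing from $\tilde{T}'$ to $T'$ come from the twins of $4$-leaf members of ${\cal D}$, which are themselves semi-closed.

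For semi-closedness of $T'$, I would handle $M$-compatibility by reducing to Claim~\ref{c:f}: an endpoint $u$ of $e \in M$ lies in $T'$ iff its image $\bar{u}$ in $\tilde{T}$ lies in $\tilde{T}'$, and because $M$ contains no twin links the two endpoints of $e$ map to distinct nodes of $\tilde{T}$. If neither endpoint is a twin of some $4$-leaf $D \in {\cal D}$, Claim~\ref{c:f} transports directly; otherwise, one endpoint lies in some such $D$, and since $D$ is itself $M$-compatible, both endpoints lie in $D$, so they lie in $T'$ simultaneously (precisely when $a_D$ is a leaf of $\tilde{T}'$). Closure w.r.t.\ unmatched leaves splits analogously: every unmatched leaf of $T'$ is either a leaf of $\tilde{T}'$ (handled by Claim~\ref{c:size}) or the unique unmatched twin of a $4$-leaf $D \subseteq T'$ in ${\cal D}$ (handled by the semi-closedness of $D$).

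The main obstacle is showing $T'$ is not dangerous, and the difficulty concentrates in proving that $\tilde{T}'$ must strictly contain some element of $\tilde{\cal D}$, so that Claim~\ref{c:disjoint} applies. I plan to argue this by contradiction. If $\tilde{T}'$ were disjoint from every $\tilde{D} \in \tilde{\cal D}$, then $M$ and $\tilde{M}$ would agree on $\tilde{T}'$, the preimage $T^{*}$ of $\tilde{T}'$ in $T/I$ would be isomorphic to $\tilde{T}'$, and $T^{*}$ would inherit from $\tilde{T}'$ the property of being minimally semi-closed -- now w.r.t.\ $M$ rather than $\tilde{M}$. The running hypothesis then forces $T^{*}$ to be dangerous, but a $3$-leaf dangerous $T^{*}$ lies in $\tilde{\cal D}$ unchanged (since the contraction $T/I \to \tilde{T}$ fixes $3$-leaf trees), while a $4$-leaf dangerous $T^{*}$ contains the twin link of its stem, which belongs to $\tilde{W}$ and therefore forces its contracted image $a_{T^{*}}$ to lie in $\tilde{T}'$; both outcomes contradict the supposed disjointness of $\tilde{T}'$ from $\tilde{\cal D}$.

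With $\tilde{T}'$ not dangerous and $|L(\tilde{T}')| \geq 4$ in hand (both obtained from Claim~\ref{c:disjoint}), the leaf count $|L(T')| = |L(\tilde{T}')| + |\tilde{W}'|$ finishes the proof: if $\tilde{W}' \neq \emptyset$ then $|L(T')| \geq 5$, which exceeds the four-leaf cap in Definition~\ref{d:dangerous}; and if $\tilde{W}' = \emptyset$ then $T' = \tilde{T}'$ inherits non-dangerousness directly.
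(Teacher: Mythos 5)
Your first two steps are sound, but they travel a longer road than the paper's, and your final step has a genuine gap.

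On step two (establishing that $\tilde{T}'$ contains some $\tilde{D}'\in\tilde{\cal D}$), the paper gets there in one line: once you know $T'$ is semi-closed w.r.t.\ $M$, it must contain some minimally semi-closed tree $D'\in{\cal D}$ (that is what ``minimally'' means), and contracting $\tilde{W}$ sends $D'\subseteq T'$ to $\tilde{D}'\subseteq\tilde{T}'$. Your contradiction argument (supposing $\tilde{T}'$ avoids $\tilde{\cal D}$, building the preimage $T^{*}$, showing $T^{*}\in{\cal D}$, splitting on $|L(T^{*})|$) is not wrong, but it rederives facts that the definitions already hand you, and it needs care about the $4$-leaf case precisely because $T^{*}$ then \emph{cannot} be isomorphic to its image; the paper sidesteps all this.

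The real gap is in the last step. You write ``With $\tilde{T}'$ not dangerous and $|L(\tilde{T}')|\geq 4$ in hand (both obtained from Claim~\ref{c:disjoint})'' --- but Claim~\ref{c:disjoint} gives only non-dangerousness; it does not give a lower bound of $4$ on $|L(\tilde{T}')|$. Indeed $\tilde{T}'$ can properly contain $\tilde{D}'$ while having the same three leaves (e.g.\ $\tilde{T}'$ rooted at a proper ancestor of $\tilde{D}'$'s root), and the proof of Claim~\ref{c:larger-tree} explicitly treats this $3$-leaf case. With $|L(\tilde{T}')|=3$ and $|\tilde{W}'|=1$ you get $|L(T')|=4$, which does \emph{not} exceed the four-leaf cap of Definition~\ref{d:dangerous}, so your leaf-count argument does not close the case. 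What one needs here is a structural observation rather than a counting one: if $T'$ were a $4$-leaf dangerous tree, then by Definition~\ref{d:dangerous}(ii) contracting the twin link of its (unique) stem would yield a $3$-leaf dangerous tree; but that contraction is exactly what produces $\tilde{T}'$ (the single link of $\tilde{W}'$ is that twin link), and $\tilde{T}'$ is not dangerous --- contradiction. Similarly, a $3$-leaf dangerous $T'$ has no stem, forcing $\tilde{W}'=\emptyset$ and $T'=\tilde{T}'$, again contradicting non-dangerousness of $\tilde{T}'$. Replace the leaf count with this direct use of Definition~\ref{d:dangerous} and the proof is complete.
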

\begin{proof}
Since $\tilde{T}'$ is semi-closed, so is $T'$.
This implies that $T'$ contains some $D' \in {\cal D}$, hence $\tilde{T}'$ contains $\tilde{D}'$.
By Claim~\ref{c:disjoint}, $\tilde{T}'$ is not dangerous. 
This implies that $T'$ is not dangerous.
\end{proof}

\begin{claim}
$I'$ is an exact cover of $T'$ of size $|I'|=|M'|+|U'|$.
\end{claim}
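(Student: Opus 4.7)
My plan is to split the claim into two independent parts — exactness of the cover and the size identity — and verify each by combining Claim~\ref{c:size} with the observation that $T'$ is recovered from $\tilde{T}'$ by uncontracting the twin links in $\tilde{W}'$.

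For exactness, I would argue that the edges of $T'$ fall into two classes. Each edge that survives the contraction corresponds bijectively to an edge of $\tilde{T}'$ and is therefore covered by $\tilde{I}'$ by Claim~\ref{c:size}. The remaining edges are precisely the two tree edges $(t_1,s),(s,t_2)$ reappearing when each twin link $t_1t_2 \in \tilde{W}'$ is uncontracted, and these are covered by the twin link itself. No edge outside $T'$ is covered because every link used has both endpoints inside $T'$: for $\tilde{I}'$ this is guaranteed by its construction in line~5 of Algorithm~\ref{alg:tF}, and for $\tilde{W}'$ by the definition of a twin link.

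For the size, I first plan to establish disjointness of $\tilde{I}'$ and $\tilde{W}'$: the twin links in $\tilde{W}'$ become self-loops in $\tilde{T}$ after the contraction step, so they appear neither among the matching links in $\tilde{M}(\tilde{T}')$ (a matching is between distinct leaves) nor among the up-links in $up(\tilde{U}')$ (which go from a leaf to a non-leaf ancestor). Hence $|I'| = |\tilde{I}'| + |\tilde{W}'|$, and by Claim~\ref{c:size}, $|\tilde{I}'| = |\tilde{M}(\tilde{T}')| + |\tilde{U}'| = |M(\tilde{T}')| + |\tilde{U}'|$. Since $\tilde{W}$ is disjoint from $M$ and $T'$ is $M$-compatible (Claim~\ref{c:f}), every match in $M'$ has both endpoints among the leaves of $\tilde{T}'$ (possibly via the contracted nodes), which gives $|M(\tilde{T}')| = |M'|$.

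The remaining identity $|\tilde{U}'| = |U'| - |\tilde{W}'|$ is the crux of the count. For each twin link in $\tilde{W}'$, contracting it in $T'$ merges its two twin leaves into a single leaf of $\tilde{T}'$; by Definition~\ref{d:dangerous}(ii) exactly one of these twins is $M$-matched, so the contracted leaf inherits that match while the other (unmatched) twin disappears, dropping the $M$-unmatched leaf count by exactly one. The $M\to\tilde{M}$ swap on each $\tilde{D} \in \tilde{\cal D}$ exchanges one $M$-unmatched leaf with one $M$-matched leaf inside $\tilde{D}$, preserving the total count, so $|\tilde{U}'|$ equals the $M$-unmatched leaf count of $\tilde{T}'$, namely $|U'|-|\tilde{W}'|$. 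Substituting gives $|I'| = |M'| + (|U'|-|\tilde{W}'|) + |\tilde{W}'| = |M'|+|U'|$. I expect the main obstacle to be this last bookkeeping — tracking the matching counts precisely through both the contraction of $\tilde{W}'$ and the $M\to\tilde{M}$ swap — but once Definition~\ref{d:dangerous}(ii) is invoked to pin down which twin is matched, the count proceeds mechanically.
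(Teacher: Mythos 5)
Your proof is correct and follows essentially the same route as the paper: exactness comes from $\tilde{I}'$ being an exact cover of $\tilde{T}'$ (Lemma~\ref{l:up} / Claim~\ref{c:size}) together with the twin links of $\tilde{W}'$ covering the uncontracted path, and the size identity comes from $|\tilde{M}(\tilde{T}')|=|M'|$ and $|\tilde{U}'|=|U'|-|\tilde{W}'|$. Your argument is a bit more explicit than the paper's, in particular in justifying that the $M\to\tilde{M}$ swap in each $\tilde{D}\subseteq\tilde{T}'$ exchanges one matched leaf for one unmatched leaf and therefore leaves the unmatched count unchanged, a bookkeeping step the paper leaves implicit.
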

\begin{proof}
By Lemma~\ref{l:up}, $\tilde{I}'$ is an exact cover of $\tilde{T}'$, 
and $\tilde{T}'$ is obtained from $T'$ by contracting $\tilde{W}'$. 
Thus $I'=\tilde{I}' \cup \tilde{W}'$ is an exact cover of $T'$.
To see that $|I'|=|M'|+|U'|$, note that by Claim~\ref{c:size} $|M'|=|\tilde{M}(T')|$,
and that $|\tilde{U}'|=|U'|-|\tilde{W}'|$, since 
the contraction of every link in $\tilde{W}'$ does not change $|M'|$ but reduces $|U'|$ by $1$.
Thus 
$$|I'|=|\tilde{I}'|+|\tilde{W}'|=|\tilde{U}'|+|\tilde{M}'|+|\tilde{W}'|=|U'|+|M'|$$
as claimed.
\end{proof}

The proof of Lemma~\ref{l:B'} is now complete. It remains only to prove Lemma~\ref{l:char}.

\section{Proof of Lemma~\ref{l:char}} \label{s:char}

\subsection{Overview}

The proof of Lemma~\ref{l:char} is long and non-trivial, so we will give its intuitive overview.
Let $T'$ be a semi-closed tree with root $v$. 
In addition to the notation $M',U',C',L',S'$ established in previous sections, let  
$X'=X(T')$ denote the set of (original) nodes in $X$ that belong to $T'$, and let
$$S'_1=\{s \in S':\mbox{the twin-link of } s \mbox{ is in } F\} \ .$$
Note that by Claim~\ref{c:twins} $S'_1=\{s \in S':d_F(s)=1\}$
and that $d_F(s)=0$ for every $s \in S' \setminus S'_1$.

In what follows assume that $T'$ is deficient, so $credit(T')-(|M'|+|U'|)<1$.
Note that $coupons(T')-(|U'|+|M'|) \geq \frac{1}{2}|M'|+|C'|$.
Thus we must have $|C'|=0$ and $|M'| \leq 1$.

Let us focus on the main case $|M'|=1$. Then $T'$ cannot have a ticket,
since the unique link in $M'$ already carries a surplus of $1/2$ credit unit, 
and any ticket gives another $1/2$ extra credit.
Furthermore, we will show that for any $x \in X'$, $d_F(x) \geq 1$ implies $d_J(x) \geq 1$, 
hence in the case $|M'|=1$ we do not need to worry about links incident to locked leaves 
(since just one ticket makes $T'$ non-deficient).

We will show that deficient trees with $|M'|=1$ are ``small'' -- have at most $4$ leaves.
To establish this, we look at the links that cover the set $U'$ of the unmatched leaves of $T'$.
No such link connects two leaves in $U'$, as we assume that all greedy link contractions are exhausted.
Thus there are at least $|U'|$ such links.
Each of these links has both endnodes in $T'$, since $T'$ is $U'$-closed.
In addition, there is a link that covers the edge between the root $v$ of $T'$ and the parent of $v$.
This link has no endnode in $U'$, since $T'$ is $U'$-closed.
We thus have a set of $|U'|+1$ links that have an endnode in $T' \setminus U'$.
To avoid a ticket, such links cannot have an endnode in $X'$.
It follows therefore that each of these $|U'|+1$ links is incident 
to a node in $S' \cup \{b,b'\}$, where $bb'$ is the unique link in $M'$. 
However, $d_F(y) \leq 1$ for every node $y \in S' \cup \{b,b'\}$ (by Claims \ref{c:shm} and \ref{c:twins}),
so we get that $|U'|+1 \leq 2|M'|+|S'|=2+|S'|$, and thus $|U'| \leq |S'|+1$.

Note that every stem has a leaf matched by $M$, hence $|S'| \leq 2|M'| \leq 2$.
We will show that if $|S'|=2$ then $|N(T')| \geq 1$, which gives us a ticket.
Consequently, we get that $|S'| \leq 1$.
This implies $|U'| \leq |S'|+1 \leq 2$, and since $|L'|=2|M'|+|U'| \leq |S'|+3$, 
we get that $|L'| \leq 4$, and if $|L'|=4$ then $|S'|=1$. 

We will show that contracting the twin link of the stem of a $4$-leaf deficient tree
results in a $3$-leaf deficient tree, and we pin down how $3$-leaf deficient trees with $|M'|=1$ look like.
Finally, we will exclude the case $|L'|=2$ by showing that in this case $T'$ must have a compound node.

We will refute the case $|M'|=0$ by showing that then $T'$ has $2$ tickets 
(this is the only place where we need to be careful about links incident to locked leaves).

The rest of this section is organized as follows. 
In the next Section~\ref{ss:tickets} we will explain how we claim tickets.
Then in Section~\ref{ss:M-inv} we will derive some properties of $T'$, 
and show that either $|M'|=|S'|=0$, or  
$|M'|=1$, $|S'| \leq 2$, and $|L'| \leq |S'_1|+3$, see Lemma~\ref{l:3-lvs}.
In Section~\ref{ss:char} we finish the proof of Lemma~\ref{l:char}.
We first show that if $|M'|=1$ then $|S'_1| \leq 1$ and thus $|L'| \leq 4$, see Claim~\ref{c:s'}.
Then we prove that if $|L'| \in \{3,4\}$ then $T'$ is dangerous. 
Finally, we refute the cases $|L'|=2$ and $|M'|=0$.

\subsection{Claiming tickets} \label{ss:tickets}

In order to prove Lemma~\ref{l:char} we will exclude some trees $T'$ by claiming that $T'$ has tickets.
Note that we do not need to specify the node or the link on which the ticket is claimed.
All we care about is that $T'$ has a ticket, for {\em any} possible choice of $F$.
To claim a ticket in a rooted subtree $T'$ of $T/I$, 
we prove that for {\em any} choice of $F$, one of the following must hold:
\begin{itemize}
\item[$\bullet$]
{\em $N$-ticket}: 
There is a link $bb' \in M$ with $b,b' \in T'$ 
such that each of $b,b'$ is an endnode of a twin link in $F$ (see Fig.~\ref{f:NX}(a)); 
note that indeed $bb' \in N$ ($N$ is defined in Lemma~\ref{l:lb}), 
since $d_F(a)=1$ for every original leaf $a$ (Claim~\ref{c:shm}).
\item[$\bullet$]
{\em $X$-ticket}: 
There is an {\em original node} $x \in X \cap T'$ and a link $e=xa' \in F$,
where $a'$ is a (possibly compound) node of $T/I$, such that 
the original endnode $a$ of $e$ contained in $a'$ (possibly $a=a'$) 
is not a locked leaf (see Fig.~\ref{f:NX}(b)).
\end{itemize}

\begin{figure} 
\centering 
\epsfbox{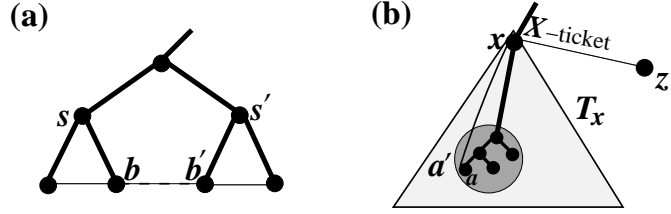}
\caption{Ticket types. Links in $F'$ are shown by solid thin lines, and links in $M$ by dashed lines.
        (a) The link $bb'$ has an $N$-ticket. 
        (b) The node $x \in X$ has an $X$-ticket (due to the link $xz$), 
            even if the original endnode $a$ of the link $a'x$ (here $a'$ is a compound node) is a locked leaf.
         }
   \label{f:NX}
\end{figure}

The following statement is useful for claiming an $X$-ticket.

\begin{claim} \label{c:X-link}
Let $I$ be an arbitrary link set and let $x \in X \cap T/I$. 
Let $e=a'x \in F$ where $a'$ is a node of $T/I$.
If the original endnode $a$ of $e$ contained in $a'$ (possibly $a=a'$) is a locked leaf
then $a' \in T_x$ and $x$ has a ticket for some link $xz$ with $z \notin T_x$.
\end{claim}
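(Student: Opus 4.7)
The plan is to derive Claim~\ref{c:X-link} as a short corollary of Claim~\ref{c:lock-ok}. Unpacking the hypothesis, ``$e=a'x\in F$'' really means that there is an original link $ax\in F$, where $a$ is the unique original endnode of $e$ contained in the (possibly compound) node $a'\in T/I$, and by assumption $a$ is a locked leaf. Let $T_v$ be the locking tree of $a$, with leaves $\{a,b,b'\}$ and let $u$ be the least common ancestor of $a,b,b'$ (as in Definition~\ref{d:lock}). Since $x\in X$ is not a leaf, $x\notin\{b,b'\}$, so I can apply Claim~\ref{c:lock-ok} to the link $ax$; this yields that $x$ is a proper ancestor of $u$ and that there exists a link $xz\in F$ with $z$ not a locked leaf and $z\notin T_v$.

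For the first conclusion $a'\in T_x$, I would note that $u$ is an ancestor of $a$ and $x$ is a proper ancestor of $u$, so $a$ is a descendant of $x$ in $T$. Since $x\in X\cap T/I$ is an uncontracted node, the node $a'$ of $T/I$ containing $a$ lies strictly below $x$ in $T/I$, so $a'\in T_x$.

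For the second conclusion, I first upgrade ``$z\notin T_v$'' to ``$z\notin T_x$''. Because $T_v$ is $a$-closed, every link incident to $a$ has its other endnode in $T_v$; in particular $x\in T_v$, so $x$ is a descendant of the root $v$ of $T_v$, giving $T_x\subseteq T_v$. Hence $z\notin T_v$ implies $z\notin T_x$, both in $T$ and (since $x$ is uncontracted) in $T/I$. The link $xz$ is then incident to $x$, which, not being a leaf, is certainly not a locked leaf, and to $z$, which is not a locked leaf by Claim~\ref{c:lock-ok}; thus $xz\in J$. Consequently $d_J(x)\geq 1$, and this ticket at $x$ is witnessed by the link $xz$ with $z\notin T_x$, as required.

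The only mild obstacle is keeping the bookkeeping between subtrees in $T$ and in $T/I$ straight; this is resolved by observing that $x$ has not been contracted, so rooted subtrees $T_x$ in $T$ and in $T/I$ correspond under the obvious map, and the two views of ``$z\notin T_x$'' agree.
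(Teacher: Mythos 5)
Your proposal is correct and matches the paper's proof, which likewise reduces the claim to an application of Claim~\ref{c:lock-ok} and then transfers the conclusions from $T$ to $T/I$ using the fact that $x$ is uncontracted. Your version is somewhat more explicit than the paper's about why $T_x\subseteq T_v$ (via $a$-closedness of $T_v$) so that $z\notin T_v$ upgrades to $z\notin T_x$, and about why the map $T\to T/I$ preserves the subtree $T_x$; these are details the paper elides but your reasoning for them is sound.
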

\begin{proof}
By Claim~\ref{c:lock-ok}, $x$ is an ancestor of $a$ in $T$, 
and there is a link $xz \in F$ such that $z$ is not a locked leaf. 
This implies that $x$ is an ancestor of $a'$ in $T/I$.
Furthermore, $x$ has a ticket for the link $xz$, since $z$ is not a locked leaf.
\end{proof}

Claim~\ref{c:X-link} has the following immediate two important consequences. 

\begin{corollary} \label{c:a-tick}
Let $I$ be an arbitrary link set and let $x \in X \cap T/I$. Then:
\begin{itemize}
\item[{\em (i)}]
If $d_F(x) \geq 1$ then $d_J(x) \geq 1$, and thus $x$ has an $X$-ticket.
\item[{\em (ii)}]
Every link $xz \in F$ with $z \notin T_x$ contributes an $X$-ticket to $x$.
\end{itemize}
\end{corollary}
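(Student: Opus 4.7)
The plan is to derive both parts as quick consequences of Claim~\ref{c:X-link} (which itself rests on Claim~\ref{c:lock-ok}); no new combinatorial work is needed, only a careful match between the definitions of $J$, $X$-ticket, and ``locked leaf''.

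For part (i), I would pick any link $e=xa'\in F$ guaranteed by $d_F(x)\geq 1$, and let $a$ be the original endnode of $e$ contained in $a'$. There are two cases. If $a$ is not a locked leaf, then $e$ itself lies in $J$ (since $x\in X$ is not a leaf either, $e$ has no locked-leaf endnode), so $d_J(x)\geq 1$ and $e$ directly provides an $X$-ticket at $x$. If $a$ is a locked leaf, Claim~\ref{c:X-link} supplies another link $xz\in F$ with $z$ not a locked leaf; again $xz\in J$, giving $d_J(x)\geq 1$ and an $X$-ticket. Either way the conclusion holds.

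For part (ii), let $xz\in F$ with $z\notin T_x$. The only thing to check is that the original endnode of this link opposite to $x$ (namely $z$) is not a locked leaf, for then the link itself fulfils the defining condition of an $X$-ticket at $x$. Suppose for contradiction that $z$ is a locked leaf, locked by some link in its locking tree. Applying Claim~\ref{c:lock-ok} to the locked leaf $z$ and the link $zx\in F$, we would conclude that $x$ is a proper ancestor of the least common ancestor of $z$ and its twin/locker, so in particular $z\in T_x$, contradicting the hypothesis $z\notin T_x$. Hence $z$ is not locked, and $xz$ contributes an $X$-ticket to $x$.

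The only subtlety is bookkeeping between the tree $T$ and its contraction $T/I$: the link $xz$ may have an endpoint that is a compound node $z'\supseteq\{z\}$ in $T/I$, but the definition of $X$-ticket refers to the original endnode of the link, which is $z$; this is exactly the object to which Claim~\ref{c:lock-ok} applies. There is no genuine obstacle — both items are essentially one-line deductions once Claim~\ref{c:X-link} is in hand.
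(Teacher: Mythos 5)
Your proposal is correct and follows the paper's intended route: the paper presents the corollary as an immediate consequence of Claim~\ref{c:X-link} (itself a wrapper around Claim~\ref{c:lock-ok}), and your case split for (i) and contrapositive for (ii) are exactly the fleshed-out version of that, with the $T$ versus $T/I$ bookkeeping properly flagged.
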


Part~(i) of Corollary~\ref{c:a-tick} implies that if we need to show existence of only one $X$-ticket in $T'$, 
then it is sufficient to prove that there is $x \in X'$ with $d_F(x) \geq 1$.
Part~(ii) of Corollary~\ref{c:a-tick} justifies claiming a ticket for {\em every} link $xz \in F$  
with $x \in X'$ and $z \notin T'$.
In the case $|M'| \geq 1$ we will use part~(i) only, since in this case 
existence of just one $X$-ticket in $T'$ makes $T'$ non-deficient.
In the case $|M'|=0$ we need to show existence of two $X$-tickets in $T'$,
but then we will use the Partial Solution Invariant (see Claim~\ref{c:X-ticket}) 
and part~(ii) of the corollary to show directly that no ticket is claimed for a link incident to a locked leaf.

\begin{figure}
\centering 
\epsfbox{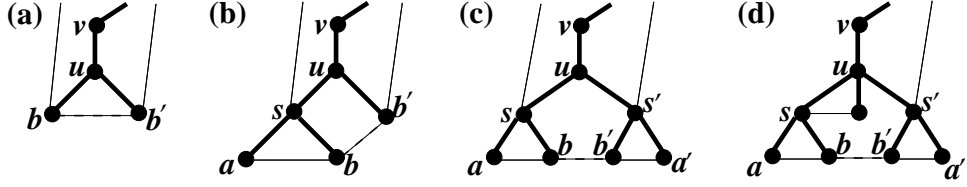}
\caption{Problematic trees. Links in $F'$ are shown by solid thin lines, and links in $M$ by dashed~lines.}
\label{f:bad-trees}
\end{figure}

We note that the reason we excluded twin links and locking links from $M$, and needed $N$-tickets,
was to avoid four specific ``{\em problematic trees}'' depicted in Fig.~\ref{f:bad-trees} 
of being deficient; assume that these trees have no non-leaf compound node. 
\begin{itemize}
\item[(a)]
If $M$ could include a twin link, $T',F'$ could be as in Fig.~\ref{f:bad-trees}(a)
(here $u$ may or may not be a stem). 
But, as we will show later, since $M$ has no twin link, then the Partial Solution Invariant 
implies that such $T'$ must contain a compound node, and thus cannot exist, see Claim~\ref{c:new-leaf}.
\item[(b)]
If $M$ could include locking links, $T',F'$ could be as in Fig.~\ref{f:bad-trees}(b),
where $s$ is a stem. 
But since $M$ has no locking link, such $T'$ cannot exist either.
\item[(c)]
Without the $N$-tickets, $T',F'$ could be as in Fig.~\ref{f:bad-trees}(c,d),
where $s,s'$ are stems.
Such $T'$ is not dangerous, since it has $2$ stems. 
Note that this $T'$ has an $N$-ticket. 
Thus $credit(T')=|U'|+3/2+1/2=|U'|+2$, while $|M'|+|U'|=|U'|+1$; 
consequently, such $T'$ is not deficient. 
\end{itemize}

\subsection{Properties of $T'$ under the Matching Invariant} \label{ss:M-inv}

Note that $M$ is a matching on the leaves of $T$, and under the Partial Solution Invariant, 
$M$ remains a matching on the leaves of $T/I$ and every leaf of $T/I$ matched by $M$
is an {\em original leaf}; this is so since we contract only $M$-compatible trees.
In particular, $d_F(b)=1$ for every $b \in L' \setminus U'$, by Claim~\ref{c:shm}
(note that $L' \setminus U'$ is the set of leaves of $T/I$ matched by $M$).
Recall that a $4$-leaf dangerous tree is ``reduced'' to a $3$-leaf dangerous tree
by contracting one twin link (see Fig.~\ref{f:def}).
However, this contraction is not $M$-compatible.
In order to use such a reduction, in this section we will temporarily 
replace the Partial Solution Invariant by the following weaker invariant,
where matched nodes of $T/I$ are leaves of $T/I$, but some of them may be compound nodes.

\begin{invariant} [(Matching Invariant)] \label{inv:M}
$M$ is a matching on the leaves of $T/I$ and $d_F(b)=1$ for every leaf $b$ of $T/I$ matched by $M$. 
\end{invariant}

The main purpose of this section is to prove the following properties of $T',F'$.

\begin{lemma} \label{l:3-lvs}
Suppose that the Credit Invariant and the Matching Invariant hold for $T$, $M$, and $I$,
and that $T/I$ has no link greedy contraction.
Then $|C'|=0$ (so $r \notin T'$) and either $|M'|=|S'|=0$, or 
$|M'|=1$, $|S'| \leq 2$, and $|L'| \leq |S'_1|+3$.
Furthermore, if $|L'|=3$ and $|M'|=1$, then one of the following holds: 
\begin{itemize}
\item[{\em (i)}]
$T',F'$ are as in Fig.~\ref{f:bad-trees}(b) (so $T'$ is a problematic tree with $3$ leaves), namely:
$T'$ has a unique stem $s$ with leaf descendants $a,b$, and another leaf $b'$,
such that $M'=\{bb'\}$ and $\{ab,sz,b'z'\} \subseteq F$ for some $z,z' \notin T'$.
\item[{\em (ii)}]
$T'$ is dangerous.
\end{itemize}
\end{lemma}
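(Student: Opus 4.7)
The plan starts from the deficiency hypothesis $credit(T') < |M'| + |U'| + 1$ together with the coupon lower bound $coupons(T') \geq |U'| + |C'| + \tfrac{3}{2}|M'|$ implied by Credit Invariant~\ref{inv:coupons}(i). Using $credit = coupons + \tfrac{1}{2}\,tickets$, these rearrange to $|C'| + \tfrac{1}{2}|M'| + \tfrac{1}{2}\,tickets(T') < 1$, which forces $|C'| = 0$ (so in particular $r \notin T'$) and $|M'| \leq 1$ at once. When $|M'| = 1$ the same inequality also forces $tickets(T') = 0$, so $T'$ carries no $X$-ticket, and Corollary~\ref{c:a-tick}(i) yields $d_F(x) = 0$ for every $x \in X'$.

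The next step controls $|L'|$ by counting $F$-links touching $T' \setminus U'$. Since greedy link contractions have been exhausted, no link of $F$ joins two nodes of $U'$, so the $|U'|$ links of $F$ covering the unmatched leaves of $T'$ are distinct and each contributes an endnode to $T' \setminus U'$ by $U'$-closedness; a further link of $F$ covering the parent edge of the root $v$ contributes one additional endnode to $T' \setminus U'$ (its in-$T'$ endnode cannot be in $U'$ by $U'$-closedness). Thus $\sum_{y \in T' \setminus U'} d_F(y) \geq |U'| + 1$. In the $|M'| = 1$ case, $|C'| = 0$ and $d_F \equiv 0$ on $X'$ restrict these endnodes to $(L' \setminus U') \cup S' = \{b,b'\} \cup S'$; Claim~\ref{c:shm} gives $d_F(b) = d_F(b') = 1$ and Claim~\ref{c:twins} gives $d_F(s) = 1$ iff $s \in S'_1$, else $0$. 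Hence $|U'| + 1 \leq 2 + |S'_1|$, yielding $|L'| = 2 + |U'| \leq |S'_1| + 3$.

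For the stem bound, the twin pair of each $s \in S'$ is a pair of two distinct leaves of $T'$; distinct stems give disjoint twin pairs; and no such pair can equal $\{b, b'\}$, because $M$ contains no twin link. Hence each stem contributes at least one unmatched leaf, giving $|S'| \leq |U'|$; combined with $2|S'| \leq |L'| \leq |S'_1| + 3 \leq |S'| + 3$ and $|U'| \leq |S'_1| + 1$, a short numerical check rules out $|S'| \geq 3$, so $|S'| \leq 2$. For the subcase $|M'| = 0$ we have $L' = U'$, so any stem of $T'$ would have both its twin leaves unmatched and its twin link would furnish a greedy link contraction — contradicting the exhaustion assumption; hence $|S'| = 0$, giving the first alternative of the lemma.

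The remaining step is the classification when $|L'| = 3$ and $|M'| = 1$, so $|U'| = 1$; write $L' = \{a, b, b'\}$ with $a \in U'$ and $bb' \in M'$. If $|S'| = 0$, the unique $F$-link at $a$ cannot land in $X'$ (no $X$-ticket), cannot be a self-loop, and has no stem or compound target, so it lands at $b$ or $b'$; after relabelling, $ab' \in F \subseteq E$. A second $F$-link covers the parent edge of $v$, and since $d_F(b') = 1$ is already saturated by $ab'$, its in-$T'$ endnode is forced to be $b$, making $T'$ $b$-open; because $|S'| = 0$, $\{a, b'\}$ is not a twin pair, so Definition~\ref{d:dangerous}(i) is met and $T'$ is dangerous. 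If $|S'| = 1$ with stem $s$ and twins $a, b$, Claim~\ref{c:twins} splits into two subcases: either $ab \in F$, in which case pinning down the remaining $F$-links $sz, b'z'$ using $d_F \equiv 0$ on $X'$, $d_F(b') = 1$, and the necessity of covering the parent edge of $v$ forces $z, z' \notin T'$, matching Fig.~\ref{f:bad-trees}(b) exactly — alternative~(i); or $ab \notin F$, so $\{ax, by\} \subseteq F$ with $x, y \notin T_s$, and the no-$X$-ticket condition plus $d_F(b') = 1$ forces $x = b'$ and $y \notin T'$, giving a dangerous tree (the tie-breaking clause of Definition~\ref{d:dangerous}(i) is vacuous because $ab$ is a twin link, so the alternative ordering is disallowed). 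The main obstacle will be this last block of case analysis — cleanly tracking where the remaining endnodes $z, z', x, y$ can land once $X$-tickets are banned, and verifying both the non-twin-link and $b$-open conditions of Definition~\ref{d:dangerous}(i) together with its tie-breaking clause.
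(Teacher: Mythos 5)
Your overall strategy matches the paper's: you derive $|C'|=0$, $|M'|\le 1$, and $tickets(T')=0$ when $|M'|=1$ from the deficiency inequality; you count $F$-links with an endnode in $T'\setminus U'$ to get $|U'|+1\le 2|M'|+|S'_1|$; and you classify the $3$-leaf case by tracking where the saturated endnodes force the remaining $F$-links to land. The $|M'|=0$ case, the counting bound $|L'|\le |S'_1|+3$, and the $3$-leaf classification (including the tie-breaking clause being vacuous because $ab$ is a twin link) are all essentially the paper's argument and look sound.

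There is, however, a genuine gap in your derivation of $|S'|\le 2$ in the $|M'|=1$ case. Your constraints are $|S'|\le |U'|$, $2|S'|\le |L'|\le |S'_1|+3\le |S'|+3$, $|U'|\le |S'_1|+1$, and $|L'|=|U'|+2$. Setting $|S'|=|S'_1|=3$, $|L'|=6$, $|U'|=4$ satisfies all of these simultaneously, so the ``short numerical check'' does not rule out $|S'|=3$; it only gives $|S'|\le 3$. The observation you actually need is the one you did use for $|M'|=0$: since greedy link contractions are exhausted, no stem can have both twin leaves unmatched (their twin link would be a greedy link contraction), so every stem has at least one $M$-matched twin; as twin pairs of distinct stems are disjoint, this gives $|S'|\le 2|M'|=2$ directly. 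In other words, the bound $|S'|\le |U'|$ (each stem has an \emph{unmatched} twin, because $\{b,b'\}$ is not a twin pair) points in the wrong direction; you want the dual bound $|S'|\le 2|M'|$ (each stem has a \emph{matched} twin). With that observation inserted, your argument closes and $|L'|\le 4$ follows cleanly.
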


We emphasize again that in this section we use only the assumptions of Lemma~\ref{l:3-lvs},
and the Partial Solution Invariant is {\em not} assumed. 
(Under the Partial Solution Invariant, $b,b'$ are original leaves,
and case (i) is easily refuted by absence of locking links in $M$, 
as we already did when discussing the problematic tree in Fig.~\ref{f:bad-trees}(b).) 

\begin{claim} \label{c:CM} 
$|C'|=0$ and one of the following holds:
\begin{itemize}
\item[{\em (i)}]
$|M'|=|S'|=0$ and $tickets(T') \leq 1$.
\item[{\em (ii)}]
$|M'|=1$, $|S'| \leq 2$, and $tickets(T')=0$.
\end{itemize}
\end{claim}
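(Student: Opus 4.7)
The plan is to split the proof into an arithmetic rearrangement of the deficiency inequality, followed by a short combinatorial argument bounding $|S'|$ via the exhausted greedy link contractions hypothesis.

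First I would compute $coupons(T')$. Under the Matching Invariant every matched node of $T'$ is a non-compound leaf, so every compound node of $T'$ is either a compound leaf in $U'$ or a non-leaf compound node in $C'$. The Credit Invariant then yields
\[
coupons(T') \;=\; |U'| + |C'| + \tfrac{3}{2}|M'|,
\]
and deficiency $credit(T') < |U'|+|M'|+1$ rearranges to
\[
|C'| + \tfrac{1}{2}|M'| + \tfrac{1}{2}\,tickets(T') \;<\; 1.
\]
Since each of the three summands is a nonnegative half-integer, this forces $|C'|=0$ and $|M'| \le 1$. Reading off what remains: if $|M'|=0$ then $tickets(T') \le 1$, and if $|M'|=1$ then $tickets(T') = 0$.

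Next I would bound $|S'|$ by $2|M'|$, which delivers $|S'|=0$ in case (i) and $|S'|\le 2$ in case (ii). Fix any $s\in S'$ with twins $a,b$, so that $ab\in E$. Since $s\in T'$ and $T'$ is a rooted subtree of $T/I$, the whole subtree of $T/I$ rooted at $s$ lies inside $T'$. I claim that at least one of $a,b$ survives as an original leaf of $T/I$ matched by $M$. For each twin that fails to be such a matched original leaf, one of three things must have happened: it was absorbed into a non-leaf compound node, or into a compound leaf (which is unmatched by the Credit Invariant), or it is an unmatched original leaf of $T/I$. The first alternative produces a non-leaf compound node strictly below $s$ in $T'$, contradicting $|C'|=0$. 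In the remaining alternatives both $a$ and $b$ correspond to \emph{distinct} unmatched leaves of $T/I$ (distinct because $s$ itself remains a separate node of $T/I$, so $a$ and $b$ cannot be hidden in a common contraction) joined by the link $ab\in E$ -- and the greedy link contraction rule would have fired on this pair, contradicting the hypothesis that all greedy link contractions are exhausted.

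Finally, since each leaf of $T$ is the twin of at most one stem (the twin/stem structure is determined by $T$), sending each $s\in S'$ to one of its $M'$-matched twins is an injection from $S'$ into the set of $2|M'|$ endnodes of $M'$, giving $|S'|\le 2|M'|$. The only delicate point I anticipate is the compound-node bookkeeping in the combinatorial step: ensuring that the twins $a,b$ of a stem $s\in T'$ either appear as actual leaves of $T/I$ or immediately produce a contradiction. This is dispatched by combining $|C'|=0$ with the greedy link contraction hypothesis as above; everything else is routine arithmetic.
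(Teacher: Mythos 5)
Your route matches the paper's: compute $coupons(T')$, rearrange the deficiency inequality into $|C'| + \tfrac{1}{2}|M'| + \tfrac{1}{2}\,tickets(T') < 1$ to read off $|C'|=0$, $|M'|\le 1$ and the ticket bounds, then bound $|S'| \le 2|M'|$ via the exhausted-greedy-link-contraction hypothesis.

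However, a premise you invoke twice is the opposite of what the Matching Invariant says. You assert that ``under the Matching Invariant every matched node of $T'$ is a non-compound leaf,'' but Invariant~\ref{inv:M} is introduced precisely to \emph{relax} that: it permits matched compound leaves, and this possibility is exactly what is used later in Claim~\ref{c:def4}, where contracting a twin link creates a matched compound leaf $b''$. The formula $coupons(T') = |U'| + |C'| + \tfrac{3}{2}|M'|$ is nevertheless still correct -- a matched compound leaf lies in neither $U'$ nor $C'$ and, being matched, owns no coupon under the Credit Invariant -- but your derivation of it is not. The same slip reappears in your case analysis for $|S'| \le 2|M'|$, where you dismiss the alternative that a twin was ``absorbed into a compound leaf'' by parenthetically asserting compound leaves are always unmatched; under the Matching Invariant that is false, so as written your analysis is missing the case of a matched compound leaf. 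The correct repair is to observe that your compound-node cases are vacuous: every compound node of $T/I$ corresponds to a connected subtree of $T$ with at least two nodes, and a connected subtree containing a twin leaf $a$ but not its parent $s$ (which survives as a node of $T/I$ since $s\in S'$) can only be the singleton $\{a\}$. Hence both twins of any $s\in S'$ are automatically original leaves of $T/I$, and the argument collapses to the paper's one-liner: if both were unmatched, the twin link would give a greedy link contraction, so each stem has an $M$-matched twin, and stem $\mapsto$ matched twin injects $S'$ into the $2|M'|$ matched leaves.
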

\begin{proof}
Note that $coupons(T')=\frac{3}{2}|M'|+|U'|+|C'|$. Hence 
\begin{equation*}
credit(T')-(|M'|+|U'|)=|C'|+\frac{1}{2}|M'|+\frac{1}{2}tickets(T') \ .
\end{equation*}
From this we immediately get that either $|M'|=0$ and $tickets(T') \leq 1$,
or $|M'|=1$ and $tickets(T')=0$.
From the assumption that $T/I$ has no link greedy contraction we get
that every stem has a leaf descendant that is matched by $M$ 
(the link between two unmatched leaf descendants of a stem gives a link greedy contraction); 
this implies $|S'| \leq 2|M'|$.
Thus if $|M'|=0$ then $|S'|=0$, and if $|M'|=1$ then $|S'| \leq 2$.
\end{proof}

\begin{figure}
\centering 
\epsfbox{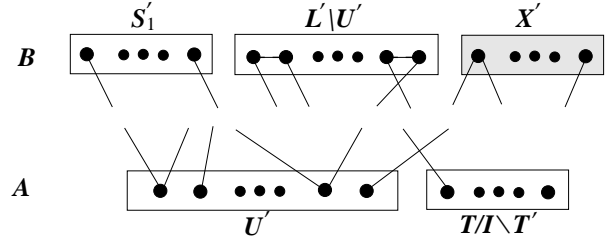}
   \caption{Illustration to the proof of Claim~\ref{c:count}.}
   \label{f:bipartite}
\end{figure}

\begin{claim} \label{c:count}
Let $A=U' \cup (T/I \setminus T')$. Then $|F(A,X')| \geq |U'|+1-2|M'|-|S'_1|$.
Consequently, if $tickets(T')=0$ then $|U'| \leq 2|M'|+|S'_1|-1$.
\end{claim}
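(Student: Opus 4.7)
The plan is to exhibit an explicit family $\Gamma\subseteq F$ of $|U'|+1$ pairwise distinct links, each with one endpoint in $A$ and one endpoint (its ``inner'' endpoint) in $T'\setminus U'$, and then to control how these inner endpoints distribute across the node classes of $T'\setminus U'$.

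Concretely, for each unmatched leaf $u\in U'$ I choose a link $\ell_u\in F$ covering the parent edge of $u$ in $T/I$, and one more link $\ell_v\in F$ covering the parent edge of the root $v$ of $T'$. Since every such $u$ (and $v$) is a leaf of $T/I$, any covering link must have that leaf as one endpoint. Distinctness is the subtle point: $\ell_u=\ell_{u'}$ with $u\ne u'$ would yield a link in $E$ joining two unmatched leaves of $T/I$, contradicting that greedy link contractions have been exhausted; and $\ell_v$ has one endpoint outside $T'$, while every $\ell_u$ has both endpoints in $T'$ by $U'$-closedness. These same ingredients place the inner endpoints in $T'\setminus U'$: for $\ell_u$ the endpoint opposite $u$ lies in $T'$ by $U'$-closedness and avoids $U'$ by greedy-exhaustion, and for $\ell_v$ the endpoint in $T'$ cannot lie in $U'$ by $U'$-closedness.

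Since $T'$ is implicitly deficient throughout Section~\ref{s:char}, Claim~\ref{c:CM} gives $|C'|=0$, so $T'\setminus U'=(L'\setminus U')\cup S'\cup X'$. I count inner endpoints by node class using degree bounds on $F$: by Claim~\ref{c:shm} each matched leaf carries at most one $F$-edge, contributing at most $2|M'|$ inner endpoints in $L'\setminus U'$; and by Claim~\ref{c:twins} a stem $s$ has $d_F(s)\le 1$ with equality only when $s\in S'_1$, contributing at most $|S'_1|$ inner endpoints in $S'$. Hence at least $|U'|+1-2|M'|-|S'_1|$ links in $\Gamma$ have inner endpoint in $X'$, and each such link lies in $F(A,X')$, establishing the first inequality.

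For the consequence, assume $tickets(T')=0$. Then $\sum_{x\in X'}d_J(x)=0$, so $d_J(x)=0$ for every $x\in X'$; by Corollary~\ref{c:a-tick}(i) this forces $d_F(x)=0$ for every $x\in X'$, so $F(A,X')=\emptyset$. Plugging into the first inequality gives $|U'|\le 2|M'|+|S'_1|-1$. The only delicate step is simultaneously establishing distinctness of the $|U'|+1$ chosen links and placing all their inner endpoints in $T'\setminus U'$, and both rest on combining $U'$-closedness of $T'$ with greedy-exhaustion; the remaining counting is routine degree accounting.
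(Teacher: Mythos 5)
Your argument is correct and is essentially the paper's own: the paper also exhibits the $|U'|+1$ links in $F(A,T')$ coming from the $|U'|$ unmatched leaves plus the one covering the parent edge of the root $v$, then bounds the number landing in $L'\setminus U'$ and $S'_1$ by the degree constraints $d_F\le 1$ from Claim~\ref{c:shm}/the Matching Invariant and Claim~\ref{c:twins}, and finishes the ``consequently'' part via Corollary~\ref{c:a-tick}(i). The only presentational difference is that you build an explicit family $\Gamma$ and count its inner endpoints by class, whereas the paper frames the same count as $|F(A,X')|=|F(A,B)|-|F(A,B\setminus X')|$ with $B=S'_1\cup(L'\setminus U')\cup X'$; both rely on the same three exclusions ($U'$-closedness, exhaustion of greedy link contractions, and $d_F(s)=0$ for $s\in S'\setminus S'_1$) and on $|C'|=0$ from Claim~\ref{c:CM}.
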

\begin{proof}
Consider the set $F(A,T')$ of links in $F$ that have one end in $A$ and the other end in $T'$.
Note that:
\begin{itemize}
\item[$\bullet$]
There is no link between nodes in $U'$, since $T/I$ has no link greedy contraction.
\item[$\bullet$]
There is no link from $U'$ to $T/I \setminus T'$, since $T'$ is $U'$-closed.
\item[$\bullet$]
No link in $F$ is incident to a node in $S'\setminus S'_1$, 
by the definition of $S'_1$ and Claim~\ref{c:twins}.
\end{itemize}
Thus $F(A,T')=F(A,B)$ where $B=S'_1 \cup (L' \setminus U') \cup X'$ 
(see Fig.~\ref{f:bipartite}).
Now note that:
\begin{itemize}
\item[$\bullet$]
$|F(U',B)| \geq |U'|$, since $d_F(a) \geq 1$ for every $a \in U'$.
\item[$\bullet$]
$|F(T/I \setminus T',B)| \geq 1$, since some link in $F$ covers the parent edge of the root $v$ of $T'$.
\end{itemize} 
Thus 
$$|F(A,B)| \geq |U'|+1 \ .$$
On the other hand, $d_F(b)=1$ for any node $b \in B \setminus X = S'_1 \cup (L' \setminus U')$;
if $b \in S'_1$ then this is so by the definition of $S'_1$, and 
if $b \in L' \setminus U'$ is a leaf matched by $M$ then this is so by the Matching Invariant.
Consequently, 
$$|F(A,B \setminus X')| \leq |S'_1|+|L'|-|U'|=|S'_1|+2|M'| \ .$$
This implies 
$$
|F(A,X')|=|F(A,B)|-|F(A,B \setminus X')| \geq (|U'|+1)-(|S'_1|+2|M'|) 
$$
as claimed.

If $tickets(T')=0$ then $|F(A,X')|=0$; 
otherwise, there is $x \in X'$ with $d_F(x) \geq 1$, 
and by Corollary~\ref{c:a-tick}(i) $x$ has a ticket.
Thus $0=|F(A,X')| \geq |U'|+1-2|M'|-|S'_1|$, and the second statement follows.
\end{proof}

\begin{claim} \label{c:M=1}
If $|M'|=1$ then $|L'| \leq |S'_1|+3$. 
\end{claim}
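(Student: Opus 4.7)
The plan is to chain together Claim~\ref{c:CM} and Claim~\ref{c:count}, both already proved in this section. Assume $|M'|=1$. By Claim~\ref{c:CM}(ii), this forces $tickets(T')=0$, since otherwise $credit(T')-(|M'|+|U'|) = |C'|+\tfrac{1}{2}|M'|+\tfrac{1}{2}tickets(T') \geq \tfrac{1}{2}+\tfrac{1}{2} = 1$, contradicting deficiency of $T'$.

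With $tickets(T')=0$ established, the second statement of Claim~\ref{c:count} directly yields
\[
|U'| \;\leq\; 2|M'|+|S'_1|-1 \;=\; |S'_1|+1.
\]
Since $L'$ decomposes into the endnodes of $M'$ and the unmatched leaves $U'$, we have $|L'| = 2|M'|+|U'| = 2+|U'|$, and therefore
\[
|L'| \;=\; 2+|U'| \;\leq\; 2+(|S'_1|+1) \;=\; |S'_1|+3,
\]
as required.

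The only subtlety is bookkeeping: we must verify that both invoked claims apply under the current hypotheses. Claim~\ref{c:CM} requires only the Credit Invariant plus the absence of greedy link contractions (both assumed in Lemma~\ref{l:3-lvs}), while Claim~\ref{c:count} additionally uses the Matching Invariant to ensure $d_F(b)=1$ at every matched leaf $b$ of $T/I$ and Assumption~\ref{ass:sh-completion} implicitly through Claim~\ref{c:twins}; all these are in force. I do not expect any genuine obstacle here — this claim is essentially a one-line corollary packaging the inequalities of Claims~\ref{c:CM} and~\ref{c:count}, and the real work of the section is the bound on $|F(A,X')|$ already done in Claim~\ref{c:count}.
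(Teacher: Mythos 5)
Your proof is correct and matches the paper's argument essentially line for line: invoke Claim~\ref{c:CM}(ii) to get $tickets(T')=0$, feed that into Claim~\ref{c:count} to bound $|U'|$, and finish with $|L'|=2|M'|+|U'|$. The extra paragraph re-deriving why $tickets(T')=0$ and double-checking the hypotheses is harmless but redundant, since both facts are already packaged in the cited claims.
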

\begin{proof}
By Claim~\ref{c:CM} $tickets(T')=0$ and thus 
by Claim~\ref{c:count} we must have $|U'| \leq 2|M'|+|S'_1|-1=|S'_1|+1$.
Consequently, $|L'|=|U'|+2|M'| \leq |S'_1|+1+2=|S'_1|+3$, as claimed.
\end{proof}

\begin{figure}
\centering 
\epsfbox{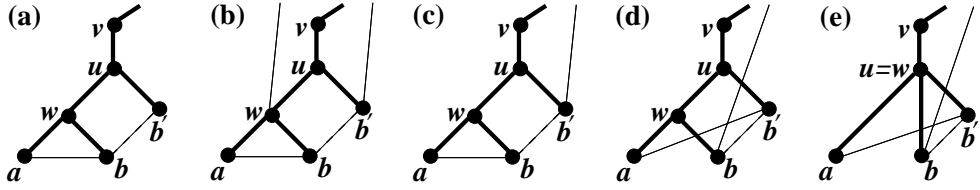}
\caption{Illustration to the proof of Claim~\ref{c:def3}. Links in $F$ are shown by solid thin lines. 
         In (b) $w$ is a stem.}
\label{f:def3}
\end{figure}

Now we prove the last part of Lemma~\ref{l:3-lvs}.

\begin{claim} \label{c:def3}
If $|L'|=3$ and $|M'|=1$, then $T',F'$ are as in Lemma~\ref{l:3-lvs}(i) or $T'$ is dangerous.
\end{claim}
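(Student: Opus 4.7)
The plan is a case analysis on $|S'|$. A quick observation---each stem in $T'$ has two distinct leaves of $T'$ as its twin pair, and distinct stems have disjoint twin pairs---together with $|L'|=3$ immediately yields $|S'|\le 1$. Let $a$ denote the unique unmatched leaf of $T'$ and $\{b,b'\}$ the matched pair.

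If $|S'|=0$, I would re-run the counting from the proof of Claim~\ref{c:count} to pin down the $F$-links incident to $T'$. Since $tickets(T')=0$ and $|S'_1|=0$, the bounds $|F(A,B)|\ge |U'|+1=2$ and $|F(A,B\setminus X')|\le 2|M'|+|S'_1|=2$ force $|F(A,\{b,b'\})|=2$: exactly one link comes from the unmatched leaf $a$ (kept inside $T'$ by $a$-closedness) and one from outside $T'$ (covering the edge above $v$). Since $d_F(b)=d_F(b')=1$, these touch distinct matched leaves. Relabelling so that $ab'\in F$ and $bz\in F$ with $z\notin T'$ witnesses that $T'$ is $b$-open and $ab'\in E$. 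For the remaining clause of Definition~\ref{d:dangerous}(i) I would argue that $|S'|=0$ forces the LCA of $a,b'$ in $T'$ to have $b$ as a leaf descendant (otherwise, using $|C'|=0$ to identify $T'$ with the corresponding $T_v$ in $T$, this LCA would be a stem, contradicting $|S'|=0$), so contracting $ab'$ leaves $b$ strictly below the merged node.

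If $|S'|=1$, let $s$ be the stem with twin leaves $p,q$. The case $\{p,q\}=\{b,b'\}$ is immediately impossible: $bb'$ would be a twin link, so $bb'\in W$, contradicting $bb'\in M\subseteq E(L,L)\setminus W$. Hence $a$ is one of the twins; relabel so that $\{p,q\}=\{a,b\}$, making $ab$ the twin link. Claim~\ref{c:twins} then provides two alternatives. In the first, $ab\in F$ and $sz\in F$ for some $z\notin T_s$; combining $tickets(T')=0$ (which forbids $z\in X'$) with the degree bookkeeping (if $z=b'$ then $d_F(b')$ is consumed and no endpoint in $T'$ remains to cover the edge above $v$) forces $z\notin T'$, and the analogous analysis of $b'$'s unique $F$-link yields $\{ab,sz,b'z'\}\subseteq F$ with $z,z'\notin T'$---precisely Lemma~\ref{l:3-lvs}(i).

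The main obstacle is the second alternative of Claim~\ref{c:twins}, namely $ab\notin F$ with $\{ax,by\}\subseteq F$ and $x,y\notin T_s$. Here the degree/ticket bookkeeping forces $x=b'$ (the only node of $T'\setminus T_s$ outside $X'$, using $a$-closedness to keep $x$ inside $T'$) and then $y\notin T'$, producing $F\supseteq\{ab',by\}$. To rule this out I would exploit $|C'|=0$: $T'$ contains no compound node, so it coincides as a rooted subtree with the corresponding $T_v$ in the original tree $T$. This $T_v$ is a proper rooted subtree of $T$, has $L(T_v)=\{a,b,b'\}$, admits the twin link $ab$, and is $a$-closed by semi-closedness of $T'$; hence Definition~\ref{d:lock} declares $bb'$ to be a locking link of $a$, placing $bb'\in W$---contradicting $bb'\in M$. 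This eliminates the second alternative and completes the case analysis.
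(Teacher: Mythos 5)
Your proof takes a genuinely different route from the paper's, and the $|S'|=0$ branch and the first alternative of the $|S'|=1$ branch (which recovers Lemma~\ref{l:3-lvs}(i)) are both fine. Where the paper splits by the location of the endpoint of the link covering the edge above $v$ (``does some $wz\in F$ with $z\notin T'$ exist?''), you split by $|S'|$ and then by the two alternatives of Claim~\ref{c:twins}. The substantive divergence is the second alternative under $|S'|=1$, which you try to eliminate outright with the locking-link argument; the paper never invokes $M\cap W=\emptyset$ inside this proof --- it deliberately carries the Lemma~\ref{l:3-lvs}(i) ``problematic tree'' forward and only kills it in Section~\ref{ss:char} under the full Partial Solution Invariant, precisely because this claim is also used (in Claim~\ref{c:def4}) in a setting where the Partial Solution Invariant is \emph{not} available.

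This is where the gap is. You assert that $|C'|=0$ means ``$T'$ contains no compound node, so it coincides as a rooted subtree with the corresponding $T_v$ in the original tree $T$.'' That does not follow: $C'$ is the set of \emph{non-leaf} compound nodes, so $|C'|=0$ leaves open the possibility that a leaf of $T'$ is compound. Claim~\ref{c:def3} is proved under the Matching Invariant rather than the Partial Solution Invariant exactly so that it can be invoked in Claim~\ref{c:def4}, where the matched leaf $b''$ obtained by contracting a twin link \emph{is} a compound node. In that situation $T'$ does not coincide with a rooted subtree of $T$ with leaf set $\{a,b,b'\}$; the compound matched leaf corresponds to a larger subtree of $T$ with more than one leaf, so Definition~\ref{d:lock} does not apply and you cannot conclude that $bb'$ is a locking link, hence cannot derive the contradiction $bb'\in W\cap M$. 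The same issue touches the $|S'|=0$ branch less severely (there you only use $ab'\in E$, $T'$ being $b$-open, and absence of stems, none of which needs all leaves to be original), but the $|S'|=1$ second-alternative refutation genuinely relies on the unstated and, under Lemma~\ref{l:3-lvs}'s weaker hypotheses, unavailable fact that all three leaves are original.
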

\begin{proof}
Consider a general deficient $3$-leaf tree as in Fig.~\ref{f:def3}(a).
Since by Claim~\ref{c:CM} $C'=\emptyset$, 
the nodes that are not in $X'$ are $a,b,b'$, and possibly $w$ -- if $w$ is a stem.
Thus for any link $yz \in F$ with $y \in T'$, either $y=w$ is a stem, or $y \in \{a,b,b'\}$;
otherwise, $y$ has an $X$-ticket.
Now let $yz \in F$ be a link that covers the edge between $v$ and its parent, where $y \in T'$.
Note that $y \neq a$, since $T'$ is $a$-closed.
Thus either $y=w$ is a stem, or $y \in \{b,b'\}$.
Let us consider these two cases.

Suppose that there is a link $wz \in F'$ such that $z \notin T'$, so $y=w$ is a stem, see Fig.~\ref{f:def3}(b).
By Claim~\ref{c:twins}, $ab \in F$ and $d_F(w)=1$.
Now consider the (unique, by the Matching Invariant) link $b'z' \in F'$ incident to  $b'$.
We cannot have $z' \in \{a,b,w\}$ since $d_F(a)=d_F(b)=d_F(w)=1$
and we already have links $ab,wz$ in $F'$. 
Thus $z' \notin T'$ and we arrive at the problematic tree depicted in Fig.~\ref{f:def3}(b).

Suppose that $F'$ has no link $wz$ such that $z \notin T'$, so $y \in \{b,b'\}$.
Let $ax \in F'$ be a link that covers $a$.
To avoid an $X$-ticket at $x$, we must have that either $x=w$ is a stem or $x \in \{b,b'\}$.
The former case is not possible, by Claim~\ref{c:twins}. Thus $\{x,y\}=\{b,b'\}$, by the Matching Invariant.
Consequently, either $F'=\{ab,b'z\}$ (see Fig.~\ref{f:def3}(c)), or $F'=\{ab',bz\}$ (see Fig.~\ref{f:def3}(d)).
If $u \neq w$, then the former case $F'=\{ab,b'z\}$ (see Fig.~\ref{f:def3}(c)) is not possible;
this is since there must be a link in $F$ covering the edge between $w$ and its parent, 
but if $ab \in F$ then by Claim~\ref{c:shm} this link cannot be incident to one of $a,b$, 
and hence it gives a ticket. 
Thus $F'=\{ab',bz\}$, and we arrive at the case in Fig.~\ref{f:def3}(d), 
which is the $3$-leaf dangerous tree in Fig.~\ref{f:def}(a).
If $u=w$ (see Fig.~\ref{f:def3}(e) and Fig.~\ref{f:def}(b)), then 
there is no difference in the roles of $b,b'$, and we can have either $F'=\{ab,b'z\}$ or $F'=\{ab',bz\}$, 
obtaining in both cases a $3$-leaf dangerous tree.
\end{proof}

\subsection{Finishing the proof of Lemma~\ref{l:char}} \label{ss:char}

Recall that the Partial Solution Invariant implies that $M$ is a matching on the {\em original leaves} of $T/I$,
namely, that any node of $T/I$ matched by $M$ is an {\em original} leaf of $T$.
This and the fact that $M$ has no locking link, 
implies that the case in Lemma~\ref{l:3-lvs}(i) (see Fig.~\ref{f:bad-trees}(b)) cannot occur, 
as otherwise the leaf $a$ is locked by the link $bb' \in M$.
Thus Lemma~\ref{l:char} for $3$-leaf trees follows from Lemma~\ref{l:3-lvs}.

\begin{claim} \label{c:s'}
If $|M'|=1$ then $|S'_1| \leq 1$ and $|L'| \leq 4$, and if $|L'|=4$ then $|S'_1|=1$.  
\end{claim}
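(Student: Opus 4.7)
The plan is to start from Lemma~\ref{l:3-lvs}, which under the assumption $|M'|=1$ already gives us the bounds $|S'| \leq 2$ and $|L'| \leq |S'_1|+3$. Given these, the only nontrivial thing to establish is that $|S'_1| \leq 1$; the bound $|L'| \leq 4$ and the implication $|L'|=4 \Rightarrow |S'_1|=1$ follow immediately by plugging $|S'_1| \leq 1$ into $|L'| \leq |S'_1|+3$.

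To show $|S'_1| \leq 1$, I would argue by contradiction and suppose $|S'_1|=2$, producing an $N$-ticket in $T'$, which would contradict Claim~\ref{c:CM}(ii) (which forces $tickets(T')=0$ when $|M'|=1$). Concretely: if $s_1, s_2 \in S'_1$ are two distinct stems whose twin links lie in $F$, let $a_i, b_i$ be the twin leaves of $s_i$, with $a_ib_i \in F$. Since no greedy link contraction is available, every stem in $T'$ must have at least one leaf descendant matched by $M$; but with $|M'|=1$ there is only one matched pair $bb' \in M'$, so necessarily $\{b,b'\}$ consists of one twin from each stem, say $b=b_1$ and $b'=b_2$, while $a=a_1$ and $a'=a_2$ are unmatched.

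Now I would claim that the matching edge $bb'$ lies in $N$, which supplies the required ticket. By the Matching Invariant, $d_F(b)=d_F(b')=1$; but the unique $F$-link incident to $b$ must be the twin link $ab$ (and similarly for $b'$, whose unique $F$-link is $a'b'$), since by Claim~\ref{c:twins} the only $F$-links incident to the subtree $T_{s_1}$ besides the twin link itself are incident to $s_1$, not to $b$. Therefore neither $b$ nor $b'$ is matched by $M_F=F(L,L)\setminus W$, and so $bb' \in N$. This gives an $N$-ticket in $T'$, contradicting $tickets(T')=0$, so $|S'_1| \leq 1$.

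The only subtle point (and essentially the only thing that could go wrong) is verifying that the matched pair $bb'$ really has one endnode in each of the two stems' twin sets, rather than both endnodes sitting under a single stem; this is the step that genuinely uses the non-existence of greedy link contractions, which forces every stem in $T'$ to donate a matched leaf to $M'$. After that, the argument is purely bookkeeping: $|S'_1| \leq 1$ combined with $|L'| \leq |S'_1|+3$ from Claim~\ref{c:M=1} finishes both remaining assertions.
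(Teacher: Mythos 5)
Your proposal is correct and follows the paper's own argument essentially step for step: you derive $|S'_1|\leq 1$ by showing that $|S'_1|=2$ forces the unique matching link to connect a twin of each stem, which by the definition of $S'_1$ and Claim~\ref{c:twins} makes both endnodes $M_F$-unmatched, hence the matching link lies in $N$ and yields a ticket contradicting Claim~\ref{c:CM}(ii); you then finish with $|L'|\leq |S'_1|+3$ from Claim~\ref{c:M=1}. Your write-up is somewhat more detailed than the paper's terse version but contains no genuinely new ideas.
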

\begin{proof}
By Claim~\ref{c:CM}(ii) $|S'_1| \leq |S'| \leq 2$.
If $|S'_1|=2$, say $S'_1=\{s,s'\}$, then the link in $M$ connects
a twin of $s$ to a twin of $s'$.
This link has an $N$-ticket, by the definition of $S'_1$ and $N$. 
Thus $|S'_1| \leq 1$.
By Claim~\ref{c:M=1} $|L'| \leq |S'_1|+3$, and the statement follows.
\end{proof}

\begin{claim} \label{c:def4}
If $|M'|=1$ and $|L'|=4$ then $T'$ is dangerous. 
\end{claim}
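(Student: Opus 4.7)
I will pin down the combinatorial structure forced by $|M'|=1$ and $|L'|=4$, then rule out the existence of a second stem, and finally reduce to the 3-leaf case by contracting the twin link of the unique stem. By Claim~\ref{c:s'}, $|S'_1|=1$; let $s$ be this unique stem in $S'_1$, with twin link $a_1a_2 \in F$. The exhaustion of greedy link contractions prevents both $a_1$ and $a_2$ from being unmatched (else $a_1a_2 \in E$ would induce such a contraction), and the no-twin-in-$M$ assumption of Lemma~\ref{l:B'} prevents both from being matched. So exactly one twin, say $a_1$, is $M$-matched to some leaf $c$, and the remaining leaves $a_2$ and $d$ are unmatched. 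Hence $L' = \{a_1,a_2,c,d\}$, $M' = \{a_1c\}$, and $U' = \{a_2,d\}$.

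Next I show $|S'|=1$. Suppose for contradiction $|S'|=2$ with a second stem $s'$ whose twins must be $c$ and $d$; then $s' \notin S'_1$ (by Claim~\ref{c:s'}), so the twin link $cd \in E \setminus F$. Applying Claim~\ref{c:twins} to both $s$ and $s'$, together with the $d$-closure of $T'$ and the degree constraints $d_F(a_1) = d_F(a_2) = d_F(s) = 1$, the $F$-links inside $T'$ must reduce to $\{a_1a_2,\, sd,\, cy_c\}$ for some $y_c \notin T_{s'}$; moreover $y_c$ is forced outside $T'$, otherwise an $X$-ticket appears at some $X'$-node and violates the requirement $tickets(T')=0$ for a deficient $T'$. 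I then claim an $N$-ticket at $a_1c \in M$: $a_1$ is unmatched by $M_F$ because its $F$-link $a_1a_2$ lies in $W$, and $c$ is similarly unmatched by $M_F$ unless $y_c$ is an original leaf with $cy_c \notin W$. In this exceptional sub-case, a careful swap argument using the maximality of the number of twin links in $F$, together with shadow completion, yields a contradiction. Hence $a_1 c \in N$, producing an $N$-ticket that contradicts $tickets(T')=0$; so $|S'|=1$.

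Finally, I contract the twin link $a_1a_2 \in F$ to obtain a tree $\tilde{T}'$ with three leaves $\{\tilde{s}, c, d\}$, where $\tilde{s}$ is a compound leaf inheriting the matching pair with $c$. The Matching Invariant is preserved since $d_F(\tilde{s}) = d_F(s) = 1$ (the unique $F$-link out of $\{a_1, a_2, s\}$ is the external link $sd$ or its analogue). Crucially, $|\tilde{S}'|=0$, because the only stem $s$ has been absorbed into $\tilde{s}$ and $c, d$ are not twins. The contraction loses exactly one coupon (at $a_2$) while the target $|\tilde{M}'|+|\tilde{U}'|+1$ also drops by one, so the deficiency of $T'$ transfers to $\tilde{T}'$. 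Applying Lemma~\ref{l:3-lvs} to the deficient 3-leaf tree $\tilde{T}'$, it is either as in Lemma~\ref{l:3-lvs}(i) or dangerous; the former requires a stem in $\tilde{T}'$ and is excluded by $|\tilde{S}'|=0$, so $\tilde{T}'$ is 3-leaf dangerous. By Definition~\ref{d:dangerous}(ii), $T'$ is 4-leaf dangerous. The main obstacle is the exceptional sub-case in the second paragraph where $y_c$ is an original leaf with $cy_c \notin W$: here the swap/shadow-completion argument that contradicts the maximality of twin links in $F$ must be carried out with care, and this is the most delicate step of the proof.
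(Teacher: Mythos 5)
Your main reduction is the same as the paper's: by Claim~\ref{c:s'} the unique stem $s\in S'_1$ has exactly one $M$-matched twin, you contract its twin link $f\in F$, check that the invariants and the absence of greedy link contractions survive the contraction and that the deficiency transfers, and invoke Lemma~\ref{l:3-lvs} on the resulting $3$-leaf tree. Where you diverge is in how case~(i) of Lemma~\ref{l:3-lvs} and the possibility of a second stem are excluded. The paper handles both \emph{after} the contraction: if $\tilde T'$ were as in Lemma~\ref{l:3-lvs}(i), the twin link of its stem would lie in $F$, forcing $|S'_1|=2$, which Claim~\ref{c:s'} refutes with an $N$-ticket. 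You instead try to establish $|S'|=1$ \emph{before} contracting, your later step ``case~(i) is excluded by $|\tilde S'|=0$'' depends entirely on that, and that is exactly where your proof has a genuine hole.

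Concretely, in the configuration you isolate ($S'=\{s,s'\}$, $s'\notin S'_1$, $F'=\{a_1a_2,\,sd,\,cy_c\}$ with $y_c\notin T'$), you need $a_1c\in N$, which fails precisely when $y_c$ is an original leaf and $cy_c\notin W$. The ``careful swap argument using the maximality of the number of twin links in $F$'' that you defer to does not close this: to bring $cd$ into $F$ you would have to replace $\{sd,\,cy_c\}$ by $\{cd,\,e\}$ for a single link $e$ covering $\left(P(sd)\cup P(cy_c)\right)\setminus T_{s'}$, but that edge set is a spider with three legs meeting at the least common ancestor of $s$ and $s'$ (one towards $s$, one down to $s'$, one up towards $y_c$), so it is contained in no single tree path and no such $e$ exists; the exchange would strictly increase $|F|$, so the maximality of twin links gives no contradiction. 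Thus the ``exceptional sub-case'' is not a loose end to be flagged -- it is the whole content of the missing step, and as written the claim is not proved. If you want a complete argument along the paper's lines, perform the contraction first and let Lemma~\ref{l:3-lvs} together with Claim~\ref{c:s'} dispose of a stem of $\tilde T'$ whose twin link is in $F$, rather than attempting a direct $N$-ticket argument for a second stem outside $S'_1$.
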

\begin{proof}
By Claim~\ref{c:s'} $|S'_1|=1$; namely, 
$T'$ has a unique stem, say $s$, such that its twin link, say $f$, is in $F$.
Since $T/I$ has no greedy link contraction, $|M'|=1$, and since $M$ has no twin link,
exactly one of the twins of $s$ is matched by $M$.
Consider the tree $\tilde{T}=T/(I \cup \{f\})$ and its $3$-leaf subtree $\tilde{T}'$ 
obtained from $T/I$ and $T'$, respectively, by contracting $f$.
The contraction of $f$ creates a new leaf $b''$ that is now matched by $M$, and $b''$ is a leaf of $\tilde{T}'$.
This contraction is paid by the coupon of the unmatched twin of $s$,
and $b''$ does not need a coupon since it is matched by $M$;
hence the Credit Invariant holds for $\tilde{T}$, without overspending the credit.
By Claims~\ref{c:shm} and \ref{c:twins}, $\deg_F(b'')=1$, thus the Matching Invariant holds for $\tilde{T}$ and $M$.
Since $T/I$ has no link greedy contraction, $\tilde{T}$ has no link greedy contraction.
Consequently, the conditions of Lemma~\ref{l:3-lvs} hold for $\tilde{T}$, $M$, and $I \cup \{f\}$. 
Hence $\tilde{T}'$ must be a $3$-leaf tree as in Lemma~\ref{l:3-lvs},
as if $\tilde{T}'$ has a ticket, then so does $T'$.
Now note that $\tilde{T}'$ cannot be a problematic tree as in Fig.~\ref{f:bad-trees}(b), 
since then we will have $|S'_1|=2$, a case refuted in Claim~\ref{c:s'} by existence of an $N$-ticket.
Thus $\tilde{T}'$ is a $3$-leaf dangerous tree, as claimed.
\end{proof}

Note that if $|M'|=1$ and $|L'|=2$ then contracting the link in $M'$ creates a new leaf.
The following claim refutes this case by showing that in this case $T'$ must contain a compound node,
contradicting Claim~\ref{c:CM}.

\begin{figure}
\centering 
\epsfbox{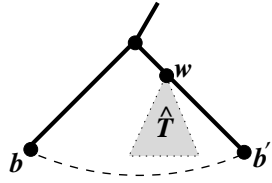}
   \caption{Illustration to the proof of Claim~\ref{c:new-leaf}.}
   \label{f:new-leaf}
\end{figure}

\begin{claim} \label{c:new-leaf}
If in $T/I$ contracting a link $bb' \in M$ creates a new leaf, then 
the path between $b$ and $b'$ in $T/I$ has an internal compound node.
\end{claim}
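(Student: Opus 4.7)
The plan is to exploit the assumption that $M$ contains no twin link, arguing that the only way the path $P_T(b,b')$ can acquire ``twin-link behavior'' in $T/I$ is by having some off-path branch absorbed into a compound node lying on that path. Since $bb'\in M$ and $M$ has no twin links, $bb'$ is not a twin link in $T$, so contracting $P_T(b,b')$ in $T$ does not create a new leaf. Excluding the degenerate case $r\in P_T(b,b')$ (which would also prevent $T/I$ from producing a new leaf upon contracting $bb'$ and so contradict the hypothesis), there must exist an interior node $u$ of $P_T(b,b')$ with some child $c$ of $T$ lying outside $P_T(b,b')$. Moreover, the Partial Solution Invariant ensures that $b$ and $b'$ remain original leaves of $T/I$, so the $2$-edge-connected component of $T\cup I$ containing $b$ (resp.\ $b'$) is a singleton.

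Next I would examine the $2$-edge-connected components of $T\cup I$ containing $u$ and $c$, and let $\hat u,\hat c$ denote the corresponding (compound or original) nodes of $T/I$. If $\hat u=\hat c$, then this single component contains the two distinct original vertices $u$ and $c$, so $\hat u$ is a compound node; since $u$ lies strictly between $b$ and $b'$ on $P_T(b,b')$ and $\hat u\notin\{b,b'\}$, it is an internal compound node of $P_{T/I}(b,b')$, as required. If $\hat u\neq \hat c$, then the tree edge $(u,c)$ of $T$ survives in $T/I$ as an edge between these two distinct nodes; were $\hat c$ off $P_{T/I}(b,b')$, it would remain as a child of the node obtained by contracting $P_{T/I}(b,b')$, contradicting the hypothesis that this contraction creates a new leaf. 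Hence $\hat c$ lies on $P_{T/I}(b,b')$, and since $\hat c$ contains both the off-path vertex $c$ and some vertex of $P_T(b,b')$, it must be compound; being distinct from $b,b'$ (which are original), it is again an internal compound node of the path.

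The only real difficulty is bookkeeping: making sure that every compound node produced by the case analysis actually lies in the interior of $P_{T/I}(b,b')$ and is distinct from $b,b'$, and that the correspondence between edges of $T$ on $P_T(b,b')$ and edges of $T/I$ on $P_{T/I}(b,b')$ is used correctly. This is handled uniformly by the Partial Solution Invariant, which guarantees that matched leaves of $T/I$ are original vertices and therefore cannot be absorbed into any larger $2$-edge-connected component.
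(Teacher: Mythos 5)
Your proof is correct and follows essentially the same approach as the paper: both arguments exploit that $bb'$ is not a twin link in $T$ (so $P_T(b,b')$ has an off-path branch), note that $b,b'$ remain original leaves of $T/I$ by the Partial Solution Invariant, and conclude that the off-path branch must have been merged into a compound node lying on $P_{T/I}(b,b')$, distinct from $b,b'$. Your explicit case split on whether $\hat u=\hat c$ or $\hat u\neq\hat c$ is just a slightly more detailed bookkeeping of the same absorption argument the paper states more tersely.
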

\begin{proof}
By the Partial Solution Invariant $b,b'$ are original leaves.
Note that in the original tree $T$ the contraction of $bb'$ does not create a new leaf, 
since $M$ has no twin link. 
This implies that in $T$, there is a subtree $\hat{T}$ of $T$ (see Fig.~\ref{f:new-leaf}) 
hanging out of a node $w$ on the path between $b$ and $b'$ in $T$. 
This subtree $\hat{T}$ is not present in $T/I$, hence it was contracted into a compound
node during the construction of our partial solution $I$. 
Thus $T/I$ contains a compound node $z$ that contains $\hat{T}$, and since $z$
contains a node $w$ that belongs to the path between $b$ and $b'$ in $T$,
the compound node of $T/I$ that contains $w$ belongs to the path between $b$ and $b'$ in $T/I$,
and it is distinct from $b,b'$ since both $b,b'$ are original leaves.
\end{proof}

To finish the proof of Lemma~\ref{l:char} it remains to refute the case $|M'|=0$.
In this case, the following claim together with Corollary~\ref{c:a-tick}(ii)
will enable us to claim $X$-tickets without worrying about links incident to locked leaves. 

\begin{figure}
\centering 
\epsfbox{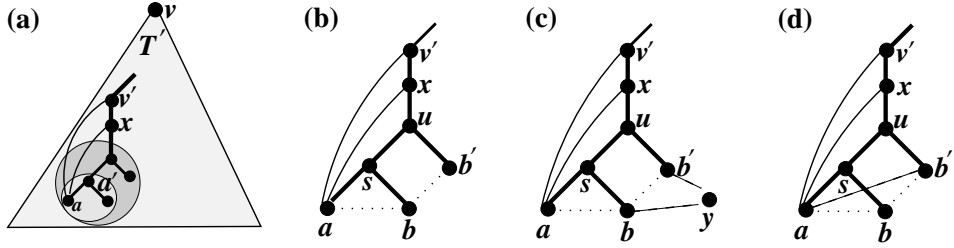}
   \caption{Illustration to the proof of Claim~\ref{c:X-ticket}.}
   \label{f:X-ticket}
\end{figure}

\begin{claim} \label{c:X-ticket}
Suppose that $|M'|=0$, and let $e=a'x \in F$ with $a' \in L'$ and $x \in X'$.
Then the original endnode $a$ of $e$ contained in $a'$ (possibly $a=a'$) is not a locked leaf, 
and thus $e$ contributes a ticket at $x$.
\end{claim}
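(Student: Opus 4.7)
The plan is to argue by contradiction: suppose $a$ is a locked leaf of $T$ with twin $b$, locking link $bb'$, and locking tree $T_v$, and derive $|M'| \geq 1$, contradicting $|M'|=0$. The first step is structural: the original link $ax \in E$ underlying $e=a'x$, combined with $a$-closedness of $T_v$, forces $x \in T_v$. Since $x \in X$ is a non-leaf, $x \notin \{b,b'\}$, so Claim~\ref{c:lock-ok} applies and yields that $x$ is a proper ancestor (in $T$) of $u := \mathrm{LCA}_T(a,b,b')$ and that $bb' \in F$. In particular, $b$ and $b'$ are $T$-descendants of $x$. Because $x$ remains an original node of $T/I$, any contraction that absorbed $b$ or $b'$ must have contracted a rooted subtree of $T/I$ lying strictly below $x$; hence the current $T/I$-container of each of $b,b'$ is a descendant of $x$ in $T/I$ and therefore lies in $T_x \subseteq T'$.

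Next, I argue that one of $b,b'$ must be currently matched by $M$. Granting this, the Partial Solution Invariant --- which ensures that matched leaves of $T/I$ are original leaves --- places the matched one as an original leaf inside $T_x \subseteq T'$, and by $M$-compatibility of the semi-closed tree $T'$ its partner also lies in $T'$, yielding $|M'| \geq 1$. Suppose neither of $b,b'$ is currently matched; then $a$ is also currently unmatched, for any $M$-partner $c$ of $a$ would satisfy $ac \in E \setminus W$ and, by $a$-closedness of $T_v$, $c \in L(T_v) = \{a,b,b'\}$, forcing $c = b'$ (since $c \neq a$ and the twin link $ab$ lies in $W$), contradicting $b'$ unmatched. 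If moreover $a,b,b'$ were all unmatched in the original matching from step 1, then the algorithm's initial phase of exhausting greedy locking tree contractions would have applied the greedy contraction to $T_v$ (or, per the paper's nesting convention when $b$ is also locked, to a locking tree containing $T_v$), absorbing the internal node $x \in T_v$ into a compound, contradicting $x \in X'$. Having shown that $a$ is not a locked leaf, and since $x$ (a non-leaf) is trivially not a locked leaf either, $e \in J$ and thereby contributes an $X$-ticket at $x$.

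The main subtlety I anticipate is that ``currently unmatched'' does not straightforwardly imply ``originally unmatched'': a leaf of $\{b,b'\}$ could have been matched in the original $M$ but subsequently absorbed, together with its partner, by a semi-closed tree contraction. Handling this requires a finer use of the Partial Solution Invariant: any semi-closed tree that absorbed such a matched pair is a rooted subtree of $T/I$ avoiding $x$ (which remains original), so the absorbed subtree lies in $T_x$; combining this with $a$-closedness of $T_v$ and $x \in T_v$, the proof must show that any such absorption pattern would either drag $x$ into a compound (contradicting $x \in X'$) or re-embed a matched pair inside the current $T'$ (also a contradiction via $|M'| \geq 1$).
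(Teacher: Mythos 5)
Your opening steps are sound: $a$-closedness of $T_v$ places $x \in T_v$, Claim~\ref{c:lock-ok} makes $x$ a proper ancestor of $u$, and hence the current containers of $b,b'$ sit inside $T_x \subseteq T'$; if one of $b,b'$ is currently $M$-matched, the Partial Solution Invariant plus $M$-compatibility of $T'$ give $|M'|\geq 1$. You also correctly derive that $a$ is currently unmatched when $b,b'$ are.

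However, your handling of the subtlety you flag is incomplete, and the dichotomy you propose to close it is actually false. The paper's proof splits on the intersection of the \emph{original} matching $M$ with the locking tree $T_{v'}$ (zero, one, or two endnodes), and the hard case is $ab' \in M$. Since $|M'|=0$, the pair $a,b'$ must have been absorbed earlier. Tracking the \emph{first} contraction that touches $T_x$: it cannot be a greedy locking tree contraction (one of $a,b'$ is matched), nor a greedy link contraction inside $T_x$ ($b$ is the only unmatched leaf there), and a crossing greedy link contraction would absorb $x$. So it is a semi-closed tree contraction of some $T''$ rooted strictly below $x$, and $M$-compatibility together with $b$-closedness forces $L(T'')=\{a,b,b'\}$ with $T''$ $b$-closed. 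But then $ab'$ is a \emph{locking link} of $b$, contradicting that $M$ was chosen to exclude locking links. This contraction neither drags $x$ into a compound (the root of $T''$ is a proper descendant of $x$) nor leaves a matched pair inside $T'$ (the pair $ab'$ is now inside a single compound node), so neither horn of your proposed dichotomy ``drag $x$ / re-embed a matched pair'' applies. The missing idea is that the hypothetical absorption structurally manufactures a locking link $ab'$, contradicting the construction of $M$; without it the argument does not close.
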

\begin{proof}
Suppose to the contrary that $a$ is a locked leaf and let $T_{v'}$ be a locking tree of $a$ as 
in Definition~\ref{d:lock} (see Fig.~\ref{f:X-ticket}(a)). 
Consider 3 cases.

Case 1: No link in $M$ had an endnode in $T_{v'}$ (see Fig.~\ref{f:X-ticket}(b)). 
By the Partial Solution Invariant, we initially exhausted locking tree greedy contractions.
This  guarantees that $T_x$ lies in the same compound node $a'$ of $T'$, 
contradicting that $x \in X'$.

Case 2: There was a link in $M$ with exactly one endnode in $T_{v'}$ (see Fig.~\ref{f:X-ticket}(c)). 
This link is not incident to a node of $T'$, hence this link was contracted.
In particular, all nodes on the path between the endnodes of this link,
including $x$, lie in the same compound node of $T'$. This contradicts that $x \in X'$.

Case 3: No link in $M$ had exactly one endnode in $T_{v'}$ but there was a 
link in $M$ with both endnodes in $T_{v'}$ (see Fig.~\ref{f:X-ticket}(d)).
This link must be $ab'$, since initially $M$ had no twin link and no locking link. 
Since $|M'|=0$, the link $ab'$ does not appear in $T'$, hence it was contracted and both $a,b'$
lie in the same compound node of $T'$, which is $a'$.
Consider the first contraction when some node in $T_x$ entered a compound node. 
By the Partial Solution Invariant, this was either a greedy link contraction or a contraction of a semi-closed tree.
If it was a greedy link contraction, then it was between some node in $T_x$ and some node not in $T_x$,
since there is no greedy link contraction within $T_x$. But then also $x$ enters a compound node, 
contradicting that $x \in X \cap T'$.
If a contraction of a semi-closed tree occurred, then its root is a proper descendant of $x$
(since $x$ remains an original node), its leaf set is $\{a,b,b'\}$, and it is $b$-closed.
This implies that $ab'$ locks $b$, contradicting that $M$ has no locking links. 
\end{proof}

\begin{corollary}
If $|M'|=0$ then $tickets(T') \geq |L'|+1 \geq 2$, and thus $T'$ is not deficient.
\end{corollary}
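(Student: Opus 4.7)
The plan is to use Claim~\ref{c:count} to lower-bound $|F(A,X')|$ by $|L'|+1$, then convert each counted link into a distinct ticket via Claim~\ref{c:X-ticket} and Corollary~\ref{c:a-tick}(ii), and finally contradict the cap $tickets(T')\leq 1$ that Claim~\ref{c:CM}(i) places on any deficient tree with $|M'|=0$.

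First the plan is to collect the structural consequences of $|M'|=0$. Every leaf of $T'$ is unmatched, so $U'=L'$. If a stem $s$ were to lie in $T'$, its two twins $a,b$ would both be leaves of $T'$ (since $T'$ is a rooted subtree), both unmatched, and joined by the twin link in $E$ — but this would be a greedy link contraction, contradicting the hypothesis that these are exhausted. Hence $|S'|=|S'_1|=0$. Also $|L'|\geq 1$; otherwise $T'$ is a single root node, which under $|C'|=0$ (forced by Claim~\ref{c:CM}) is impossible. Substituting into Claim~\ref{c:count} with $A=U'\cup(T/I\setminus T')$ yields
\[
|F(A,X')|\;\geq\;|U'|+1-2|M'|-|S'_1|\;=\;|L'|+1.
\]

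Next the plan is to show that each link in $F(A,X')$ belongs to $J$ and contributes a distinct unit to $\sum_{x\in X'}d_J(x)=tickets(T')$. Take any $e=a'x\in F(A,X')$. If $a'\in U'\subseteq L'$, Claim~\ref{c:X-ticket} — tailored exactly to this $|M'|=0$ situation — guarantees that the original endnode of $e$ contained in $a'$ is not a locked leaf, so $e\in J$. If instead $a'\in T/I\setminus T'$, then since $T'$ is a rooted subtree and $x\in T'$ we have $T_x\subseteq T'$ and hence $a'\notin T_x$; Corollary~\ref{c:a-tick}(ii) then places $e$ itself in $J$, contributing a ticket at $x$. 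Distinct links contribute distinct units of $d_J$, so summing gives $tickets(T')\geq|L'|+1\geq 2$, contradicting the bound $tickets(T')\leq 1$ from Claim~\ref{c:CM}(i), whence $T'$ is not deficient.

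The main obstacle is the case $a'\in U'$: a priori $a'$ could be a compound leaf whose original endnode is locked, in which case Claim~\ref{c:X-link} would attribute the ticket to a different link $xz$ and we would risk double-counting. Ruling this out is the content of Claim~\ref{c:X-ticket}, whose proof in turn relies essentially on the Partial Solution Invariant — specifically on the initial exhaustion of greedy locking tree contractions and on the absence of locking links from $M$. This is the one place in the whole argument where the full algorithmic setup is consumed, which is why the $|M'|=0$ case is handled after, rather than alongside, the main Section~\ref{ss:M-inv} case analysis.
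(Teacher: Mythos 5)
Your proof is correct and follows essentially the same route as the paper's: apply Claim~\ref{c:count} with $|M'|=|S'_1|=0$ to get $|F(A,X')|\geq |U'|+1=|L'|+1$, then show each such link yields a ticket---via Corollary~\ref{c:a-tick}(ii) for the endnodes outside $T'$ and via Claim~\ref{c:X-ticket} for the endnodes in $U'$---so $tickets(T')\geq 2$, incompatible with deficiency. You spell out a few details the paper leaves implicit (re-deriving $|S'|=0$ from exhaustion of greedy link contractions, and verifying $a'\notin T_x$ before invoking Corollary~\ref{c:a-tick}(ii)), but the decomposition and the key lemmas invoked are the same.
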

\begin{proof}
Let $A=U' \cup (T/I \setminus T')$ be as in Claim~\ref{c:count}.
Every link in $F(X',A)$ contributes a ticket to $T'$;
for links in $F(X',T/I \setminus T')$ this is so by Corollary~\ref{c:a-tick}(ii), and 
for links in $F(X',U')$ this is so by Claim~\ref{c:X-ticket}.
By Claim~\ref{c:CM} $|S'|=0$, thus by Claim~\ref{c:count} 
$tickets(T') \geq |F(A,X')| \geq |U'|+1=|L'|+1 \geq 2$.
\end{proof}

The proof of Lemma~\ref{l:char} is now complete.

\paragraph*{Acknowledgment:}
We thank Andr\'{e} Linhares, Joseph Cheriyan, Zhihan Gao, Chaitanya Swamy,  
and three anonymous referees for many useful comments,
and Guy Even, Jon Feldman, and Samir Khuller for useful discussions.


\end{document}